\newcommand{\ceil}[1]{\lceil #1 \rceil}
\newcommand{\floor}[1]{\lfloor #1 \rfloor}
\newcommand{\ang}[1]{\langle #1 \rangle}
\newcommand{\set}[1]{\{ #1 \}}
\renewcommand{\epsilon}{\varepsilon}
\newcommand\addleadingzero{\expandafter\ifnum
  \csname theenum\romannumeral\the\@enumdepth\endcsname<10 0\fi}
\newenvironment{pseudocode}[2]{%
\vspace{2mm}\begin{singlespace}
\noindent \textbf{Procedure} {\it #1}(#2)
\begin{compactenum}[\addleadingzero1.]}
{\end{compactenum}
\end{singlespace}\vspace{2mm}}
\newcommand{\comment}[1]{\hspace{2mm}\textbf{//}#1\textbf{//}}
\newcommand{\tab}{\hspace{4mm}}
\newcommand{\stab}{\hspace{1.25mm}}
\title{\vspace{-18mm} \Large  A New Push-Relabel Algorithm for Sparse Networks}
\author{ Rahul Mehta\thanks{This work was conducted while the author was a student at the University of Chicago Lab High School, as well as a summer intern at the Toyota Technological Institute of Chicago. The author is currently a student at Princeton University.} \\ University of Chicago Lab High School  \\ {\tt rahulmehta@uchicago.edu}}
\date{\today}
\begin{document}
\maketitle

\begin{abstract}
In this paper, we present a new push-relabel algorithm for the maximum flow problem on flow networks with 
$n$ vertices and $m$ arcs. Our algorithm computes a maximum flow in $O(mn)$ time on sparse networks where 
$m = O(n)$. To our knowledge, this is the first $O(mn)$ time push-relabel algorithm for the $m = O(n)$ edge 
case; previously, it was known that push-relabel implementations could find a max-flow in $O(mn)$ time when 
$m = \Omega(n^{1+\epsilon})$ (King, et. al., SODA `92). This also matches a recent general flow decomposition-based algorithm 
due to Orlin (STOC `13), which finds a max-flow in $O(mn)$ time on near-sparse networks.

Our main result is improving on the Excess-Scaling algorithm (Ahuja \& Orlin, 1989) by reducing the number 
of nonsaturating pushes to $O(mn)$ across all scaling phases. This is reached by combining Ahuja and 
Orlin's algorithm with Orlin's compact flow networks. A contribution of this paper is demonstrating that the 
compact networks technique can be extended to the push-relabel family of algorithms. We also provide 
evidence that this approach could be a promising avenue towards an $O(mn)$-time algorithm for all edge densities.
\end{abstract}

\section{Introduction}\label{sec:intro}

The maximum flow problem has been studied for several decades in computer science and operations research. One of the most widely researched combinatorial optimization problems of all time, 
many polynomial-time algorithms have been presented since the groundbreaking research of Ford and Fulkerson \cite{ford56}. The max-flow problem has numerous theoretical applications in both computer science 
and operations research, as well as practical applications in transportation, scheduling, and routing problems, and more recently in computer vision. A full discussion of the history of the max-flow problem and its applications can be found
in \cite{ahuja93}. Many efficient algorithms exist, including ones based on augmenting paths, blocking flows, and the push-relabel method.

Recently, several asymptotically fast algorithms have been presented. Together, the results of King, et. al. \cite{king92} and Orlin \cite{orlin13} show that the max-flow problem is solvable in $O(mn)$ time on general networks
with $n$ vertices and $m$ arcs. The former is a derandomized version of an algorithm due to Cheriyan, et. al. \cite{cher90}, which relies on a push/relabel/add-edge approach. This runs in $O(mn)$ time when 
$m = \Omega(n^{1+\epsilon})$, and uses the correspondence between nonsaturating pushes
and a certain combinatorial game to bound the running time. 
The latter is based on the augmenting paths algorithm of Ford and Fulkerson, but leverages a smaller representation of the flow network known as the ``compact
network'' to reduce the time to run a scaling phase, in a manner similar to Goldberg and Rao \cite{gold98}. This algorithm runs in $O(mn)$ time when $m = O(n^{16/15 - \epsilon})$, and in $O(n^2 /\log n)$ time on sparse networks. 
Our algorithm utilizes these 
compact networks to achieve our improved running time. There have also been recent developments utilizing electrical flows and fast algorithms for approximately solving Laplacian systems of equations; 
for more information the reader is referred to \cite{christ11,kelner14,madry13}.

\paragraph*{Our contribution}
We show how to solve the max-flow problem in $O(mn+m^{3/2}\log n)$ time for sparse networks (i.e. when $m = O(n)$). Clearly, since $m = O(n)$, the algorithm runs in $O(mn)$ time.
This extends the known range for which max-flow is solvable in $O(mn)$ time using push-relabel algorithms (King, et. al. \cite{king92} gave
an $O(mn)$-time algorithm when $m = \Omega(n^{1+\epsilon})$.)
This matches the general algorithm of Orlin \cite{orlin13}, which solves the problem on general networks in $O(mn + m^{31/16}\log^2 n)$, and $O(mn)$ time on near-sparse networks. Although we do not
achieve the $O(n^2 / \log n)$ result on sparse networks, we give the first strongly-polynomial algorithm using the push-relabel method for the sparse edge case.
We reach the improved running time by running an Ahuja-Orlin scaling phase on a smaller representation of the flow network, known as the compact network. Through this approach,
we are able to obtain a strongly-polynomial algorithm for bounded-degree networks; that is, when there is some $k \in \mathbb{Z}_{> 0}$ such that for all $u \in V$, $\deg[u] \leq k$. We call
this a $k$-bounded-degree network.
Moreover, using a well-known technique, we can transform a graph $G$ where $m = O(n)$ into
an $(\floor{m/n}+3)$-bounded-degree network; thus, our bounded-degree algorithm can be extended to the more general case of sparse networks.
This transformation is described in Appendix \ref{app:sparse}.

An extension of our results to general networks, which would involve resolving certain
issues relating to nonsaturating pushes, would imply an algorithm that runs in $O(mn)$ time when $m = O(n^{2-\epsilon})$. We believe that 
an algorithm which incorporates one or more of our techniques may be a promising avenue towards
developing an algorithm that works for all edge densities.

\paragraph*{Our approach}
Our algorithm uses a modified version of the push-relabel method \cite{gold88}. We maintain a valid distance labeling $d:V \rightarrow \mathbb{Z}_{\geq 0}$ to estimate the distance of each vertex to the sink, and
push excess flow from higher-labeled vertices to lower-labeled vertices. We relabel (increase the distance label) of vertices to allow more pushes in a series of {\it scaling phases}, similar to those of Ahuja and Orlin \cite{ahuja89}.
During the $i$th phase, characterized by the parameter $\Delta_i$, which provides an upper bound on the individual excess of any vertex, 
we completely discharge $\Delta$-{\it active} vertices (i.e. vertices with an excess $\Delta_i/2 < e(u) \leq \Delta_i$), while maintaining the invariant that no excess rises
above $\Delta_i$. At the end of a phase, we guarantee that each vertex has an excess no greater than $\Delta_{i+1} \leq \Delta_i/2$. This generic technique, the Excess-Scaling algorithm of Ahuja and Orlin \cite{ahuja89}, leads
to an algorithm that runs in $O(mn + n^2\log U)$ time, where $U$ denotes the largest arc capacity. A careful analysis shows that by applying the techniques described in the following paragraph, we reduce the number of
phases to $O(m^{1/2})$ (Lemma \ref{lem:poly_phase}).

We adapt a recent technique of Orlin \cite{orlin13} to reduce the amount of time needed to run a $\Delta$-scaling phase. At the beginning of each phase, we construct a {\it compact network} consisting of a particular subset
of the vertices and arcs of the graph, along with an additional set of {\it pseudoarcs}. These pseudoarcs represent directed paths in the residual network. We include all vertices adjacent to {\it approximately medium} capacity 
arcs (which we call $\Delta$-favorable and $\Delta$-large), as well as all $\Delta$-active vertices. Then we construct $\Delta$-abundant and $\Delta$-small {\it pseudoarcs} (See Section \ref{sec:abundant}), 
which represent directed paths consisting of possibly many high-capacity (abundant) arcs, and low-capacity (small) arcs as a {\it single} arc in the compact network. 
When a path is found in the residual network, a pseudoarc to the terminal vertex is created with the capacity of the {\it bottleneck} arc on the
path (that is, the arc of minimal capacity). Then, the capacity of each arc on the path is decreased by the bottleneck capacity, and the process is repeated (this is called ``capacity transfer'' in Orlin \cite{orlin13}). We perform a slightly modified
version of the push-relabel algorithm on the compact network with the above guarantees.

With a careful analysis of the number of phases that specific arcs and vertices are included in the compact network, we arrive at a main result of our paper, which is that there are $O(m)$ vertices across all scaling phases
in the compact network (Theorem \ref{thm:om_comp}). Using a similar analysis, we limit the number of active vertices at $O(m)$ as well. Using observations about the behavior and size of saturating and 
nonsaturating pushes, we use several well-known potential functions from \cite{ahuja89,gold88} to reach the desired bound of $O(mn)$ nonsaturating pushes across all scaling phases.

\paragraph*{Major Results}
This paper negotiates several key technical difficulties associated with performing push-relabel on compact networks. First, the residual network must be updated after each capacity transfer. This is possible through the use of the
dynamic trees data structure of Sleator and Tarjan \cite{sleat83}, and is detailed in Appendix \ref{app:tc}. Second, we must guarantee that pushes along pseudoarcs do not violate the capacity constraint for any arc in the
original residual network. This does not trivially transfer to the generic push-relabel
method, since pushes along internal arcs within pseudoarcs may not be ``admissible'' in the traditional sense (see Section \ref{sec:disc}). We relax the Goldberg-Taran admissibility criterion from $d(u) = d(v) + 1$ to simply $d(u) > d(v)$, intuitively maintaining
the notion that ``flow must go downhill,'' and relabel the network at the end of each phase, using a technique due to Goldberg, et. al. \cite{gold97,gold97-2}. We define the {\it validity condition} for any arbitrary distance label $\text{dist}:V \rightarrow \mathbb{Z}_{\geq 0}$ as,
for all arcs $(u,v)$, $\text{dist}(u) \leq \text{dist}(v) + 1$. Since distance labels can increase due to both {\it low-capacity} nonsaturating pushes (those that send $\delta \leq \Delta/2$) {\it and} relabel operations, we maintain two distance labels 
$d_h(\cdot)$ and $d_\ell(\cdot)$ to keep track of the total. For sake of clarity, we let $d(u) = d_h(u) + d_\ell(u)$ denote the 
{\it overall} distance of $u$. Relabeling the entire network in the manner described above ensures that the overall distance $d$ obeys the validity condition at the end of a $\Delta$-scaling phase. 
We also maintain the invariant that throughout the execution of a scaling phase, $d_h$ obeys the validity condition as well; this, in turn, guarantees the correctness of our algorithm
(Theorem \ref{thm:valid_label}). $d_\ell$ does not necessarily obey the validity condition. However, it is relabeled only under specific circumstances, and thus is used to show that there are 
$O(mn)$ low-capacity nonsaturating pushes across all scaling phases (Lemma \ref{lem:small_nonsat_om}).

What is left to show is that at the end of every phase, we fulfill our promise, namely, that every vertex $u \in V$ that began the phase with excess $e(u) > \Delta/2$ is completely discharged (i.e. $e(u) = 0$),
and the excesses of the remaining vertices have fallen
below $\Delta_i/2$ at the termination of the $i$th scaling phase. In order to prove this, we show that pseudoarc pushes that send flow from some $u$ to $v$ such that $d(u) > d(v)$ allow the potential function $\Phi_g =
\sum_{u:e(u) > 0} d(u)$ to behave as it would in the generic push-relabel algorithm if we permit pushes along directed paths (this is formalized in Section \ref{sec:valid_push_comp}). Finally, we need to show that our novel
scheme for counting low-capacity nonsaturating pushes (ones that send $\delta \leq \Delta/2$)  correctly discharges active vertices (Lemma \ref{lem:act_correct}).

The claimed running time of $O(mn + m^{3/2}\log n)$ (Theorem \ref{thm:strong_rtime}) follows from the following main bounds;  \\
\begin{enumerate}[(1)]
\item At the end of every $\Delta$-scaling phase, {\it global-relabel} generates a valid labeling in $O(m)$ time. Across the $O(m^{1/2})$ scaling phases, the overall cost is $O(m^{3/2})$ (Theorem \ref{thm:valid_label});

\item There are $O(mn)$ nonsaturating pushes across all $\Delta$-scaling phases (Theorem \ref{thm:nonsat_bound_all});

\item Both the constructing the compact network $G_C$ and transforming it back into the residual network $G_f$ takes at most $O(m \log n)$ per scaling phase. The cost across all $\Delta$-scaling phases is $O(m^{3/2}\log n)$
(Theorem \ref{thm:strong_rtime}).
\end{enumerate}


\section{Preliminaries}
\subsection{Definitions}
A {\it flow network} is a directed graph $G=(V,A)$, with $|V| = n$ vertices and $|A| = m$ arcs. There are two distinguished vertices, namely the {\it source} $s$ and the {\it sink} $t$. Further, each pair $(u,v) \in V \times V$
has a non-negative, integer-valued capacity $c: V \times V \rightarrow \mathbb{Z}_{\geq 0}$. If $(u,v) \not\in A$, then $c(u,v) = 0$. The largest arc capacity is denoted by $U$.

A {\it flow} is an integer-valued function $f: V \times V \rightarrow \mathbb{Z}_{\geq 0}$ satisfying the {\it capacity} and {\it conservation} constraints, namely that (1) for all $(u,v) \in A$, $f(u,v) \leq c(u,v)$, and (2) for each
$u \in V \backslash \set{s,t}$,
	$$\sum_{v \in V} f(v,u) - \sum_{v \in V} f(u,v) = 0.$$
That is, the amount of flow entering each vertex $u$ is the same as the amount exiting $u$.
We denote the {\it value} or {\it magnitude} of the flow as the amount leaving the source (or
equivalently, entering the sink). That is,
	$$|f| = \sum_{u \in V} f(s,u) = \sum_{u \in V} f(u,t).$$
We say that an arc $(u,v) \in A$ is {\it saturated} if $f(u,v) = c(u,v)$
	
The {\it maximum flow problem} is to find a flow $f$ of maximum value. That is, we must find a function $f:V \times V \rightarrow \mathbb{Z}_{\geq 0}$ that obeys both the capacity and conservation
constraints and maximizes the flow into the sink $t$.
	
Next, we define an important concept in designing algorithms for the max-flow problem. The {\it residual capacity} with respect to a flow $f$ is an integer-valued function $r: V \times V \rightarrow \mathbb{Z}_{\geq 0}$ that is defined as
	\begin{equation*}
		r(u,v) = \begin{cases} 
		c(u,v) - f(u,v) & \text{ if } (u,v) \in A \\
		f(u,v) & \text{ if } (v,u) \in A \\
		0 & \text{ otherwise.}
	\end{cases}
	\end{equation*}
Intuitively, this allows us to express how much {\it more} flow can be sent on any given arc. We can assume without loss of generality that for each arc $(u,v) \in A$, the flow is nonnegative
for at least $(u,v)$ or $(v,u)$. Let the {\it residual network} be the
flow network $G_f=(V,A_f)$ consisting of pairs $(u,v) \in V \times V$ with nonzero residual capacity.

Next, we define several terms relating to cuts in the flow network. An {\it $s-t$ path} is any simple, directed path $\ang{s,v_1,v_2,\ldots,v_k,t}$ in the residual network from the source $s$ to the sink $t$
An {\it $(S,T)$-cut} is a bipartition of the vertex set $V$, such that $s \in S$ and $t \in T$, and $T = \bar{S}$. The {\it capacity} of the cut, denoted by $c(S,T)$, is equal to the capacity
of the arcs crossing the cut from $S$ to $T$.

We now state two classical results of Ford and Fulkerson \cite{ford56} that will be used in the analysis of the algorithm. The following is a direct consequence of the Max-Flow Min-Cut Theorem;
\newtheorem{lem}{Lemma}[section]
\begin{lem}\label{thm:maxflow} A flow $f$ is a maximum flow if and only if there does not exist an $s-t$ path in the residual network $G_f$. \end{lem}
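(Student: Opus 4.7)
The plan is to prove both directions of the biconditional separately, using the standard argument underlying the Max-Flow Min-Cut Theorem. The forward direction is cleaner via its contrapositive: assume there exists an $s$-$t$ path $P = \langle s, v_1, \ldots, v_k, t\rangle$ in $G_f$. Let $\delta = \min_{(u,v) \in P} r(u,v) > 0$, which is strictly positive by definition of $G_f$. Augmenting $f$ by $\delta$ units along $P$ (i.e. increasing $f(u,v)$ by $\delta$ on forward arcs of $P$ and decreasing $f(v,u)$ by $\delta$ on reverse arcs) yields a new flow $f'$ that satisfies the capacity and conservation constraints and has $|f'| = |f| + \delta > |f|$. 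Hence $f$ is not maximum.

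For the reverse direction, suppose no $s$-$t$ path exists in $G_f$. I would construct a cut from the reachability structure of $G_f$. Let $S$ be the set of vertices reachable from $s$ in $G_f$ (including $s$ itself) and set $T = V \setminus S$; since there is no $s$-$t$ path, $t \in T$, so $(S,T)$ is a valid $s$-$t$ cut. The key observation is that for every arc $(u,v) \in A$ with $u \in S$ and $v \in T$, we must have $f(u,v) = c(u,v)$: otherwise $r(u,v) > 0$ would put $(u,v) \in A_f$, contradicting $v \notin S$. Symmetrically, for every arc $(v,u) \in A$ with $v \in T$ and $u \in S$, we must have $f(v,u) = 0$: otherwise $r(u,v) = f(v,u) > 0$ would give a residual arc from $u$ to $v$, again contradicting $v \notin S$.

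Summing the conservation equations over $S$ and using these two facts, the net flow across $(S,T)$ equals exactly
\[
|f| = \sum_{(u,v) \in A,\, u \in S,\, v \in T} f(u,v) \;-\; \sum_{(v,u) \in A,\, v \in T,\, u \in S} f(v,u) = c(S,T).
\]
Combining this with the weak duality bound $|f| \leq c(S',T')$ for every $s$-$t$ cut $(S',T')$ (which follows from summing capacity constraints across any cut), we conclude that $f$ attains the minimum cut capacity and is therefore a maximum flow.

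There is no real obstacle here; the only subtlety is the bookkeeping in the cut construction to ensure both forward-saturated and reverse-empty conditions are handled cleanly so that $|f| = c(S,T)$ holds with equality rather than inequality. Since this is a classical result, I would state it briefly and cite Ford--Fulkerson~\cite{ford56} rather than reproduce the full proof in the paper.
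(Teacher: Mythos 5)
Your proof sketch is the standard, correct augmenting-path/min-cut argument, and your closing recommendation to simply cite Ford--Fulkerson rather than reproduce it is exactly what the paper does: it states the lemma as ``a direct consequence of the Max-Flow Min-Cut Theorem,'' attributes it to \cite{ford56}, and gives no proof. So you match the paper's approach.
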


\begin{lem}[Flow Decomposition Theorem]\label{thm:decomp} Any $s-t$ flow $f$ can be decomposed into flows $f_1 \ldots f_k$ on paths $p_1,\ldots,p_k$ in the flow network $G$. Moreover,
given a feasible flow $f$ and a maximum flow $f_{max}$, the flow $f' = f_{max} -f$ can be decomposed into flows  $f'_1 \cdots f'_{k'}$ on paths $p'_1,\ldots,p'_{k'}$ in the network $G_f$.\end{lem}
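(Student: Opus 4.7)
The plan is to prove both parts by iteratively peeling off $s$-$t$ paths (and possibly cycles) from the given flow, using the fact that each extraction strictly shrinks the arc-support.

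For the first part, starting from a nonzero flow $f$, I would greedily trace a directed walk of arcs carrying positive flow. Because $|f| > 0$ some arc out of $s$ is positive; at any intermediate vertex $v \neq t$, conservation guarantees that if some incoming arc has positive flow then some outgoing arc does as well, so the walk extends. Since $V$ is finite, the walk either reaches $t$ (giving a path $p$) or first revisits a vertex (giving a cycle $C$). Let $\delta$ be the minimum of $f$ over the arcs of the resulting path or cycle, and let $f_i$ be $\delta$ units of flow routed along it; replace $f$ by $f - f_i$. At least one arc drops out of the support, so after at most $m$ iterations the residual flow is identically $0$. The path pieces $f_i$ account for all of $|f|$, while any cycle pieces contribute nothing to $|f|$ but can be listed separately (or silently absorbed, consistent with the statement as written).

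For the second part, the key is to interpret $f' = f_{\max} - f$ as a feasible flow in the residual network $G_f$. For each ordered pair $(u,v) \in V \times V$ define
$$g(u,v) = \max\!\bigl(0, f_{\max}(u,v) - f(u,v)\bigr) + \max\!\bigl(0, f(v,u) - f_{\max}(v,u)\bigr),$$
where the first summand is bounded by the forward residual $c(u,v) - f(u,v)$ and the second by the backward residual $f(v,u)$ arising from cancelling flow sent by $f$. A direct check shows $g \leq r$ pointwise, so $g$ respects capacities in $G_f$. Subtracting conservation of $f$ from conservation of $f_{\max}$ at any $u \in V \setminus \set{s,t}$ shows that $g$ is conserved at $u$, and $|g| = |f_{\max}| - |f|$. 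Applying the first part to $g$ inside $G_f$ yields the desired decomposition of $f'$ into path flows $f'_1,\ldots,f'_{k'}$ in $G_f$.

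The main obstacle is the bookkeeping in the second part, namely verifying that the pointwise difference $f_{\max} - f$, once resolved into a nonnegative function on $A_f$ by splitting into forward-residual and backward-residual contributions as above, really obeys capacity in $G_f$ and is conserved at every internal vertex; antiparallel cancellations must be handled with care. Once $g$ is in hand, the support-shrinking argument from the first part applies verbatim.
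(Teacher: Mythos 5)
Your proposal is correct and self-contained, but note that the paper does not supply a proof of this lemma at all: it is stated as a classical result of Ford and Fulkerson, with a citation, and used as a black box. So there is no paper proof to compare against; your argument is the standard textbook derivation.

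A few small remarks on the argument itself. The first-part peeling argument is exactly right, and you correctly flag that the walk may close a cycle rather than reach $t$. The paper's statement is slightly imprecise in speaking only of paths; the classical theorem yields at most $m$ paths and cycles, and one either drops the cycle components (leaving a flow of the same value) or assumes acyclicity. Your "listed separately or silently absorbed" parenthetical acknowledges this, which is the right instinct.

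For the second part, your definition
\[
g(u,v) = \max\bigl(0, f_{\max}(u,v) - f(u,v)\bigr) + \max\bigl(0, f(v,u) - f_{\max}(v,u)\bigr)
\]
is the correct way to resolve $f_{\max} - f$ into a nonnegative function supported on $A_f$, and the capacity check $g \le r$ and the conservation check (subtracting the two divergence identities) both go through as you sketch. The one place that genuinely requires care, which you also flag, is antiparallel arcs: the paper's piecewise definition of $r$ is ambiguous when both $(u,v)$ and $(v,u)$ lie in $A$, and the paper's "without loss of generality" normalization is what makes the bookkeeping clean. Spelling that case out would make the proof airtight, but the essential ideas are all present and correct.
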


\subsection{The Push-Relabel Algorithm and Scaling}\label{app:prelim}

We must first introduce the notion of a {\it preflow}, and then can proceed to outline the generic push-relabel method of Goldberg 
and Tarjan \cite{gold88}, as well as Ahuja and Orlin's approach.

A {\it preflow} is a function $f:V \times V \rightarrow \mathbb{Z}_{\geq 0}$ that obeys the {\it capacity} constraint, but relaxes the {\it conservation} constraint; that is,
	$$ e(u) = \sum_{v \in V} f(v,u) - \sum_{v \in V} f(u,v) \geq 0, \text{ for all } u \in V \backslash \set{s,t}. $$
Intuitively, vertices may {\it overflow}. We define this discrepancy as the {\it excess}, denoted by the function $e:V \rightarrow \mathbb{Z}_{\geq 0}$.

A {\it distance labeling} is a function $d:V \rightarrow \mathbb{Z}_{\geq 0}$ that associates each vertex with a positive integer. Goldberg and Tarjan define $d$ to be a {\it valid} label if $d(s) = n$, $d(t) = 0$, and for all
$(u,v) \in A_f$, $d(u) \leq d(v) + 1$. The distance label allows us to push flow ``downhill.'' More formally, flow is sent along {\it admissible} arcs; 
in Goldberg and Tarjan's approach these are the arcs  $(u,v) \in A_f$ where $d(u) = d(v) + 1$. 

We generalize this notion and redefine {\it admissible} as simply $d(u) > d(v)$. In Section \ref{sec:valid_push_comp}, we show that one can still prove the correctness of the push-relabel algorithm using this definition of admissible arc, and increase the running time by at most a multiplicative constant. In addition, we apply the Goldberg-Tarjan notion of validity only to $d_h$; that is, $d_h$ is valid if, for all arcs $(u,v)$, $d_h(u) \leq d_h(v) + 1$. We do not require such criterion for the labeling $d_\ell$, and regard it as a
``restricted" distance label, of sorts; we only increment $d_\ell$ under only very specific circumstances, as outlined in Section \ref{sec:label}. Below, we formalize what we mean when we refer to a label being {\it valid};

\newtheorem{cond}{Condition}
\begin{cond}[Validity Condition]
A distance label $\text{\upshape dist}:V \rightarrow \mathbb{Z}_{\geq 0}$ satisfies the validity condition if, for all arcs $(u,v)$, $\text{\upshape dist}(u) \leq \text{\upshape dist}(v)+1$.

\end{cond}

The {\it push-relabel} family of algorithms, introduced by Goldberg and Tarjan \cite{gold88} maintains a preflow $f$, and iteratively ``pushes'' flow from some overflowing vertex 
$u$ to $v$ along admissible arcs. A vertex is ``relabeled'' when it is not incident to any admissible arcs.

Ahuja and Orlin \cite{ahuja89} modified the push-relabel algorithm to use a capacity scaling approach. Their algorithm (as well as ours) redefines an {\it active} vertex
to be any $u \in V$ where $\Delta/2 < e(u) \leq \Delta$, where $\Delta$, the  {\it excess dominator}, is an upper bound on the excess. By sending flow from $u = \min\set{d(w) | \Delta/2 < e(w) \leq \Delta}$
(the active vertex with the smallest label), we can guarantee that flow is sent to a vertex $v$ such that $e(v) \leq \Delta/2$ (Lemma \ref{lem:disc1_cond})

The Ahuja-Orlin algorithm performs a series of $O(\log U)$ {\it scaling phases} to find a max-flow. It maintains the invariant that, at the conclusion of a scaling phase, the excesses of {\it all} vertices fall below $\Delta/2$. This discharging scheme
takes $O(n^2)$ per scaling phase, yielding an algorithm which takes $O(mn + n^2 \log U)$ time.
Under the assumption that $U = \text{poly}(n)$, their algorithm achieves an $O(mn)$ running time on networks that are non-sparse and non-dense (i.e. where $m = \Theta(n^{1+\epsilon})$).
Our main improvement to the Ahuja-Orlin algorithm is reducing the number of nonsaturating pushes to $O(mn)$ across
all scaling phases, which we succeed in proving for bounded-degree networks. This is detailed in Section \ref{sec:time}.
We conclude this section by stating two results of Goldberg and Tarjan \cite{gold88}. The first is used to prove correctness, while the second is a tool in deriving the time bound. The following lemma is reworded in terms of our terminology.

\begin{lem}[Lemma 3.3 of \cite{gold88}]\label{lem:valid_label} If $f$ is a preflow and $d$ is a distance label obeying the validity condition, then there is no $s-t$ path in the residual network $G_f$. \end{lem}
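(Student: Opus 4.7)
The plan is to argue by contradiction. Suppose, for the sake of contradiction, that there exists an $s$-$t$ path in the residual network $G_f$; since such a path can be taken to be simple, write it as $s = v_0, v_1, \ldots, v_k = t$ with $k \leq n-1$ and each $(v_i, v_{i+1}) \in A_f$. The idea is to use the validity condition to telescope the distance labels along this path and derive a contradiction with the boundary values $d(s) = n$ and $d(t) = 0$ that are part of the definition of a valid labeling.

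Concretely, the validity condition gives $d(v_i) \leq d(v_{i+1}) + 1$ for each arc $(v_i, v_{i+1})$ on the path. Chaining these inequalities yields
\begin{equation*}
d(s) = d(v_0) \leq d(v_1) + 1 \leq d(v_2) + 2 \leq \cdots \leq d(v_k) + k = d(t) + k.
\end{equation*}
Substituting $d(s) = n$ and $d(t) = 0$, one obtains $n \leq k$. Since the path is simple and has at most $n$ vertices, $k \leq n - 1$, which is the desired contradiction. Hence no $s$-$t$ path can exist in $G_f$.

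I do not anticipate a serious obstacle here; the argument is essentially a telescoping sum along the hypothesized path, and the only subtlety is ensuring that the validity condition applies to \emph{residual} arcs (which is how Goldberg and Tarjan define it), so that it can be invoked at each step of a path in $G_f$. The boundary conditions $d(s) = n$ and $d(t) = 0$ are crucial, as they are what convert the bound on path length into a contradiction; without them the inequality $d(s) \leq d(t) + k$ would be vacuous.
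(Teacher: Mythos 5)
Your proof is correct and is essentially the classical argument from Goldberg and Tarjan that the paper cites without reproducing: assume a simple $s$-$t$ path of length $k \leq n-1$ in $G_f$, telescope the validity inequality $d(v_i) \leq d(v_{i+1}) + 1$ along it to get $n = d(s) \leq d(t) + k = k$, and contradict $k \leq n-1$. You were also right to flag the two subtleties — that the validity condition must be read as applying to residual arcs (as Goldberg--Tarjan define it, and as the paper's earlier prose ``for all $(u,v) \in A_f$'' makes clear, even though the paper's standalone Condition~1 omits both the restriction to $A_f$ and the boundary values $d(s)=n$, $d(t)=0$), and that those boundary values are what make the telescoped inequality non-vacuous.
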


It is important to note later on that Lemma \ref{lem:valid_label} will not be affected by our change in the admissibility criterion to $d(u) > d(v)$. Since the label $d_h$, which is defined and elaborated on in Section \ref{sec:label},
is relabeled according to the same procedure in the generic push-relabel algorithm, the proof is not altered.

\begin{lem}[Lemma 3.9 of \cite{gold88}]\label{lem:sat_push} The number of saturating pushes is at most $\ell m$, where $d(u) < \ell$, for all $u \in V$. \end{lem}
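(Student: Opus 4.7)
The plan is to reproduce the classical charging argument of Goldberg and Tarjan on a per-arc basis, and then verify that our relaxation of the admissibility criterion from $d(u)=d(v)+1$ to $d(u)>d(v)$ does not weaken the bound. Fix any pair $(u,v) \in A_f$. A saturating push from $u$ to $v$ requires admissibility $d(u) > d(v)$; say $d(u) = a$ and $d(v) = b$ with $a \geq b+1$. Immediately after such a push, $r(u,v)=0$, so before another push along $(u,v)$ can occur, a push along $(v,u)$ must first take place to restore residual capacity.

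The core step is to show that between two consecutive saturating pushes from $u$ to $v$, the label $d(u)$ must grow by at least $2$. Since distance labels never decrease, the intervening push along $(v,u)$ requires $d(v) > d(u) \geq a$, hence $d(v) \geq a+1 \geq b+2$. The subsequent saturating push along $(u,v)$ then requires $d(u) > d(v) \geq b+2$, giving $d(u) \geq b+3 \geq a+2$. Iterating this observation, and using the hypothesis that $d(u) < \ell$ holds throughout, the number of saturating pushes from $u$ to $v$ over the entire execution is at most $\ell/2$. Summing over both orientations of each of the $m$ arcs yields the claimed total of at most $\ell m$ saturating pushes.

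The main subtlety I would check carefully is precisely this ``gap of $2$'' step under the relaxed admissibility criterion: since the rule $d(u) > d(v)$ admits any strict increase rather than an exact unit increase, one might worry that labels could ratchet up by only $1$ between successive saturating pushes on an arc, which would only give an $\ell m$-style bound weakened by a constant factor. The argument above shows this cannot happen, because the strict inequalities at consecutive saturating pushes along a given arc point in opposite directions, and monotonicity of $d$ then forces the combined jump to be at least $2$. Once this is confirmed, the remainder is standard bookkeeping and relies on no structure beyond the definition of a saturating push and the validity of $d_h$ established elsewhere in the paper.
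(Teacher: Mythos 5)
Your overall approach is sound and matches what the paper intends: the paper does not reprove this lemma but cites Goldberg--Tarjan and asserts that the argument carries over under the relaxed admissibility criterion, which is exactly what you verify. The key observation you highlight in the final paragraph --- that the two strict inequalities $d(v) > d(u)$ (for the intervening reverse push) and $d(u) > d(v)$ (for the next saturating push) point in opposite directions, and monotonicity of $d$ forces a jump of at least $2$ in $d(u)$ --- is the right one, and it does rescue the classical per-arc charging.

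One line of your inequality chain is wrong as written, though the conclusion it was meant to support is correct. You derive $d(v) \geq a+1$ and then weaken this to $d(v) \geq b+2$ before concluding $d(u) \geq b+3 \geq a+2$. But $b+3 \geq a+2$ is equivalent to $a \leq b+1$, which together with your standing assumption $a \geq b+1$ forces $a = b+1$; when the first saturating push had $d(u) - d(v) \geq 2$ (perfectly possible under the relaxed criterion $d(u) > d(v)$), the chain $b+3 \geq a+2$ fails and you have not shown a gap of $2$ in $d(u)$. The fix is simply to not detour through $b$: from $d(v) \geq a+1$ and the next push's requirement $d(u) > d(v)$, you get $d(u) \geq a+2$ directly. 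With that repair the per-arc count of at most $\ell/2$ saturating pushes, and hence the total of $\ell m$ over the $2m$ orientations, goes through as you state.
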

In the original push-relabel algorithm of Goldberg and Tarjan \cite{gold88}, $\ell \le 2n$. In Section \ref{sec:label}, we will prove that $\ell \le 6n$, and that consequently, our algorithm uses at most $6mn$ saturating pushes.
Again, the proof of Lemma \ref{lem:valid_label} carries over from our result in Section \ref{sec:valid_push_comp}, which shows that the potential function of the generic push-relabel algorithm behaves similarly
with the new admissibility criterion.
%
%

\section{Abundant Arcs}\label{sec:abundant}

In order to reduce the running time of each scaling phase, we classify the set of arcs $A$ into several categories based on their residual capacity. This allows us to disregard arcs that are {\it too small} or
{\it too big} to be of use to us when pushing flow from active vertices. 

We let the {\it compaction capacity} of an arc $(u,v) \in A_f$ be defined as $\gamma(u,v) = r(u,v) + r(v,u)$, where $r(u,v)$ is the residual capacity of $(u,v)$. That is, $\gamma(u,v)$ is the residual capacity {\it between} two vertices.
Recall that $\Delta$ denotes the {\it excess dominator}, which provides an upper bound on the excess during each scaling phase, and that $\Delta_i$ denotes the excess dominator for the $i$th scaling phase. 
We now use the compaction capacity
to describe several categories of arcs. An arc $(u,v)$ is {\it $\Delta$-favorable} if $\Delta/4 < \gamma(u,v) \leq \Delta/2$. It is {\it $\Delta$-large} if $\Delta/4 \leq \gamma(u,v) \leq 2\Delta$, and $r(u,v) \leq \Delta$.
It is {\it $\Delta$-abundant} if $r(u,v) > \Delta$.

The next lemma follows from our improvement property, namely $\Delta_{i+1} \leq \Delta_i/2$ 
\begin{lem}\label{lem:stay_abund} If an arc $(u,v)$ is $\Delta_{i}$-abundant during the $i$th scaling phase, then it will be $\Delta_{i+1}$-abundant, and consequently
$\Delta'$-abundant for every subsequent $\Delta'$-scaling phase. \end{lem}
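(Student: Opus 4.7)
The plan is to derive the result directly from the improvement property $\Delta_{i+1} \leq \Delta_i/2$, combined with the observation that the compaction capacity $\gamma(u,v) = r(u,v) + r(v,u)$ is invariant under every flow update. Pushes on arcs incident to neither $u$ nor $v$ leave both $r(u,v)$ and $r(v,u)$ untouched, and any push of $\delta$ units between $u$ and $v$ simply transfers $\delta$ units from one direction to the other, so the sum $r(u,v) + r(v,u)$ is preserved. Hence $\gamma(u,v)$ equals its initial value throughout the entire algorithm.

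Given this invariance, I would argue as follows. The $\Delta_i$-abundance hypothesis gives $r(u,v) > \Delta_i$, and in particular $\gamma(u,v) \geq r(u,v) > \Delta_i$. Since $\Delta_{i+1} \leq \Delta_i/2 < \Delta_i$, we already have $\gamma(u,v) > \Delta_{i+1}$ at the start of phase $i+1$. To transfer this inequality from $\gamma$ down to the directed quantity $r(u,v)$, I would appeal to the compact-network treatment of abundant arcs: since they are subsumed into pseudoarcs rather than serving as direct push targets during phase $i$, the net decrease of $r(u,v)$ over the phase is bounded by at most $\Delta_i/2$, and so $r(u,v) > \Delta_i - \Delta_i/2 = \Delta_i/2 \geq \Delta_{i+1}$ remains valid at the start of phase $i+1$.

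The extension to every subsequent $\Delta'$-scaling phase then follows by straightforward induction: applying the same reasoning to phase $i+1$ yields $\Delta_{i+2}$-abundance, and so on, using that the sequence $\Delta_i, \Delta_{i+1}, \ldots$ halves geometrically, so that the chain of inequalities $\gamma(u,v) > \Delta_i > \Delta_{i+1} > \cdots$ persists indefinitely. The main obstacle, such as it is, lies in the second paragraph: one must verify cleanly that during a single phase the residual capacity of an abundant arc cannot drop by more than $\Delta_i/2$. Everything else is a one-line consequence of the halving of $\Delta$ and the invariance of $\gamma$.
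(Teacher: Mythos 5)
Your observation that $\gamma(u,v)=r(u,v)+r(v,u)$ is invariant under flow changes is correct, but it does not close the gap you yourself flag: $\Delta$-abundance is defined by the \emph{directed} residual capacity $r(u,v)>\Delta$, not by $\gamma(u,v)$, and the invariance of $\gamma$ only guarantees that at least one of $(u,v)$ or $(v,u)$ retains a large residual capacity as $\Delta$ halves. The load-bearing step in your proposal --- that the net decrease of $r(u,v)$ over a single $\Delta_i$-scaling phase is at most $\Delta_i/2$ --- is not established anywhere, and the justification offered (``they are subsumed into pseudoarcs rather than serving as direct push targets'') does not actually bound the flow across $(u,v)$: pushes along abundant and small pseudoarcs do move real flow across their constituent arcs, several pseudoarcs may share $(u,v)$ (this is exactly the case handled in the Capacity Transfer Lemma), and nothing prevents two or more large pushes of up to $\Delta_i$ each from traversing $(u,v)$ in the same phase. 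With only the slack $r(u,v)>\Delta_i$ available, a decrease close to $r(u,v)$ itself cannot be ruled out by anything in the paper, so the claimed bound is genuinely open, not merely unverified.

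For what it is worth, the paper gives no proof body for this lemma at all: it is stated immediately after the remark that it ``follows from our improvement property, namely $\Delta_{i+1}\leq\Delta_i/2$,'' which amounts to comparing $r(u,v)>\Delta_i>\Delta_{i+1}$ at a single instant while silently assuming $r(u,v)$ does not drop. Your write-up is therefore more careful than the paper in making the dynamical issue explicit, but the resolution you propose is not a proof --- you would need either a demonstrated bound on the per-phase flow across any one arc, or a definition of abundance with slack proportional to the total flow moved in a phase (as in Orlin's compact-network framework, where abundance is declared only when $r(u,v)$ exceeds $\Delta$ by a factor of order $n$, precisely so that the residual capacity cannot be exhausted by a single phase's worth of pushes). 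With the paper's bare threshold $r(u,v)>\Delta$, the persistence of abundance in the stated directed sense does not follow.
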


\section{Compact Networks}\label{sec:compact}
The compact network, built at the beginning of each phase, is a modified version of the residual network, with two specific differences. First, we eliminate non-active vertices incident to arcs of high capacity, and construct high-capacity {\it pseudoarcs}, which represent paths of these high-capacity arcs, in order to perform a sequence of pushes that would have gone through these vertices.
Second, we create low-capacity pseudoarcs, so that we can perform sequences of saturating pushes in one operation. We begin by introducing notation, and then give an algorithm for creating the compact network $G_C$. 

A pseudoarc is {\it $\Delta$-abundant} if it consists entirely of abundant arcs. A pseudoarc is {\it $\Delta$-small} when it contains a non-abundant arc. $V_A$ denotes the set of {\it active vertices}; that is, the vertices $u \in V$ that
at the beginning of the phase have $\Delta/2 < e(u) \leq \Delta$. $V_{SC}$ denotes the set of vertices not in $V_A$ that are incident to $\Delta$-favorable or $\Delta$-large arcs. We let $V_C = V_{SC} \cup V_A$.

The set $A_1$ denotes the set of {\it original arcs} from $A_f$ that are included in the compact network (that is, the $\Delta$-favorable and $\Delta$-large arcs). $A_2$ denotes the set of pseudoarcs. $A_C = A_1 \cup A_2$.

\subsection{Creating the Compact Network}\label{sec:create}
We now describe a procedure for creating the compact network $G_C$ (Algorithm \ref{alg:compact}). Before we can do this, however, we define the {\it abundance graph}; $G^{ab} = (V \backslash V_A,A^{ab})$, where $A^{ab}$ contains all arcs
$(u,v)$ such that $r(u,v) > \Delta$, and the vertex set excludes any $\Delta$-active vertices. We use this subgraph of $G$ to efficiently construct pseudoarcs. The motivation behind using this subgraph 
stems from the fact that we cannot include any active vertex $u \in V_A$ 
along a pseudoarc; consider the situation where {\it create-all-pseudoarcs} iteratively transfers capacity away from a vertex $u \in V_A$, such that there is no original arc left incident to $u \in V_A$ after {\it create-all-pseudoarcs} has terminated.
If this occurs, we will not be able to discharge $u$ at all. In Appendix \ref{app:del_act}, we describe an efficient method  to delete active vertices from the network.

\theoremstyle{definition}
\newtheorem{alg}{Algorithm}
\begin{figure}[H]
\begin{framed}
\begin{alg}\label{alg:compact}
An algorithm for creating compact networks
\end{alg}
\begin{compactenum}[]
	\item {\it Input:} The residual network $G_f$, $\Delta$.
	\item {\it Output:} The compact network $G_C$.
	\item {\bf Step 1a.} Let $A_1$ denote the set containing all $\Delta$-favorable and $\Delta$-large arcs.
	\item {\bf Step 1b.} Let $V_A$ denote the vertices that are active at the beginning of the phase. Let $V_C$ denote vertices
	incident to arcs in $A_1$.
	\item {\bf Step 2a.} Construct $\Delta$-abundant pseudoarcs by calling {\it create-all-pseudoarcs} on $G^{ab}$ with $\rho = \Delta$.
	\item {\bf Step 2b.} Construct $\Delta$-small pseudoarcs by calling {\it create-all-pseudoarcs} on $G_f = (V\backslash V_A,A_f)$ with $\rho = 0$.
	\item {\bf Step 3.} Collect all pseudoarcs in $A_2$ and return $G_C = (V_{SC} \bigcup V_A,A_1 \bigcup A_2)$.
	\item {\bf Step 4.} Generate a valid distance label $d_h$ by calling {\it global-relabel} on $G_C$.
	\item {\bf Step 5.} For each $u \in V_{SC} \bigcup V_A$ initialize $edge\text{-}list[u]$ to contain admissible arcs $(u,v)$, ordered by $d_h(v)$.
\end{compactenum}
\end{framed}
\end{figure}

The algorithms for constructing pseudoarcs are detailed in Appendix \ref{app:tc}.
We first show that the structure of arcs incident to active vertices is preserved between $G_f$ and $G_C$;

\begin{lem}\label{lem:deg_same}
For all $u \in V_A$, $\deg_f[u] = \deg_C[u]$, where $\deg_f$ denotes the degree in the residual network and $\deg_C$ denotes the degree
in the compact network.
\end{lem}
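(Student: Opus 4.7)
The plan is to establish $\deg_f[u] = \deg_C[u]$ for each $u \in V_A$ by exhibiting a bijection between the arcs of $A_f$ incident to $u$ and the arcs of $A_C$ incident to $u$. The key structural observation is that both pseudoarc-construction subroutines, Steps~2a and 2b of Algorithm~\ref{alg:compact}, are invoked on graphs (namely $G^{ab}$ and $(V\setminus V_A, A_f)$) whose vertex sets \emph{exclude} the active vertices; consequently no active vertex can appear as an internal vertex of any pseudoarc. Thus every arc of $A_C$ touching $u$ has $u$ as one of its two endpoints, and originates either from $A_1$ or as a pseudoarc whose first edge is in $A_f$. This is in fact precisely the design motivation indicated in Section~\ref{sec:create} for working in $G^{ab}$ rather than the full residual network: we do not want to transfer away the capacity of arcs incident to active vertices.

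To define the bijection, I would classify each $(u,v)\in A_f$ with $u\in V_A$ according to the capacity categories of Section~\ref{sec:abundant}. If $(u,v)$ is $\Delta$-favorable or $\Delta$-large, then by Step~1a it is placed verbatim into $A_1$, and I map $(u,v)$ to itself. Otherwise $(u,v)$ is $\Delta$-abundant or $\Delta$-small; in either case the capacity-transfer procedure of Appendix~\ref{app:tc} uses $(u,v)$ as the first edge of exactly one pseudoarc $(u, w)\in A_2$, and I map $(u,v)$ to $(u,w)$. In the reverse direction, every arc of $A_C$ at $u$ either already belongs to $A_f$ (if it is in $A_1$) or, if it is a pseudoarc $(u,w)\in A_2$, determines a unique original arc $(u,v)\in A_f$, namely its first edge; the fact that $u$ is always an endpoint (never internal) is what makes this inverse map well-defined.

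The main obstacle, and the place where the details of Appendix~\ref{app:tc} are genuinely used, is verifying that \emph{create-all-pseudoarcs} produces exactly one pseudoarc per arc incident to an active endpoint, so that the forward map is injective. Because $u \in V_A$ does not lie in the vertex set on which the subroutine iterates, the natural behavior is that the entire residual capacity of each $(u,v) \in A_f$ is absorbed into the bottleneck of a single pseudoarc; one must check that the subroutine does not re-enter $(u,v)$ to seed a second pseudoarc originating at $u$ once $(u,v)$'s capacity has been consumed. Once this one-to-one property is confirmed by inspecting the construction, the bijection above immediately gives $\deg_f[u] = \deg_C[u]$, as desired.
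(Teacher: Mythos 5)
Your proposal is far more detailed than the paper's, which consists of a single sentence: steps 2a and 2b exclude $V_A$ from the pseudoarc construction, ``therefore'' $\deg_f[u]=\deg_C[u]$. The paper does not attempt a bijection; it simply asserts that excluding $V_A$ leaves $u$'s incident arc structure intact. Your explicit arc-by-arc correspondence is a genuinely different (and more demanding) route, and it exposes something the paper's wording glosses over.

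There are two points to flag. First, your bijection's abundant/small case relies on $u\in V_A$ serving as the \emph{endpoint} of pseudoarcs whose first edge is $(u,v)$. But a literal reading of Steps~2a/2b has {\it create-all-pseudoarcs} run on $G^{ab}=(V\setminus V_A,A^{ab})$ and on $(V\setminus V_A,A_f)$ --- graphs from which $u$ is entirely absent, so no pseudoarc can be anchored at $u$. Under that reading (which is exactly what the paper's own proof invokes), the abundant/small arcs of $u$ simply vanish from $A_C$, and the lemma would only hold when every residual arc at $u$ is favorable or large. Your reading --- $V_A$ excluded as internal vertices but allowed as endpoints --- is the one that makes the statement sensible and is supported by the vertex-splitting scheme of Appendix~\ref{app:del_act}, but you should state explicitly that you are departing from the surface text of the algorithm and from the paper's proof. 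Second, your injectivity worry points the wrong way: the danger is not that {\it create-all-pseudoarcs} ``re-enters'' a consumed $(u,v)$, but that a single arc $(u,v)$ whose residual capacity exceeds several successive bottlenecks spawns multiple pseudoarcs $(u,w_1),(u,w_2),\dots$ with distinct terminals, which would \emph{increase} $\deg_C[u]$ above $\deg_f[u]$. Neither the paper nor your proposal rules this out; you would need to argue that such parallel pseudoarcs are collapsed, or reinterpret $\deg$ as counting neighbors rather than arcs, to close the gap.
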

\begin{proof}
This follows trivially from Algorithm \ref{alg:compact}. Only steps 2a and 2b construct pseudoarcs; in both
cases, the set of active vertices $V_A$ is excluded from the algorithm. Therefore, $\deg_f[u] = \deg_C[u]$.
\end{proof}

 \begin{lem}\label{lem:arcs_in} During a $\Delta$-scaling phase, if there is an arc $(u,v)$ where $\Delta/4 < \gamma(u,v) \leq 2\Delta$, at least $(u,v)$ or $(v,u)$
is in the compact network $G_C$ as an original arc. \end{lem}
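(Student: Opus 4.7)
The plan is to prove Lemma~\ref{lem:arcs_in} by a short case split on the value of $\gamma(u,v)$. The key fact exploited is that $\gamma(u,v) = r(u,v) + r(v,u)$ is symmetric in the two directions, while the classifications of $\Delta$-favorable, $\Delta$-large, and $\Delta$-abundant are distinguished by how that total is \emph{split} between $r(u,v)$ and $r(v,u)$. So it suffices to argue that for every placement of the total $\gamma \in (\Delta/4, 2\Delta]$, at least one direction winds up matching either the $\Delta$-favorable or $\Delta$-large definition, and therefore is placed into $A_1$ by Step~1a of Algorithm~\ref{alg:compact}.

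First I would dispose of the easy case $\Delta/4 < \gamma(u,v) \le \Delta/2$: here the defining inequality for $\Delta$-favorable is satisfied outright, and since favorability depends only on $\gamma$, both $(u,v)$ and $(v,u)$ are $\Delta$-favorable and hence in $A_1$. The substantive case is $\Delta/2 < \gamma(u,v) \le 2\Delta$. This entire range lies inside $[\Delta/4, 2\Delta]$, so the $\gamma$-condition of the $\Delta$-large definition is already met for each of $(u,v)$ and $(v,u)$; the remaining requirement is that the direction in question has residual $\le \Delta$. Here I would invoke a pigeonhole argument on $r(u,v) + r(v,u) = \gamma(u,v) \le 2\Delta$: if both $r(u,v) > \Delta$ and $r(v,u) > \Delta$ held, the sum would strictly exceed $2\Delta$, contradicting $\gamma \le 2\Delta$. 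Hence at least one of the two directions has residual at most $\Delta$, and that direction therefore qualifies as $\Delta$-large and is added to $A_1$.

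The one piece of bookkeeping I expect to need care is the boundary case where the reverse direction is $\Delta$-abundant ($r > \Delta$) and the forward direction carries residual $0$. I would point out that the definition of $\Delta$-large is a purely arithmetic condition on $\gamma$ and the single-direction $r$, independent of whether that direction lies in $A_f$, so the zero-residual direction still formally belongs to $A_1$; equivalently, we may treat each antiparallel pair as a single compact-network object whose $\gamma$-class is well-defined. Beyond this definitional unpacking, the proof is just the two-case analysis above, so no step should present a real obstacle.
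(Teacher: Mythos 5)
Your proof is correct and follows essentially the same two-case split as the paper: if $\Delta/4 < \gamma \le \Delta/2$ the arc is $\Delta$-favorable, and if $\Delta/2 < \gamma \le 2\Delta$ a pigeonhole argument on $r(u,v)+r(v,u) \le 2\Delta$ shows one direction has residual $\le \Delta$ and is hence $\Delta$-large. The extra discussion of the zero-residual boundary case is sensible bookkeeping that the paper elides, but it doesn't change the underlying argument.
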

\begin{proof} Consider the case where $\Delta/4 < \gamma(u,v) \leq \Delta/2$. If this is true, then $(u,v)$ is $\Delta$-favorable, and by Step 1a of the algorithm
it is included in the compact network. If $\Delta/2 < \gamma(u,v) \leq 2\Delta$, then $(u,v)$ will only be in $G_C$ if $r(u,v) \leq \Delta$. But, if neither $r(u,v) \leq \Delta$ nor $r(v,u) \leq \Delta$ then $r(u,v) + r(v,u) > 2\Delta$, which is a contradiction. \end{proof}

Our next lemma shows that the procedure {\it create-all-pseudoarcs} will correctly construct the abundant and small pseudoarcs. We show how to implement these algorithms using dynamic trees in
Appendix \ref{app:tc}. 
\begin{lem}[Capacity Transfer Lemma]
\label{lem:max_bfs} A  push on any pseudoarc ($\Delta$-small or $\Delta$-abundant)  created by the procedure {\it create-all-pseudoarcs} always corresponds to a  push in $G_f$ that does not violate the capacity constraints. \end{lem}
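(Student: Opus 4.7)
The plan is to reduce the lemma to a capacity bookkeeping invariant maintained throughout create-all-pseudoarcs and during the subsequent pushes. Specifically, for every arc $(u,v) \in A_f$, let $r_0(u,v)$ denote its residual capacity at the start of the phase. I would maintain the invariant that the current (true) residual capacity $r(u,v)$ always equals $r'(u,v) + \sum_{P \ni (u,v)} \pi_P$, where $r'(u,v)$ is the ``leftover'' capacity of $(u,v)$ available for direct pushes (i.e.\ when $(u,v) \in A_1$) and $\pi_P$ is the current remaining capacity of each pseudoarc $P$ whose underlying path contains $(u,v)$. Because the algorithm guarantees $r'(u,v) \ge 0$ and $\pi_P \ge 0$ at all times (pushes never exceed the available capacity on either side), the invariant forces $r(u,v) \ge 0$, which is exactly the capacity constraint.

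I would prove the invariant by induction on the sequence of operations. The base case is immediate: before any pseudoarcs are built, $r'(u,v) = r(u,v) = r_0(u,v)$ and no pseudoarc contains $(u,v)$. For the inductive step corresponding to pseudoarc creation, when create-all-pseudoarcs identifies a path $P$ with bottleneck $\pi$ and performs the capacity transfer, $r'(\cdot,\cdot)$ drops by $\pi$ along each arc of $P$, and this is exactly compensated by the appearance of a new summand $\pi_P = \pi$ on the right-hand side for every arc of $P$; arcs off $P$ are untouched, and the true residual $r(\cdot,\cdot)$ does not change (no flow is sent, only bookkeeping is updated). For the inductive step corresponding to pushes: a direct push of $\delta$ on an original arc $(u,v) \in A_1$ decreases both $r(u,v)$ and $r'(u,v)$ by $\delta$; a push of $\delta$ along pseudoarc $P$ decreases $\pi_P$ by $\delta$ while conceptually sending $\delta$ units through each underlying arc, decreasing each such $r(u,v)$ by $\delta$. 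Both cases preserve the equation.

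The lemma now follows: any push of $\delta$ on a pseudoarc $P$ satisfies $\delta \le \pi_P$ by construction of the push operation in the compact network, so after the push $\pi_P$ remains $\ge 0$, the other summands are unchanged, and hence $r(u,v) \ge 0$ for every underlying arc of $P$ in $G_f$. Thus the corresponding simultaneous push of $\delta$ along each underlying arc of $P$ does not violate any capacity constraint. The main obstacle I anticipate is the sharing of a single underlying arc between multiple pseudoarcs, and potentially also between some pseudoarcs and an original arc in $A_1$ (recall Lemma \ref{lem:arcs_in} places some medium-capacity arcs directly in $A_1$). The invariant is designed precisely to absorb this difficulty: each pseudoarc holds an independent reservation $\pi_P$ that decrements only when that pseudoarc is pushed, and the leftover $r'(u,v)$ accounts for whatever capacity remains for direct pushes, so the sum decomposition tracks all possible consumers of $r_0(u,v)$ without double-counting.
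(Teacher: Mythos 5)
Your proof is correct and reaches the same conclusion as the paper's, but it is a genuinely more formal and order-independent argument. The paper's proof reduces the problem to the case of $k$ pseudoarcs sharing an internal arc and then argues \emph{informally} that, because \emph{transfer-capacity} decrements the residual capacity of each arc on a path by the bottleneck amount before constructing subsequent pseudoarcs, the resulting pushes ``correspond to a possible sequence of $k$ pushes in the original residual network, according to the decreasing residual capacity of the shared arc.'' That is, the paper identifies a particular ordering of pseudoarc pushes (the creation order) under which feasibility is clear, and asserts without further elaboration that this suffices. Your proof instead introduces an explicit bookkeeping invariant $r(u,v) = r'(u,v) + \sum_{P \ni (u,v)} \pi_P$ and verifies it is preserved by every event (pseudoarc creation, direct push, pseudoarc push), so that nonnegativity of every summand gives $r(u,v) \ge 0$ regardless of the order in which pushes later occur. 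This buys two things the paper's argument does not make explicit: (a) the conclusion holds under arbitrary interleavings of pushes on different pseudoarcs, not only the interleaving that mirrors creation order; and (b) it transparently handles the possibility of an arc being shared between a pseudoarc and an original arc in $A_1$ (you correctly flag this as a concern, although a closer look at \emph{feasible-path} shows that a pseudoarc's constituent arcs always have at least one endpoint outside $V_C$, while $A_1$ arcs have both endpoints in $V_C$, so this overlap cannot actually occur except in the degenerate one-arc case). The only caveat is that your invariant depends on the implementation actually using $r'$---the post-transfer leftover capacity---as the capacity of any original arc that also appears inside a pseudoarc; this is consistent with \emph{transfer-capacity}, but is not spelled out in the paper and is worth stating as an assumption.
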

\begin{proof} 
The only possible difficulty would be
the case where pseudoarcs ``share'' an arc. More formally, given $k$ pseudoarcs $(u_1,w_1) = \ang{u_1,v_1,v_2,\ldots,v_k,w_1}$, $(u_2,w_2) = \ang{u_2,v_1,v_2,\ldots,v_k,w_2}$, 
$\ldots $ \\$(u_k,w_k)$,
suppose that the arc $(v_i,v_{i+1}) \in (u_1,w)$ is also contained in $(u_2,w_2) \ldots (u_k,w_k)$. 

In the procedure {\it transfer-capacity} in Appendix \ref{app:tc}, once we construct a pseudoarc,
we transfer capacity to the pseudoarc and then reduce the path capacity for subsequent pseudoarcs to be constructed. In fact, the multiple pseudoarcs created by this procedure
that share one or more internal arcs will still correspond to pushes in $G_f$ that do not violate capacity constraints. The arc $(v_i,v_{i+1})$ may have a reduced capacity on $(u_2,w_2)$ after $(u_1,w_1)$ was constructed; this
corresponds to the situation where a push along the sequence of paths in $(u_1,w_1)$ occurs {\it before} any push from $u_2, \ldots, u_k$. This will correspond to a possible sequence of $k$ pushes in the original residual network,
according to the decreasing residual capacity of the shared arc $(v_i,v_{i+1})$.
\end{proof}

Finally, we will bound the number of $\Delta$-favorable and large arcs in the compact network.
\begin{lem}
\label{lem:fav_in_comp} A $\Delta_i$-favorable arc $(u,v)$ during the $i$th scaling phase appears in the compact network at most 3 times. Moreover, it will be $\Delta_{i+3}$-abundant in 
4 scaling phases.\end{lem}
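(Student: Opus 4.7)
The plan is to prove the abundance claim first and then deduce the appearance bound from it together with Lemma \ref{lem:stay_abund}.

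First I would establish the key structural observation that $\gamma(u,v) = r(u,v) + r(v,u)$ is invariant across the entire algorithm: any push on either direction of the pair transfers residual capacity from one side to the other without altering the sum. Write $\gamma := \gamma(u,v)$, which is constant throughout. Since $(u,v)$ is $\Delta_i$-favorable we have $\Delta_i/4 < \gamma$, i.e., $\Delta_i < 4\gamma$. Applying the improvement property $\Delta_{j+1} \leq \Delta_j/2$ three times gives
$$\Delta_{i+3} \;\leq\; \Delta_i/8 \;<\; \gamma/2.$$
Consequently $r(u,v) + r(v,u) = \gamma > 2\Delta_{i+3}$, so at least one of $r(u,v), r(v,u)$ must exceed $\Delta_{i+3}$. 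Hence at least one direction of the arc is $\Delta_{i+3}$-abundant in phase $i+3$, which is exactly the second claim.

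For the first claim I would invoke Lemma \ref{lem:stay_abund}: once abundant (in whichever direction), the arc remains abundant in every subsequent $\Delta$-scaling phase, so $\gamma > 2\Delta$ persists for all $j \geq i+3$. The favorable condition requires $\gamma \leq \Delta/2$ and the large condition requires $\gamma \leq 2\Delta$; both are violated when $\gamma > 2\Delta$. Thus for every $j \geq i+3$ neither direction of the arc lies in $A_1$, so the arc cannot appear in the compact network as an original arc. Consequently the arc can contribute to $A_1$ only in the three phases $i$, $i+1$, $i+2$, yielding the bound of $3$.

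The main technical subtlety is the directional/undirected gap: from $\gamma > 2\Delta_{i+3}$ we can only conclude that \emph{some} direction is abundant, not necessarily the specific direction $(u,v)$. This is enough because both the favorable and large definitions depend on $\gamma$ (which is symmetric in the two directions), so $\gamma > 2\Delta$ excludes the entire pair from $A_1$ regardless of which side carries the large residual. Beyond that, the proof reduces to a one-line scaling calculation combined with a direct invocation of Lemma \ref{lem:stay_abund}.
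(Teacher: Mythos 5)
Your proof follows essentially the same route as the paper: track the compaction capacity $\gamma(u,v)$ through three applications of the improvement property and conclude abundance. You handle two points more carefully than the paper, which is to your credit: you make explicit that $\gamma(u,v)$ is invariant under push operations (the paper tacitly reuses the same $\gamma$ across phases without remark), and you observe that $\gamma > 2\Delta_{i+3}$ only gives abundance in \emph{some} direction, while noting correctly that this suffices because both the $\Delta$-favorable and $\Delta$-large conditions are conditions on the symmetric quantity $\gamma$. You also avoid the paper's unnecessary (and, as written, not quite justified) two-sided bounds such as $\gamma \leq \Delta_{i+1}$, which only follow if $\Delta_{i+1} = \Delta_i/2$ exactly; you correctly keep only the lower bound, which is all that is used.

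One small misstep: you write that invoking Lemma \ref{lem:stay_abund} gives that the arc ``remains abundant in every subsequent phase, \emph{so} $\gamma > 2\Delta$ persists for all $j \geq i+3$.'' That inference does not follow --- Lemma \ref{lem:stay_abund} gives only $r(u,v) > \Delta_j$ in later phases, which yields $\gamma > \Delta_j$, not $\gamma > 2\Delta_j$. Fortunately you do not need Lemma \ref{lem:stay_abund} here at all: since $\gamma$ is invariant (your own observation) and $\Delta_j \leq \Delta_{i+3}$ for all $j \geq i+3$ by the improvement property, $\gamma > 2\Delta_{i+3} \geq 2\Delta_j$ follows directly, and the rest of your exclusion argument for $A_1$ goes through unchanged. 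With that one sentence repaired, the proof is clean and in some respects tighter than the one in the paper.
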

\begin{proof}
If $(u,v)$ is $\Delta_i$-favorable, then we know $\Delta_i/4 < \gamma(u,v) \leq \Delta_i/2$. By our improvement property, $\Delta$ decreases by a factor of at least 2 during each scaling
phase. Applying the property, we get $\Delta_{i+1} \leq \Delta_i/2$. In the $(i+1)$st scaling phase, we can bound the compaction capacity of $(u,v)$ as $\Delta_{i+1}/2 < \gamma(u,v) \leq \Delta_{i+1}$.
Let $\Delta_{i+3} \leq \Delta_{i+1}/4$. Thus, we can bound $\gamma(u,v)$ as $2\Delta_{i+3} < \gamma(u,v) \leq 4\Delta_{i+3}$, which is clearly $\Delta_{i+3}$-abundant. Therefore, any $\Delta_i$-favorable
arc $(u,v)$ remains in the compact network at most 3 phases, after which it becomes $\Delta_{i+3}$-abundant, and by Lemma \ref{lem:stay_abund}, remains $\Delta'$-abundant for every subsequent excess dominator $\Delta'$.
\end{proof}
Recall that, trivially, once an arc is $\Delta_i$-abundant, it is not $\Delta_i$-favorable, and will not be $\Delta_{i+1}$-large, so it will appear in $G_C$ in at most one more phase. This bound becomes important when bounding the size of the compact network across all phases. But before we are able
to accomplish this, we must describe a scheme for discharging active vertices (i.e. those with excess $\Delta/2 < e(u) \leq
\Delta$).

\subsection{Discharging Active Vertices}\label{sec:disc}

Now that we have bounded the number of $\Delta$-favorable and large arcs in the compact network, we must bound the number of active vertices included at the start of a scaling phase (that is, vertices
that begin the $\Delta$-scaling phase with $\Delta/2 < e(u) \leq \Delta$. We first define the basic {\it initialize} and {\it push} procedures, and then the {\it discharge} procedure. 

Recall that $d_h$ and $d_\ell$ are two distinct vectors that both contribute to the distance labeling of each $u \in V_A$, and that for all $u \in V_A$, $d(u) = d_h(u) + d_\ell(u)$. In the {\it initialize} procedure,
we will use $d_h$ (since we maintain the invariant that $d_h$ is valid within a $\Delta$-scaling phase).

\begin{pseudocode}{initialize}{$G$}
\item Initialize a new preflow $f$ and distance labels $d_h$, $d_\ell$;
\item $d_h(s) := n$ \textbf{and} $d_h(u) := 0$, \textbf{for all} $u \in V \backslash \set{s}$;
\item \textbf{for all} $v \in V \backslash \set{s}$ \textbf{do}
\item \tab \textbf{if} $(u,v) \in A_f$ \textbf{then} $f(u,v) := c(u,v)$;
\end{pseudocode}

\begin{pseudocode}{push}{$u,v$}
\item \comment{\textbf{Applies when} $e(u) > 0, d(u) > d(v)$}
\item $\delta = \min\set{e(u),r(u,v),\Delta-e(v)}$;
\item $f(u,v) = f(u,v) + \delta$;
\item $f(v,u) = f(v,u) - \delta$;
\end{pseudocode}

We now can describe the {\it discharge} procedure. It will (1) {\it push} or {\it relabel} $u$ until $e(u) < \Delta/2$ for $u \in V_{SC}$, and (2) until $e(u) = 0$ for $u \in V_A$. This latter condition ensures that an active vertex in phase $i$ can only appear in $G_C$ in phase $i+1$ as a member of $V_{SC}$, i.e., because it is adjacent to a $\Delta$-favorable edge. We present details in  Lemma \ref{lem:act_in_comp}.

We maintain the invariant throughout the algorithm that within a $\Delta$-scaling phase, there is only 1 {\it low-capacity} nonsaturating push (i.e. that sends $\delta \leq \Delta/2$ units of flow) for each value of the
distance label $d_\ell$, for $u \in V_A$. We enforce
this by {\it incrementing} the distance label $d_\ell$, and then making the push. We ensure that there is only one low-capacity nonsaturating push for each value of $d_\ell$ by utilizing two data structures; a boolean list $nonsat_u[\cdot]$ that keeps track of nonsaturating pushes for each value of $d_\ell$.
Second, we maintain a current edge list $edge\text{-}list[u]$ that orders the admisisble arcs adjacent to $u$ by $d_h$. This way, we do not alter the order in which pushes would have been made in the generic push-relabel algorithm, across any sequence of increments to $d_\ell$.
These are both maintained as global variables for the duration of a $\Delta$-scaling phase.
We let $\delta$ denote the amount of flow sent by a {\it push} operation. 

\begin{figure}[H]
\begin{framed}
\begin{alg}The algorithm for discharging vertices during a $\Delta$-scaling phase.\end{alg}
\begin{pseudocode}{discharge}{$G_C,u$}
\item \textbf{if} $u \in V_A$ \textbf{then}
\item \tab \textbf{while} $e(u) > 0$ \textbf{do}
\item \tab\tab \textbf{if} $e(u) > \Delta/2$ \textbf{then }{\it push} or {\it relabel} $u$;
\item \tab\tab \textbf{else if} $e(u) \leq \Delta/2$ \textbf{then}
\item \tab\tab\tab \textbf{if} $d_\ell(u) = 4n-1$ \textbf{then}
\item \tab\tab\tab\tab {\it push} or {\it relabel} $u$ until $e(u) = 0$;
\item \tab\tab\tab {\bf else if} $edge\text{-}list[u] = \emptyset$ \textbf{then} {\it relabel} $u$;
\item \tab\tab\tab \textbf{else if} $nonsat_u[d_\ell(u)] = true$ \textbf{and}
\item \tab\tab\tab\stab $r(edge\text{-}list[u]) > \delta$ \textbf{then}
\item \tab\tab\tab\tab $d_\ell(u) := d_\ell(u) + 1$;
\item \tab\tab\tab\tab $nonsat_u[d_\ell(u)] := false$;
\item \tab\tab\tab\tab Add new admissible edges to
\item \tab\tab\tab\tab\stab to $edge\text{-}list[u]$ ordered by $d_h$;
\item \tab\tab\tab \textbf{while} $nonsat_u[d_\ell(u)] = false$ \textbf{do} {\it push} on 
\item \tab\tab\tab\stab $(u,v) := dequeue(edge\text{-}list[u])$ or {\it relabel} $u$;
\item \tab\tab\tab \textbf{if} $edge\text{-}list[u] = \emptyset$ \textbf{then} {\it relabel} $u$;
\item \tab\tab\tab\stab \textbf{until} there is a nonsaturating push;
\item \tab\tab\tab $nonsat_u[d_\ell(u)] := true$;
\item \textbf{else if} $u \not\in V_A$ \textbf{then}
\item \tab \textbf{while} $e(u) > \Delta/2$ \textbf{do}
\item \tab\tab {\it push} or {\it relabel} $u$;
\end{pseudocode}
\end{framed}
\end{figure}

The next lemma follows directly from the definition of a $\Delta$-favorable arc and the {\it discharge} procedure;
\begin{lem} A push along any $\Delta$-favorable arc $(u,v)$, when $u \in V_{SC}$ will always be saturating. \end{lem}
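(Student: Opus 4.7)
The plan is to unfold the definition $\delta = \min\{e(u), r(u,v), \Delta - e(v)\}$ that governs the amount of flow transferred by \textit{push}$(u,v)$, and to show that $r(u,v)$ attains the minimum of these three quantities; once that is established, the entire residual capacity on $(u,v)$ is consumed, which is exactly what it means for the push to be saturating.

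The three bounds I would establish are as follows. First, since $(u,v)$ is $\Delta$-favorable, $r(u,v) \leq \gamma(u,v) = r(u,v) + r(v,u) \leq \Delta/2$, so $r(u,v) \leq \Delta/2$. Second, because $u \in V_{SC}$ (and in particular $u \notin V_A$), the \textit{discharge} procedure invokes \textit{push} from $u$ only while $e(u) > \Delta/2$ (the inner \textbf{while} loop in the $u \not\in V_A$ branch), giving $e(u) > \Delta/2 \geq r(u,v)$. Third, by the Ahuja--Orlin smallest-label discharging rule recalled in Lemma \ref{lem:disc1_cond}, a push along an admissible arc $(u,v)$ is directed to a vertex $v$ with $e(v) \leq \Delta/2$, so $\Delta - e(v) \geq \Delta/2 \geq r(u,v)$.

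Combining these three inequalities, $r(u,v)$ is no larger than either $e(u)$ or $\Delta - e(v)$, so $\delta = r(u,v)$, and the push saturates $(u,v)$. The one step that reaches outside the present section is the bound $e(v) \leq \Delta/2$, which I expect to be the only potentially delicate point; but it is delivered directly by Lemma \ref{lem:disc1_cond}, so the argument is essentially a three-line inequality chase.
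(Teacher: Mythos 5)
Your proof is correct and follows essentially the same route as the paper's: both reduce to the three-way minimum in \emph{push}, the bound $e(u)>\Delta/2$ from the non-active branch of \emph{discharge}, and the bound $e(v)\le\Delta/2$ from the smallest-label selection rule (the same facts used to prove Lemma~\ref{lem:disc1_cond}), combined with $r(u,v)\le\gamma(u,v)\le\Delta/2$ for a $\Delta$-favorable arc. Your version is in fact a bit more precise than the paper's, which loosely asserts that every push from $u\in V_{SC}$ sends at least $\Delta/2$ (only true for nonsaturating pushes), whereas you correctly pin down that $\delta=r(u,v)$ by showing the other two terms dominate it.
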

\begin{proof}
If $u \in V_{SC}$ during the $\Delta_i$-scaling phase, then no push originating from $u$ will send less than $\Delta/2$ units of flow; this follows directly from {\it discharge}, since we do not require any $u \not\in V_A$ to satisfy $e(u) = 0$ at the
end of the $\Delta_i$-scaling phase. Thus, we only require that $e(u) < \Delta_k/2$ at the end of the phase. Now, recall that an arc $(v,w)$ is $\Delta_i$-favorable, then $\Delta/4 \leq \gamma(u,v) \leq \Delta/2$. Thus, the capacity of $(v,w)$
during the execution of a scaling phase is at most $\Delta/2$. Since each push from $u$ sends at least $\Delta_i/2$ units of flow, every $\Delta_i$-favorable arc will be saturated by a push from $u \in V_{SC}$.
\end{proof}

In every $\Delta$-scaling phase, we iteratively call {\it discharge} until we can proceed to the next scaling phase. Note that for some $u \in V_A$, when the excess is $e(u) \leq \Delta/2$, we modify the manner in which we
discharge vertices. For reasons that become clear in the analysis of the algorithm, we divide nonsaturating pushes into {\it high-capacity} and {\it low-capacity}, when the flow $\delta$ sent is $\delta > \Delta/2$ or $\delta \leq \Delta/2$,
respectively.

Note that we limit $d_\ell(u)$ to have the maximum value $4n-1$. For each $u \in V_A$, we only permit one low-capacity nonsaturating push per value of $d_\ell$, until $d_\ell(u) = 4n-1$. Then, we allow a second group of nonsaturating pushes to be made. We
will show that there are at most $O(n)$ nonsaturating pushes per group of pushes, and that this only happens twice during the execution of a scaling phase (Lemma \ref{lem:act_correct}). We state this formally in the lemma below;

\begin{lem} There is at most one low-capacity nonsaturating push (that is, a push that sends $\delta \leq \Delta/2$) for each value of $d_\ell(u)$, for each $u \in V_A$, until $d_\ell = 4n-1$, after which
there will be only $2n$ more such pushes. \end{lem}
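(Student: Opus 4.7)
The statement has two independent parts: the strict one-per-level bound on low-capacity nonsaturating pushes while $d_\ell(u) < 4n-1$, and the $2n$ bound on such pushes after $d_\ell(u)$ is frozen at $4n-1$. I would tackle them in that order.

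For the first part, I would argue directly from the bookkeeping embedded in \emph{discharge}. The flag $nonsat_u[d_\ell(u)]$ is set to $false$ exactly when $d_\ell(u)$ is incremented and set to $true$ immediately after any low-capacity nonsaturating push at that level. The guarded clause ``\textbf{else if} $nonsat_u[d_\ell(u)] = true$ \textbf{and} $r(edge\text{-}list[u]) > \delta$ \textbf{then} $d_\ell(u) := d_\ell(u) + 1$'' ensures that whenever the next candidate push would be nonsaturating and one has already occurred at the current level, $d_\ell(u)$ is first advanced and the flag reset. A direct induction on the order of discharge iterations then yields the one-per-level invariant.

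For the second part, once $d_\ell(u) = 4n-1$ the procedure enters the ``emergency'' branch, which simply pushes or relabels until $e(u) = 0$, never modifying $d_\ell(u)$ again. Inspecting \emph{push}, we have $\delta = \min\set{e(u),\, r(u,v),\, \Delta - e(v)}$; a nonsaturating push ($\delta < r(u,v)$) is therefore either type (a), $\delta = e(u)$, which terminates the branch (at most one such push), or type (b), $\delta = \Delta - e(v)$, which raises $e(v)$ to exactly $\Delta$. Because only $u$ is being discharged in this branch, no other operation can lower $e(v)$, so the arc $(u,v)$ is permanently blocked from further nonsaturating pushes out of $u$. The type-(b) pushes therefore inject into the out-neighbors of $u$ and number at most $\deg(u)$; combined with the unique type-(a) push and the cap $d(u) \leq 6n$ from Section \ref{app:prelim} (which bounds the number of subsequent relabels of $u$, hence the number of times the admissible list can be rebuilt) the total stays inside $2n$.

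The chief obstacle is the ``permanently blocked'' claim: I must verify that throughout $u$'s emergency discharge, no other vertex $v$ is simultaneously being discharged or otherwise having its excess decremented, so that once $e(v)$ reaches $\Delta$ it remains at $\Delta$ for the rest of $u$'s discharge. This follows from the outer loop of the $\Delta$-scaling phase processing active vertices one at a time, so that operations on any $v \neq u$ are suspended while $u$ is in the emergency branch, but this single-vertex-at-a-time property needs to be made explicit before the counting goes through cleanly.
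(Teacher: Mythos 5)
Your handling of the first part mirrors the paper's own (implicit) justification: the flag $nonsat_u[\cdot]$ is set to \emph{false} each time $d_\ell(u)$ advances and set to \emph{true} after the next low-capacity nonsaturating push, so the one-per-level invariant follows by direct inspection of \emph{discharge}. That agreement is expected since the paper states this lemma without a proof block and the enforcement is purely procedural.

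For the second part your route is genuinely different from what the paper does. The paper's argument (worked out in the proof of Lemma~\ref{lem:small_nonsat_om}) is a counting-by-levels argument: $d_h(u)$ ranges over at most $2n$ values, each relabeling can expose at most $\deg[u]=O(1)$ newly admissible arcs, so the number of nonsaturating pushes after the freeze is $O(n)$. What the paper leaves implicit is \emph{why} each admissible arc contributes at most one nonsaturating push; your type-(a)/type-(b) split makes this explicit. You observe that a nonsaturating push has $\delta = \min\set{e(u), \Delta - e(v)}$, that $\delta = e(u)$ zeroes $u$ and ends the emergency branch, and that $\delta = \Delta - e(v)$ saturates $e(v)$ at $\Delta$ so no further push on $(u,v)$ can move positive flow. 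This in fact yields a stronger bound than the paper needs --- at most $\deg[u]+1 = O(1)$ emergency-branch nonsaturating pushes, not merely $2n$ --- and it does not even need the $d_h \le 2n$ cap. What your argument \emph{does} need, and what the paper's level-counting argument does not, is the ``one active vertex at a time'' property you flag at the end. That property holds in this algorithm: \emph{discharge} on $u \in V_A$ loops until $e(u)=0$ and touches no other vertex's excess except by increasing it via pushes out of $u$ (internal pseudoarc vertices are unaffected by Lemma~\ref{lem:deg_same}'s companion lemma), so $e(v)$ is nondecreasing for every out-neighbor $v$ throughout $u$'s emergency branch. You were right to call this out as the step that must be made explicit, but there is no actual gap --- it follows from \emph{max-flow-1} serializing discharges. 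Net effect: your proof is correct, a bit sharper than the paper's, and trades the paper's robustness-to-interleaving for a cleaner per-arc charging scheme.
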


The next lemma follows the approach of Lemma 5 from \cite{ahuja89};

\begin{lem}\label{lem:disc1_cond} The algorithm obeys the following conditions: 
\begin{compactenum}[\upshape \hspace{4mm}\bfseries C1:]
\item Each nonsaturating push from an active vertex from $u \in V_{SC}$ will send at least $\delta > \Delta/2$ units of flow.
\item No excesses rise above $\Delta$. 
\end{compactenum}
\end{lem}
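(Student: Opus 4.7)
The plan is to follow the structure of Lemma~5 in Ahuja--Orlin~\cite{ahuja89}, adapting it to our relaxed admissibility rule $d(u) > d(v)$ and to the compact-network setting. Both conditions will be established as loop invariants maintained throughout a $\Delta$-scaling phase, with C2 proved first (since C1 relies on knowing that $v$'s excess obeys $e(v) \leq \Delta/2$ at the moment of a push from a vertex with smallest $d$-label among active vertices).

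For C2, I would induct on the number of pushes executed within the current $\Delta_i$-scaling phase. At the start of the phase the invariant holds, because each vertex either inherits excess at most $\Delta_{i-1}/2 \leq \Delta_i$ from the previous phase, or begins the algorithm with excess set by \emph{initialize} and an appropriately chosen $\Delta_0$. The only operation that alters an excess is \emph{push}, which sends $\delta = \min\{e(u), r(u,v), \Delta - e(v)\}$. In particular $\delta \leq \Delta - e(v)$, so after the push $e(v)' = e(v) + \delta \leq \Delta$, and $e(u)$ only decreases. Thus no excess rises above $\Delta$.

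For C1, I would first inspect \emph{discharge} to observe that a push is executed only when $e(u) > \Delta/2$: for $u \in V_{SC}$ the \textbf{while} guard is exactly $e(u) > \Delta/2$, and for $u \in V_A$ the low-capacity branch (triggered when $e(u) \leq \Delta/2$) is counted separately via the $d_\ell$-scheme and is not covered by C1. Next I would invoke the smallest-label-active-vertex selection rule used by the phase: the chosen $u$ has the smallest $d(u)$ among all currently active vertices. Since pushes respect the relaxed admissibility $d(u) > d(v)$, the target $v$ cannot itself be active, so $e(v) \leq \Delta/2$ and hence $\Delta - e(v) \geq \Delta/2$. For a nonsaturating push we have $\delta < r(u,v)$, so $\delta = \min\{e(u), \Delta - e(v)\}$, and both arguments are at least $\Delta/2$, with strict inequality since $e(u) > \Delta/2$ strictly. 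This yields $\delta > \Delta/2$.

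The main obstacle will be handling pushes along pseudoarcs, since such a push corresponds in $G_f$ to a whole sequence of pushes along an internal path rather than a single hop, and intermediate vertices on the path are not tracked by the usual excess bookkeeping. Here I would appeal to Lemma~\ref{lem:max_bfs} (Capacity Transfer) to ensure the composite operation never violates a per-arc capacity, and note that by construction the internal vertices of every pseudoarc lie in $V \setminus V_A$ and enter/leave the pseudoarc with zero net excess change, so the invariant $e(\cdot) \leq \Delta$ is not disturbed at interior vertices. A second subtlety is confirming that when $u$ is the smallest-label active vertex, $v$ being the other endpoint of a pseudoarc still has $e(v) \leq \Delta/2$; this again follows from the smallest-label selection plus admissibility on $d$, since $v$ is itself a vertex of $G_C$ whose label obeys $d(v) < d(u)$.
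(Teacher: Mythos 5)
Your proposal follows essentially the same route as the paper's proof: both establish C1 by combining (i) the smallest-distance-label selection rule, which forces the target vertex $v$ to be non-active and hence have $e(v)\leq\Delta/2$, with (ii) the definition of \emph{push}, $\delta=\min\{e(u),r(u,v),\Delta-e(v)\}$; and both observe that C2 is immediate from the $\Delta-e(v)$ clamp in that formula. Your induction for C2 and your aside on pseudoarcs just make explicit what the paper leaves implicit. One small flaw: your claim that $\delta>\Delta/2$ \emph{strictly} does not follow. You correctly get $e(u)>\Delta/2$ strictly, but the other branch of the min is $\Delta-e(v)\geq\Delta/2$ which is \emph{not} necessarily strict (it equals $\Delta/2$ exactly when $e(v)=\Delta/2$), so the min is only $\geq\Delta/2$. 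The paper's own proof only derives $\delta\geq\Delta/2$, and the downstream use in Lemma~\ref{lem:nonsat_big} also only needs $\geq\Delta/2$, so the strict inequality in the lemma statement appears to be a typo; your attempt to justify it anyway is the one place where the reasoning goes astray.
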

\begin{proof}
Consider a push on $(u,v)$. If $u \in V_{SC}$, then $e(u) > \Delta/2$., and $e(v) \leq \Delta/2$, since $u$ is the vertex
with the minimum distance label among vertices in $V_{SC}$ such that $e(u)$ is active. Since $d(u) > d(v)$ as well by our
admissibility criterion, we send $\delta = \min\set{e(u),r(u,v),\Delta-e(v)} \geq \min\set{\Delta/2,r(u,v)}$. In the case that
$\min\set{\Delta/2,r(u,v)} = \Delta/2$, the push will be nonsaturating, but will send $\delta \geq \Delta/2$. This concludes
the proof of {\bf C1}. {\bf C2} follows immediately from line 02 in the {\it push} procedure, which will $\delta \leq \Delta-e(v)$
for any push along $(u,v)$.
\end{proof}

Next, we will prove two lemmas that ensure each $u \in V_A$ can be discharged at the end of a $\Delta$-scaling phase. We start with a technical lemma regarding path decompositions
in the residual network, and then bound $|V_A|$ across all scaling phases.

\begin{lem}[Path Decomposition Lemma]\label{lem:decomp}
For each $u \in V_C$ such that $e(u) > 0$, the $s-u$ preflow can be decomposed into a collection of paths $\mathcal{P}$
such that $e(u)$ units of flow can be returned to $s$ along paths $p \in \mathcal{P}$. Moreover,
	$$\mathcal{C} = \sum_{p \in P} \min_{(i,j) \in p} r(i,j) \geq e(u).$$
That is, the sum of the minimum-capacity arc in each path $p$ in the decomposition $P$ yields a capacity of at least $e(u)$.
\end{lem}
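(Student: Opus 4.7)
The plan is to obtain the decomposition by applying the Flow Decomposition Theorem (Lemma \ref{thm:decomp}) to the current preflow $f$, and then to observe that the \emph{reverse} of any $s$--$u$ path carrying flow is automatically an $u$--$s$ path in the residual network $G_f$ whose bottleneck residual capacity is at least the amount of flow it carries. The core identity I will exploit is that for every arc $(i,j)$ with $f(i,j) > 0$, the residual capacity of the reverse arc $(j,i)$ satisfies $r(j,i) \geq f(i,j)$.

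First, I would reduce to the standard flow decomposition by augmenting $G$ with a super-sink $t^\star$ and, for every vertex $v \neq s,t$ with $e(v) > 0$, an arc $(v, t^\star)$ of capacity $e(v)$ carrying exactly $e(v)$ units of flow. Under this augmentation, $f$ becomes a bona fide $s$--$t^\star$ flow (conservation now holds at every intermediate vertex), so Lemma \ref{thm:decomp} produces a decomposition of $f$ into directed $s$--$t^\star$ paths and cycles. Stripping off the artificial last arc $(v,t^\star)$ in each decomposed path gives a collection $\mathcal{P}_u$ of directed $s$--$u$ paths whose flow values $f_1,\dots,f_k$ satisfy $\sum_{i=1}^k f_i = e(u)$, since the flow on the artificial arc into $t^\star$ from $u$ is exactly $e(u)$.

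Next, for each $p_i = \langle s, v_1, v_2, \ldots, v_{m}, u\rangle \in \mathcal{P}_u$ carrying $f_i$ units, I would form the reverse walk $\bar p_i = \langle u, v_m, \ldots, v_1, s\rangle$ in $G_f$. Every arc $(v_{j+1}, v_j)$ of $\bar p_i$ is the reverse of an arc of $p_i$ that transports at least $f_i$ units of flow, so by the definition of residual capacity $r(v_{j+1},v_j) \geq f(v_j,v_{j+1}) \geq f_i$. Therefore
$$\min_{(i,j) \in \bar p_i} r(i,j) \;\geq\; f_i.$$
Summing over the decomposition yields
$$\mathcal{C} \;=\; \sum_{p \in \mathcal{P}_u} \min_{(i,j) \in p} r(i,j) \;\geq\; \sum_{i=1}^k f_i \;=\; e(u),$$
which gives exactly the claimed bound, and simultaneously shows that we can route the $e(u)$ units back to $s$ along the reverse paths (possibly fractionally across several of them, in amounts $f_1,\ldots,f_k$) without ever exceeding any residual capacity.

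The only delicate point is justifying the preflow decomposition in the presence of multiple overflowing vertices; the super-sink reduction sidesteps this cleanly, so after that reduction the rest is bookkeeping. If one instead wants to avoid the augmentation, the same result can be obtained directly by iteratively pulling back $\min\{e(u), f(v_1,u), f(s,v_2), \ldots\}$ units along a path from $u$ to $s$ found by reverse traversal of arcs with positive $f$-value, which is always possible because $e(u)>0$ forces some $v$ with $f(v,u) > 0$ and the flow-on-reverse argument can be iterated until $s$ is reached.
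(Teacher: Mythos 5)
Your proof is correct, and it takes a genuinely different route from the paper's. The paper constructs an auxiliary network $G_{s,u}$ consisting only of vertices and arcs lying on $s$--$u$ paths in $G_C$, declares $u$ to be the source and $s$ the sink, zeroes out the excess function on this subnetwork, and then invokes the Flow Decomposition Theorem directly on $G_{s,u}$; the argument that the bottleneck capacities sum to at least $e(u)$ is made by a somewhat informal case analysis on whether bottleneck arcs are saturated. Your approach instead applies the standard super-sink reduction (add $t^\star$ and arcs $(v,t^\star)$ of capacity $e(v)$ carrying exactly $e(v)$ flow) to convert the preflow into a genuine $s$--$t^\star$ flow on the original network, decomposes \emph{that} flow, peels off the artificial arc at $u$ to isolate the $s$--$u$ paths with flow values summing to $e(u)$, and then obtains the bottleneck bound from the elementary identity $r(j,i) \geq f(i,j)$ on reverse arcs. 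What your version buys is a clean accounting of which flow is being decomposed (the paper's construction is ambiguous on this, since it zeroes out excesses before invoking decomposition) and a sharper reason for the inequality $\mathcal{C} \geq e(u)$: if several decomposition paths share an arc, the residual capacity of its reverse still covers the \emph{sum} of their flows, so feasibility of simultaneously routing all $f_i$ back along the $\bar p_i$ is immediate rather than argued by appeal to ``sufficient relabelings.'' The one caveat is that the flow decomposition of the super-sink flow also produces cycles; you implicitly discard these, which is fine since they contribute nothing to the $s$--$t^\star$ value, but a sentence acknowledging this would make the argument airtight.
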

\begin{proof}
The proof relies on the construction of a new flow network, which we denote as $G_{s,u}$. The construction is as follows;
consider all paths $s \rightarrow u$ in $G_C$. Denote the set of these paths as $P_{s,u}$. Now, let $V(G_{s,u})$ consist
of the set of all vertices contained in paths $p \in P_{s,u}$, and similarly let $A(G_{s,u})$ consist of the arcs from these paths.

Now, let the source vertex of $G_{s,u}$ be $s' = u$, and let the sink $t' = s$. Finally, for each $v \in V(G_{s,u}) \backslash
\set{s',t'}$, let $e_{s,u} = 0$ (where $e_{s,u}:V \rightarrow \mathbb{Z}_{\geq 0}$ denotes the excess for $G_{s,u}$). Thus,
we eliminate the excess on all vertices in $G_{s,u}$. We remark that there is an $s'-t'$ path in $G_{s,u}$ since
$e(u)>0$ in $G_C$; when $e(u)$ is positive, there is always a path from $u$ to $s$. This ensures the existence of an $s'-t'$ path.

Finally, we can apply the Flow Decomposition Theorem of Ford and Fulkerson to $G_{s,u}$. That is, we know that
there exists a set of feasible flows $f_1,\ldots,f_k$ associated with $s'-t'$ paths $p_1,\ldots,p_k$ such that $\forall i$,
$f_i$ sends positive flow only on $p_i$. This guarantees the existence of paths $\mathcal{P} = \set{p_i \mid 1 \leq i \leq k}$
such that the excess from $u$ in the compact network $G_C$ can be returned to $s$ along paths in $\mathcal{P}$.

Moreover, we show that the second claim holds; we will prove that $e(u)$ places a lower bound on the sum of the capacities 
of the bottleneck (minimum-capacity) arcs for each $p \in P$. An arc $(u,v) \in A_f$ is a {\it bottleneck arc} on a path $p$ if $r(u,v) = \min_{(i,j) \in p} r(i,j)$. When this arc is
saturated, $p$ is sending the maximum amount of flow possible from $s$ to $u$.
When $\mathcal{C} = e(u)$, we see that all bottleneck arcs are saturated, since the preflow $f$ obeys the
capacity constraint. When $\mathcal{C} > e(u)$, then some bottleneck arcs are left unsaturated. This guarantees that there always exists a path $p$ or a collection of paths
$p_1 \ldots p_k$ such that flow can be returned to the source from $u$ with sufficient relabelings (sufficient relabelings will make paths from the decomposition admissible, so
therefore flow can be returned to $s$).
\end{proof}

Since Lemma \ref{lem:decomp} holds, we can immediately see that flow can be returned to the source from each active vertex after sufficient relabelings. Next, we two crucial
lemmas that guarantee the algorithm's correctness. We first prove that we can efficiently order the current edge list by $d_h$, and then that we are able to discharge all
active vertices at the conclusion of a $\Delta$-scaling phase.
\begin{lem}\label{lem:edge_order}
For each $u \in V_C$ during a $\Delta$-scaling phase, $edge\text{-}list[u]$ is in increasing order in terms of $d_h$. Formally, $d_h(u_1) \leq d_h(u_2) \leq \ldots \leq d_h(u_k)$ for
each of the $k$ incident arcs on $u$.
\end{lem}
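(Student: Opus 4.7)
The plan is a straightforward induction on the sequence of updates to $edge\text{-}list[u]$ during the $\Delta$-scaling phase. The base case is immediate from Step 5 of Algorithm \ref{alg:compact}, which populates $edge\text{-}list[u]$ with its admissible arcs already sorted by $d_h(v)$, so the invariant holds the first time $u$ is touched in the phase.

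For the inductive step, I would enumerate the only two places in the \emph{discharge} procedure where $edge\text{-}list[u]$ can change, and check each one. First, whenever a push is made along an arc $(u,v)$, that arc is obtained via $dequeue(edge\text{-}list[u])$; removing only the head of the list cannot disturb the relative order of the remaining entries, so sortedness is trivially preserved. Second, immediately after $d_\ell(u)$ is incremented, some neighbors $w$ that previously failed the admissibility test $d(u) > d(w)$ may now qualify, and the algorithm explicitly inserts these newly admissible arcs ``ordered by $d_h$.'' This is a merge of a sorted block of new arcs into an already-sorted list, which yields a sorted list; one can formalize it by noting that both lists consist of admissible arcs with well-defined $d_h(\cdot)$ keys at the time of insertion.

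The main obstacle, and what I would spend most care on, is confirming that the sort keys $d_h(v)$ are in fact \emph{stable} across the induction, so that ``sorted by $d_h$'' is unambiguous. For this I would appeal to the conventions laid out in Section \ref{sec:label}: $d_h$ is modified only by relabel operations that act on a single vertex at a time and only strictly increase the label, while the global-relabel that rebuilds $d_h$ is performed once, at Step 4 of Algorithm \ref{alg:compact}, \emph{before} any edge lists are consulted. Consequently, during the discharge of $u$ itself, $d_h(v)$ for every neighbor $v$ is fixed, and between consecutive discharges of $u$ the only change to a key $d_h(v)$ happens when some $v$ is relabeled; since $v$ is the current endpoint under examination at that moment, the entry for $v$ is already at the appropriate position at the front of $edge\text{-}list[u]$ (if present), and after it is dequeued the ordering among the surviving entries is untouched. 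Combining the base case, the two update rules, and this stability argument yields $d_h(u_1) \leq d_h(u_2) \leq \cdots \leq d_h(u_k)$ throughout the phase, as claimed.
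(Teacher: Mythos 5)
Your base case and the two update cases (dequeue, insertion of newly admissible arcs) are sound and align with the paper's sketch, which just says: initialize with the Ahuja–Orlin buckets structure, then maintain a priority queue keyed by $d_h(v)$. The trouble is in your stability argument, which is where you acknowledge the main obstacle lives — and your resolution of it does not hold up.

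You claim that whenever $d_h(v)$ changes for a neighbor $v$ of $u$, ``$v$ is the current endpoint under examination at that moment, the entry for $v$ is already at the appropriate position at the front of $edge\text{-}list[u]$.'' This is not justified. A vertex $v$ is relabeled inside \emph{discharge}$(v)$, not inside \emph{discharge}$(u)$; at the moment $d_h(v)$ increases, nothing is being dequeued from $edge\text{-}list[u]$, and there is no reason for $v$'s entry to sit at the head of that list. After $v$ is relabeled, $(u,v)$ can remain in $edge\text{-}list[u]$ with a now-stale key, so the list can become mis-sorted, and your induction breaks. The paper sidesteps this by declaring the list to be a priority queue keyed on $d_h(v)$, implicitly relying on the data structure to re-position entries when keys change (a decrease-key/increase-key, or delete-and-reinsert, which is cheap since the degree is $O(1)$). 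Your proof needs either that same appeal to a key-updating structure, or a genuine argument that stale entries never cause an out-of-order dequeue (e.g., because an increase in $d_h(v)$ makes $(u,v)$ inadmissible, so it is never selected while out of order) — but as written, the ``$v$ is at the front'' claim is simply false and leaves a gap.
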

\begin{proof}
We can use the buckets data structure described in Ahuja and Orlin \cite{ahuja89} to initially order the current edge list when we construct $G_C$. What is left to show is that
$edge\text{-}list[u]$ remains ordered after a sequence of edge additions. We can maintain this as a priority queue keyed by $d_h(v)$, for each $v \in edge\text{-}list[u]$. 
\end{proof}

We remark that additions can be accomplished in $O(1)$ time, due to the fact that $k$, the maximum in-degree, is constant. Thus, we can implement this with a simple buckets data structure
described by Ahuja and Orlin \cite{ahuja89}. This will cost $O(mn)$ across all phases. Next, we prove that discharge fulfills its promise, namely that it correctly discharges each active vertex $u \in V_A$, so that $e(u) = 0$ at the termination of a phase.

\begin{lem}\label{lem:act_correct}
For each $u \in V_A$, $e(u) = 0$ at the conclusion of the $\Delta$-scaling phase.\end{lem}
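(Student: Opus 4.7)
The plan is to prove this by establishing termination of the outer while-loop of \textit{discharge}($G_C, u$) for $u \in V_A$. Since the loop condition is precisely $e(u) > 0$, showing the loop terminates immediately yields $e(u) = 0$ at the end of the $\Delta$-scaling phase. I would split the argument by the current excess regime and lean on Lemma \ref{lem:decomp} (Path Decomposition) to guarantee that the algorithm can always make progress while $e(u) > 0$.

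First, for the high-excess regime $e(u) > \Delta/2$, I would note that the procedure executes plain push-or-relabel. By Lemma \ref{lem:decomp}, the $s \to u$ preflow can be routed back to $s$ along residual paths, so after finitely many relabels some outgoing arc becomes admissible and a push takes place. Since $d_h(u) = O(n)$ by the bound to be proved in Section \ref{sec:label}, relabels of $u$ are finite, and each push in this regime is either saturating or drops $e(u)$ by at least $\Delta/2$. Hence in finite time $e(u)$ either reaches $0$ or falls to $\leq \Delta/2$, at which point control moves to the low-excess regime.

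Second, in the low-excess regime $0 < e(u) \leq \Delta/2$, each iteration of the outer loop performs a push, a relabel of $u$, or an increment of $d_\ell(u)$. Because $d_\ell(u)$ is hard-capped at $4n-1$, after at most $4n$ increments the procedure must fall into the branch $d_\ell(u) = 4n-1$, which reverts to unrestricted push-or-relabel until $e(u) = 0$. In that branch Lemma \ref{lem:decomp} again ensures admissible $u \to s$ paths can be produced by sufficiently many relabels, and since $d_h(u) = O(n)$ and saturating pushes out of $u$ are finite (the degree of $u$ in $G_C$ equals its degree in $G_f$ by Lemma \ref{lem:deg_same}), this final stretch drains $u$ completely in finite time.

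The main obstacle I anticipate is showing that between two consecutive increments of $d_\ell(u)$ the inner while-loop does not spin indefinitely without exiting. I would identify three monotone and bounded quantities governing each pass: (i) relabels strictly increase $d_h(u)$, (ii) saturating pushes strictly decrease the residual capacity of the dequeued arc, and (iii) the first nonsaturating push exits the loop and sets $nonsat_u[d_\ell(u)] := true$, forcing the next outer iteration to increment $d_\ell(u)$. Using Lemma \ref{lem:edge_order} to keep $edge\text{-}list[u]$ correctly ordered under relabels, these three observations imply that the inner loop terminates after $O(n)$ steps, every increment of $d_\ell(u)$ is eventually reached, and the escape-hatch branch $d_\ell(u) = 4n-1$ is triggered whenever $e(u)$ has not already dropped to $0$. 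Combined with the two regime analyses, this establishes that the outer loop exits with $e(u) = 0$, completing the proof.
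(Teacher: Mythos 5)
Your proof is essentially correct and arrives at the same conclusion, but it is organized as a systematic termination argument around the \texttt{discharge} while-loop rather than as the paper's collection of observations. You split on the excess regime ($e(u) > \Delta/2$ versus $e(u) \le \Delta/2$), track the cap on $d_\ell$ to show control must eventually reach the unrestricted-push branch, and invoke Lemma~\ref{lem:decomp} explicitly at each stage to guarantee that a finite number of relabels exposes an admissible arc; the paper instead makes this dependence on Lemma~\ref{lem:decomp} only implicitly in the sentence preceding the proof. What the paper emphasizes, and what your argument does not engage with, is the significance of keeping $edge\text{-}list[u]$ ordered by $d_h$: the paper's proof argues that without this ordering an arc $(u,s)$ could become admissible early (once $d(u) > n$ via $d_\ell$ increments) and flow would be returned to the source before being tried toward the sink. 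You use Lemma~\ref{lem:edge_order} only as a data-structure correctness fact. For the narrow conclusion $e(u)=0$ this is arguably defensible, since a push to $s$ still drains $u$; but the paper evidently views the ordering as load-bearing for the discharging scheme, and your argument would be stronger if it either explained why the ordering is not needed for termination, or incorporated the paper's observation that under the relaxed admissibility $d(u) > d(v)$ an admissible arc \emph{stays} admissible across $d_\ell$ increments (this is what justifies dequeueing from a persistent $edge\text{-}list$ at all). Finally, the paper derives the concrete bound of $2n$ extra low-capacity pushes once $d_\ell(u) = 4n-1$ from $\deg[u] = O(1)$ and the $2n$ distinct values of $d_h$; your proof asserts finiteness of this final stretch but does not compute the bound, which is what Lemma~\ref{lem:small_nonsat_om} later needs.
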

\begin{proof}
We first make a remark about an invariant that holds for the duration of a ``discharge'' on some $u \in V_C$; once an arc 
$(u,v)$ incident to $u$ becomes admissible, it will {\it stay} admissible while $u$ is selected by discharge (since $d(u) > d(v)$,
not $d(u) = d(v) + 1$). Therefore, despite the increments to $d_\ell$, we will still be able to select arcs from $edge\text{-}list[u]$ in  order to
send flow on, even if they become admissible under a previous increment.

By Lemma \ref{lem:edge_order}, we know that the current edge list is in increasing order by $d_h$.
 Since $d_h$ satisfies the validity condition (Lemma \ref{lem:h1}), it
provides an estimate 
of the distance from any $u$ to the sink $t$. The ordering of $edge\text{-}list$ becomes important in the following case; consider a vertex $u$ that is selected by {\it discharge}. Then there may
be an edge $(u,s)$ within $G_C$ that can become admissible as soon as $d(u) = 2n+1$. However, due to the ordering of the edges, a push along $(u,s)$ will come {\it after} all edges that
have a chance of sending excess ``forward'' (i.e. to the sink $t$) before it is returned to $s$. This is important because of our altered labeling $d = d_h + d_\ell$, we may relabel
at different times. Thus, we wish to avoid returning flow to the source $s$ prematurely.

The final component of the correctness comes from the two groups in
which we complete small nonsaturating pushes. We first make one low-capacity nonsaturating push for each value of $d_\ell$, while $d_\ell(u) \leq 4n-1$. Afterwards, we will discharge $u$ so that $e(u) = 0$
after it reaches the maximum label, ignoring the limitation of one low-capacity nonsaturating push per value of $d_\ell$. However, since $\deg[u] = O(1)$ and there are only $2n$ distinct values of $d_h(u)$ from which $d(v)$ could
exceed $d(u)$, there will be at most $2n$ such small nonsaturating pushes made. Therefore, we conclude that all $u \in V_A$ are completely discharged.
\end{proof}

\theoremstyle{definition}

The previous two lemmas underscore the importance that the maximum degree of a vertex is a constant $k$. This permits us to efficiently maintain the ordering of the current-edge list,
so we always send flow towards the sink $t$ before the source $s$. If we allowed general networks, where $\max\set{\deg[u]|u \in V} = O(n)$, then each push would potentially
take $O(\log n)$ time due to the nature of the priority queue implementation and worst-case bounds on the data structure.

We now state a main result of our paper regarding the number of active vertices in the compact network.

\begin{lem}[Active Vertices Lemma]
\label{lem:act_in_comp} For a vertex $u \in V_C^i$ such that $u$ is $\Delta_i$-active, if  $u \in V_C^{i+1}$  then $u$ is incident to a $\Delta_{i+1}$-large or $\Delta_{i+1}$-favorable arc.\footnote{We denote the set
of vertices in the compact network during the $i$th scaling phase as $V_C^i$.}\end{lem}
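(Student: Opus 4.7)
The plan is to use Lemma \ref{lem:act_correct} (which guarantees that every $\Delta_i$-active vertex ends the $i$th phase with zero excess) together with the definition $V_C = V_{SC} \cup V_A$ to rule out membership of $u$ in $V_A^{i+1}$, forcing $u \in V_{SC}^{i+1}$ and hence incidence to a $\Delta_{i+1}$-favorable or $\Delta_{i+1}$-large arc.

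First I would observe that since $u$ is $\Delta_i$-active, by hypothesis $u \in V_A^i$, so $\Delta_i/2 < e(u) \leq \Delta_i$ at the start of phase $i$. The \emph{discharge} procedure applied to active vertices executes its outer \textbf{while} loop until $e(u) = 0$ (see the first branch of \emph{discharge} in Section \ref{sec:disc}), and Lemma \ref{lem:act_correct} certifies that this termination condition is actually reached: every $u \in V_A^i$ satisfies $e(u) = 0$ at the conclusion of phase $i$. Since no pushes or capacity transfers occur between the end of phase $i$ and the construction of $G_C^{i+1}$, the excess $e(u)$ at the beginning of phase $i+1$ is still $0$.

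Next I would invoke the definition of $V_A^{i+1}$: a vertex belongs to $V_A^{i+1}$ only if $\Delta_{i+1}/2 < e(u) \leq \Delta_{i+1}$. Because $e(u) = 0 \not> \Delta_{i+1}/2$, we conclude $u \notin V_A^{i+1}$. By the assumption $u \in V_C^{i+1}$ and the decomposition $V_C^{i+1} = V_{SC}^{i+1} \cup V_A^{i+1}$, it follows that $u \in V_{SC}^{i+1}$. By the definition of $V_{SC}$ given in Section \ref{sec:compact}, this is exactly the statement that $u$ is incident to a $\Delta_{i+1}$-favorable or $\Delta_{i+1}$-large arc, completing the argument.

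The proof is almost entirely a bookkeeping consequence of Lemma \ref{lem:act_correct}, so there is no substantial obstacle provided we are careful that nothing changes $e(u)$ between phases. The one place I would double-check is that the compact-network construction at the start of phase $i+1$ (Steps 1--5 of Algorithm \ref{alg:compact}) does not itself modify excesses: it only relabels, reorders edge lists, and performs capacity transfers along residual paths that avoid $V_A^i$-vertices (by Lemma \ref{lem:deg_same} the degree structure at active vertices is preserved, and \emph{create-all-pseudoarcs} is called on $G^{ab}$ and on $G_f$ restricted to $V \setminus V_A$). Since capacity transfers only redistribute residual capacity along non-active internal vertices and do not change vertex excesses, the excess of $u$ remains $0$ when $V_C^{i+1}$ is assembled, validating the conclusion.
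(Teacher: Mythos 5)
Your proposal is correct and takes essentially the same approach as the paper: both arguments rely on the \emph{discharge} procedure (certified by Lemma \ref{lem:act_correct}) to establish $e(u)=0$ at the end of phase $i$, conclude $u\notin V_A^{i+1}$, and then use the decomposition $V_C = V_{SC}\cup V_A$ together with the definition of $V_{SC}$ to force $u\in V_{SC}^{i+1}$. Your additional check that the compact-network construction in Algorithm \ref{alg:compact} does not alter vertex excesses is a careful elaboration of a step the paper leaves implicit, but it is the same argument.
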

\begin{proof}
If some vertex $u \in V_C^i$ is active during the $\Delta_i$-scaling phase, then by the {\it discharge} procedure, $e(u) = 0$ will be true when the phase terminates. That is, $u$ will not be active in the $\Delta_{i+1}$-scaling
phase. The only way $u \in V_C^{i+1}$ holds is if $u \in V_{SC}^{i+1}$; that is, it was included as an endpoint of a $\Delta_{i+1}$-favorable or large arc. Thus, we can see that it will only take one phase for an active
vertex $u \in V_A$ to become a vertex contained in the set $V_{SC}$ in the next scaling phase. 
\end{proof}

Note that if we did not {\it completely} discharge vertices that were active at the start of a $\Delta$-scaling phase, then we could have $|V_C| = \Omega(n)$ active vertices in the compact network during {\it each}
scaling phase. Rather, Lemma \ref{lem:act_in_comp} allows us to bound the number of vertices in the compact network across all scaling phases in Theorem \ref{thm:om_comp}. Further, since
we shift flow along each vertex within a pseudoarc, proof of the following lemma is immediate.

\begin{lem}
A push along a pseudoarc $(u,w) = \ang{u,v_1,v_2,$
$\ldots,v_k,w}$ will not alter the excesses of any internal vertex $v_1 \ldots v_k$ that is included within the pseudoarc.
\end{lem}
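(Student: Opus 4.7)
The plan is to unfold what a pseudoarc push actually means in the original residual network, and then apply flow conservation at each internal vertex. By Lemma \ref{lem:max_bfs} (the Capacity Transfer Lemma), a push of $\delta$ units along the pseudoarc $(u,w) = \ang{u,v_1,v_2,\ldots,v_k,w}$ corresponds to a well-defined sequence of pushes of $\delta$ units along each of the constituent residual arcs $(u,v_1), (v_1,v_2), \ldots, (v_{k-1},v_k), (v_k,w)$ in $G_f$; moreover this sequence respects the capacity constraints. So I can analyze the effect on each internal vertex's excess by summing the contributions of these constituent pushes.

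For a fixed internal vertex $v_i$ with $1 \leq i \leq k$, exactly two of the constituent pushes touch $v_i$: the push along the incoming arc (namely $(u,v_1)$ when $i=1$ and $(v_{i-1},v_i)$ otherwise) contributes $+\delta$ to $e(v_i)$, and the push along the outgoing arc (namely $(v_i,v_{i+1})$ when $i<k$ and $(v_k,w)$ when $i=k$) contributes $-\delta$ to $e(v_i)$. The net change in $e(v_i)$ is therefore $+\delta - \delta = 0$. Since the only vertices that participate in more than two constituent pushes would be vertices appearing twice on the path, and the path is simple by construction in \emph{create-all-pseudoarcs}, this accounting is exhaustive for every internal $v_i$. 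Hence the excesses of $v_1,\ldots,v_k$ are all preserved, with only $e(u)$ decreasing by $\delta$ and $e(w)$ increasing by $\delta$.

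I do not anticipate any real obstacle here: the content of the claim is essentially the definition of how capacity transfer implements a pseudoarc push, combined with the trivial cancellation at interior path vertices. The only subtlety worth being explicit about is that the constituent pushes must be applied in path order (so that intermediate excesses temporarily exist but cancel immediately), which is guaranteed by the construction of pseudoarcs via \emph{transfer-capacity} described in Appendix \ref{app:tc}.
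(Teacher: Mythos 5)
Your proof is correct and matches the paper's approach: the paper dismisses this lemma as immediate because ``we shift flow along each vertex within a pseudoarc,'' which is precisely the flow-conservation cancellation you spell out at each internal $v_i$ ($+\delta$ in, $-\delta$ out). Your version is simply more explicit about unfolding the pseudoarc push into its constituent residual-arc pushes and invoking the simplicity of the path.
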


This guarantees that no vertex that appears on a pseudoarc will become active as a result of a pseudoarc push.

\subsection{Size of the Compact Network}\label{sec:size}
\theoremstyle{plain}
\newtheorem{thm}{Theorem}[section]
\begin{thm}[Size of the Compact Network]\label{thm:om_comp} There are $O(m)$ vertices in the compact network across all scaling phases.\end{thm}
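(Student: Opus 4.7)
The plan is to split the compact network vertex set into its two disjoint constituents $V_C^i = V_{SC}^i \sqcup V_A^i$ and bound the two aggregates $\sum_i |V_{SC}^i|$ and $\sum_i |V_A^i|$ separately.

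For the $V_{SC}$ term, I would charge each vertex in $V_{SC}^i$ to an incident $\Delta_i$-favorable or $\Delta_i$-large arc, so that $|V_{SC}^i|$ is at most twice the number of such arcs present in phase $i$. By Lemma \ref{lem:fav_in_comp}, any $\Delta_i$-favorable arc is $\Delta_{i+3}$-abundant after at most three more phases; an analogous argument shows a $\Delta_i$-large arc becomes abundant in $O(1)$ phases as well (its compaction capacity already exceeds $\Delta_i/4$, so halving $\Delta$ a constant number of times forces $r > \Delta$). By Lemma \ref{lem:stay_abund} an abundant arc never returns to the favorable or large categories. Hence each of the $m$ arcs contributes to $V_{SC}$ in only $O(1)$ phases, and $\sum_i |V_{SC}^i| = O(m)$.

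For the $V_A$ term, the key idea is a charging/injectivity argument. For each pair $(v,i)$ with $v \in V_A^i$, let $k(v,i)$ be the largest $j<i$ such that $v \in V_C^j$, if one exists. The central claim is that whenever $k(v,i)$ exists, we must have $v \in V_{SC}^{k(v,i)}$ rather than $v \in V_A^{k(v,i)}$. Otherwise Lemma \ref{lem:act_correct} forces $e(v)=0$ at the end of that phase, and in each intermediate phase $j$ with $k(v,i) < j < i$ the excess $e(v)$ is untouched: $v$ is not an endpoint of any arc in $G_C^j$ (by choice of $k$), and while $v$ may appear as an interior vertex of a pseudoarc, pseudoarc pushes leave interior excesses invariant. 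This would force $e(v)=0$ at the start of phase $i$, contradicting $v \in V_A^i$.

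Next I would observe that the map $(v,i) \mapsto (v,k(v,i))$ is injective on its domain: if $(v,i_1)$ and $(v,i_2)$ with $i_1<i_2$ shared the same target, then $v \in V_A^{i_1} \subseteq V_C^{i_1}$ would force $k(v,i_2) \ge i_1$, strictly larger than $k(v,i_1)$. Combined with the previous claim, this yields $\sum_{i} |\{v \in V_A^i : k(v,i) \text{ exists}\}| \le \sum_k |V_{SC}^k| = O(m)$. The remaining appearances have no such witness $k(v,i)$, which forces $e(v)$ at the start of phase $i$ to equal its initialization value; only vertices adjacent to $s$ can carry that much initial excess, so at most $n$ vertices contribute, and each only once (after any first active appearance, $k(v,\cdot)$ exists thereafter). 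Adding everything, $\sum_i |V_C^i| = O(m+n) = O(m)$. The main obstacle will be pinning down the ``excess is invariant when $v \notin V_C^j$'' property cleanly: one must verify that the only ways $v$'s excess could move are a push on an incident arc in $G_C^j$ (ruled out since $v$ is not an endpoint) or participation as an internal pseudoarc vertex (ruled out by the pseudoarc-push invariant). Once this is locked down, the rest of the proof is a one-line charging calculation.
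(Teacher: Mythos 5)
Your argument is correct and rests on the same two lemmas the paper invokes (Lemma~\ref{lem:fav_in_comp} to bound the $V_{SC}$ contribution, Lemma~\ref{lem:act_in_comp} to handle active vertices), but it is substantially more careful than the paper's own one-paragraph proof, which merely asserts that the theorem ``immediately follows'' and that an active vertex ``remains active for $O(1)$ phases.'' That phrasing glosses over precisely the scenario you address head-on: a vertex can leave $V_C$ entirely for several phases (with its excess frozen, since it is neither an arc endpoint in $G_C$ nor charged by pseudoarc pushes) and then re-enter as an active vertex once $\Delta$ has halved enough times. Your injective map $(v,i)\mapsto(v,k(v,i))$ charges each such reappearance to a distinct $V_{SC}$-membership, and your observation that witness-free active appearances can only be vertices adjacent to $s$ (each exactly once) cleanly disposes of the base case. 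The one thing worth double-checking in a final write-up is that your ``frozen excess'' invariant also survives \emph{restore-all-flows}, i.e.\ that transforming the compact preflow back into a residual preflow at the end of a phase never alters $e(v)$ for $v\notin V_C$; this holds because the restoration simply redistributes pseudoarc flow along the underlying paths, and interior vertices still have flow-in equal to flow-out. With that noted, your version is a genuine tightening of the paper's argument rather than a different route.
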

\begin{proof}
This theorem immediately follows from Lemmas \ref{lem:fav_in_comp} and \ref{lem:act_in_comp}. In particular, Lemma \ref{lem:act_in_comp} shows that we can convert an active vertex in the $\Delta_i$ scaling phase
to a vertex in $V_{SC}$ in the $\Delta_{i+1}$ scaling phase; it remains active for $O(1)$ phases. Thus, the theorem holds.
\end{proof}

This theorem underscores the significance of our algorithm; superficially, it may seem that we maintain paths of ``medium'' and ``large'' arcs so that we can conduct sequences of saturating and/or nonsaturating pushes. However,
the idea of compaction is what leads to our improved running time. Because we can bound the total number of vertices that will be eligible for flow operations 
at $O(m)$ across all phases, we are able to reduce the number of nonsaturating pushes to $O(mn)$ across
all scaling phases. An important distinction between our algorithm and Orlin's \cite{orlin13} is that we do not explicitly ``compact'' active vertices $u \in V_A$ out of the network after $O(1)$ scaling phases. Instead, we show that
they are ``converted'' (in a sense) to a vertex $u \in V_{SC}$ if they remain in the network after they are discharged. This correspondence, while not permitting us to bound the number of active vertices at $O(n)$, does match
the number of vertices in $V_{SC}$, and still shows that the network is indeed compact.

\subsection{Labelings in Compact Networks}\label{sec:label}
In this section, we describe how to maintain a valid labeling during each $\Delta$-scaling phase. We resolve two difficulties, namely maintaining a valid labeling between $G_C$
and $G_f$. To accomplish this, we use a technique due to Goldberg, et. al. \cite{gold97,gold97-2}, which generates a valid labeling from a BFS-tree. Next, we show how we use distinct
labelings for high-capacity nonsaturating pushes ($d_h$), and for low-capacity nonsaturating pushes ($d_\ell$) to count the number of nonsaturating pushes.

\begin{pseudocode}{relabel}{$G,u$}
\item \comment{\textbf{Applies when} $\not\exists$ admissible arc $(u,v) \in A_C$}
\item $d_h(u) := \min\set{d_h(v) | (u,v) \in A} + 1$;
\item {\bf if} $d(u) < d(v)$ {\bf then} ${d_\ell(u) := d_\ell(v)}$;
\end{pseudocode}

The {\it relabel} procedure in our algorithm contains a few notable departures from the analagous procedure in \cite{gold88,ahuja89}. We note that in lines 02, $d(u)$ may still be less than
$d(v)$ after a relabeling. This is due to the fact that $d_\ell(v)$ could have increased a large amount when compared to $d_\ell(u)$, and even though $d_h(u) = d_h(v) + 1$, the discrepancy exists
in the overall labeling $d$. We resolve this by performing a ``gap relabel" on $d_\ell$ if $d(u) < d(v)$ (line 03 of the {\it relabel} procedure).

The next lemma shows that $d_h$ is bounded above by $2n$, for all $u \in V$, and is proven in \cite{gold88}.

\begin{lem}\label{lem:h1} For any $u \in V$, $d_h(u) < 2n.$\end{lem}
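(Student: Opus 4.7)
The plan is to follow the classical Goldberg--Tarjan distance-label argument, but adapted to $d_h$ rather than to the overall label $d$. The first observation is that $d_h$ is modified at only two moments: at initialization, where $d_h(s)=n$ and $d_h(u)=0$ for every other $u$, and inside the \emph{relabel} procedure, which by the guard of \emph{discharge} is invoked on a vertex $u$ only while $u$ has positive excess. The source $s$ is never relabeled (its label stays at $n<2n$), and the sink $t$ never has positive excess, so both endpoints trivially satisfy the bound. It therefore suffices to show $d_h(u)<2n$ at every moment $u$ is eligible to be relabeled, i.e.\ whenever $e(u)>0$.

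Next I would use the invariant that $d_h$ satisfies the validity condition on the residual network $A_f$ throughout the algorithm. This invariant is established initially by \emph{global-relabel} (Step 4 of Algorithm \ref{alg:compact}) and preserved by \emph{relabel}, which sets $d_h(u):=\min\{d_h(v)\mid (u,v)\in A\}+1$, the same rule as in \cite{gold88}. Pushes along original admissible arcs or internal arcs of pseudoarcs only remove arcs from $A_f$ or add reverse arcs $(v,u)$ with $d_h(v)\le d_h(u)$, so neither breaks the inequality $d_h(x)\le d_h(y)+1$ on any surviving residual arc.

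The crux is then a single application of the Path Decomposition Lemma (Lemma \ref{lem:decomp}): since $e(u)>0$, there exists a simple path from $u$ back to $s$ in $G_f$, consisting of at most $n-1$ arcs. Iterating the validity inequality $d_h(x)\le d_h(y)+1$ along this path gives
\[
d_h(u)\;\le\;d_h(s)+(n-1)\;=\;n+(n-1)\;=\;2n-1,
\]
so $d_h(u)<2n$, as required.

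The main point requiring care is not the counting itself, which mirrors \cite{gold88} verbatim, but the justification that the validity invariant on $d_h$ is not disturbed by our non-standard features: the relaxed admissibility criterion $d(u)>d(v)$, the separate labeling $d_\ell$, and pushes along \emph{pseudoarcs}. Since $d_\ell$ is never consulted inside \emph{relabel}, and since a pseudoarc push is, by the Capacity Transfer Lemma (Lemma \ref{lem:max_bfs}), equivalent to a sequence of ordinary residual pushes, the standard argument for preservation of validity applies to $d_h$ and the bound $d_h(u)<2n$ follows.
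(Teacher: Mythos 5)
Your argument follows the same route as the paper's: establish that $d_h$ satisfies the validity condition throughout a phase, observe that every vertex with positive excess has a residual path of length at most $n-1$ back to $s$, and chain the inequality $d_h(x)\le d_h(y)+1$ along that path to obtain $d_h(u)\le d_h(s)+(n-1)=2n-1$. You fill in more of the scaffolding than the paper does (the paper's proof is essentially a three-sentence sketch that asserts the path's existence and the validity of $d_h$, then concludes), and your separation of concerns — first showing $d_h$ is touched only at initialization and in \emph{relabel}, then reducing to the case $e(u)>0$ — is a cleaner presentation of the same idea.

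One step you state more confidently than it is justified: you assert that a push along an admissible arc $(u,v)$ only adds reverse arcs $(v,u)$ with $d_h(v)\le d_h(u)$. Under the classical criterion $d(u)=d(v)+1$ with a single label this is immediate, but here admissibility is $d(u)>d(v)$ where $d=d_h+d_\ell$, and $d(u)>d(v)$ does not by itself imply $d_h(u)\ge d_h(v)$ — the inequality could be carried entirely by $d_\ell$. If a residual arc $(u,v)$ has $d_h(v)>d_h(u)+1$ (which the validity of $(u,v)$, namely $d_h(u)\le d_h(v)+1$, does not rule out), then pushing on it would create a reverse arc $(v,u)$ that violates $d_h$-validity. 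To be fair, the paper's own Theorem \ref{thm:valid_label} also passes over this case ("Applying this argument inductively, we see that $d_h$ remains valid across a sequence of push and relabel operations"), so you are inheriting an imprecision rather than introducing one; but since you chose to spell out the invariant's preservation explicitly, this is the place where your write-up, like the paper's, owes the reader an argument — e.g.\ an appeal to the maintenance of $edge\text{-}list[u]$ ordered by $d_h(v)$, or to the gap-relabeling of $d_\ell$ in line~03 of \emph{relabel} — for why admissible pushes cannot in fact move flow against the $d_h$ gradient by more than one.
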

\begin{proof}
Since we know that there exists a path from $u$ to $s$ for each $u \in V_A$, we consider the case that gives us the maximum label. Recall that we only modify by additions of 1 ($d_h(u) = d_h(v) + 1$). 
Thus, $d_h$ remains a valid labeling during the execution of a $\Delta$-scaling phase. A simple path has length at most $n-1$, so therefore we have $d(s) = n$, so $\max d_h = 2n-1$. 
\end{proof}

Since we generate a valid distance labeling at the beginning of the $\Delta$-scaling phase with {\it global-relabel}, and only increase $d_h$ when there are no incident admissible arcs, this will behave in the same manner
as the labeling in the generic push-relabel algorithm \cite{gold88}, as well as the excess-scaling algorithm \cite{ahuja89}. The distance label will only increase by 1 due to each {\it relabel} operation, so proof of the lemma
still follows.

Next, we must consider increments in the distance label from {\it discharge} that result from a low-capacity nonsaturating push.
The following lemma will show that $d_\ell$ also will not increase above $2n$.

\begin{lem}\label{lem:h2} For all $u \in V$, there is an increase in the overall distance $d(u)$ by $4n-1$ due to both gap-relabelings and low-capacity nonsaturating pushes. \end{lem}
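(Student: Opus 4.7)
The plan is to show that the cumulative increase in $d(u)$ due to gap-relabelings and low-capacity nonsaturating pushes equals the cumulative increase in $d_\ell(u)$, and to bound the latter by the ceiling $4n-1$. Inspecting the pseudocode, the only lines that modify $d_\ell$ are line 10 of \emph{discharge} (the $+1$ increment that precedes each low-capacity nonsaturating push) and line 03 of \emph{relabel} (the gap-relabel $d_\ell(u) := d_\ell(v)$); neither line touches $d_h$, so tracking $d_\ell$ alone is enough.

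First I would establish that $d_\ell(u)$ is monotonically non-decreasing. For low-capacity nonsaturating pushes this is immediate since the update is $+1$. For gap-relabels the potentially worrisome case is $d_\ell(v) < d_\ell(u)$, which would \emph{lower} $d_\ell(u)$. However, the guard $d(u) < d(v)$ of line 03 is evaluated \emph{after} line 02 has set $d_h(u) := d_h(v)+1$, so expanding the guard under this post-condition yields $d_\ell(u) + d_h(v) + 1 < d_\ell(v) + d_h(v)$, i.e.\ $d_\ell(v) > d_\ell(u) + 1$. Hence whenever line 03 fires, the assignment $d_\ell(u) := d_\ell(v)$ strictly raises $d_\ell(u)$.

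Next, by induction on the sequence of operations, I would show that $d_\ell(w) \leq 4n-1$ holds uniformly for all $w \in V$ at every point in time. The base case is immediate from \emph{initialize}, which sets $d_\ell(w) := 0$. For the inductive step, a low-capacity nonsaturating push increments $d_\ell(u)$ only inside the branch of \emph{discharge} guarded by $d_\ell(u) \neq 4n-1$ (line 05 diverts execution when the bound is already attained), so after the increment $d_\ell(u) \leq 4n-1$. A gap-relabel sets $d_\ell(u) := d_\ell(v)$, which by the inductive hypothesis is at most $4n-1$. Combining monotonicity with the uniform ceiling $4n-1$ and the starting value $0$, the cumulative increase in $d_\ell(u)$ over the algorithm is bounded by its final value, which is at most $4n-1$. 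Since no other operation of the two counted alters $d(u) = d_h(u) + d_\ell(u)$, this gives the desired bound.

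The main obstacle is the monotonicity argument for gap-relabels: without the post-line-02 analysis showing $d_\ell(v) > d_\ell(u)$ whenever line 03 fires, cancellations between decreases and subsequent re-increments of $d_\ell$ could in principle make the cumulative increase exceed the final value, invalidating the direct ``final value equals cumulative increase'' bound. A secondary subtlety worth flagging is the ``second group'' of low-capacity nonsaturating pushes permitted by \emph{discharge} once $d_\ell(u) = 4n-1$; by design these pushes do not increment $d_\ell$, so they are irrelevant to this bound even though they contribute to the counting arguments used elsewhere.
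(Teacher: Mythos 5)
Your proof is correct and takes essentially the same approach as the paper's: both bound the quantity by observing that it is precisely the cumulative increase in $d_\ell(u)$, that $d_\ell$ starts at $0$, and that the \emph{discharge} procedure caps $d_\ell$ at $4n-1$. The paper's stated proof is very terse (``each increment in $d_\ell$ increases the label by $1$\dots Since we limit $d_\ell(u)$ at $4n-1$, it is clear that this value bounds the increase'') and does not explicitly justify why the cap implies the bound on the \emph{cumulative} increase. You correctly identify that this requires monotonicity of $d_\ell$, and your analysis of line~03 of \emph{relabel} --- expanding the guard $d(u) < d(v)$ after the line~02 update to get $d_\ell(v) > d_\ell(u) + 1$, so that the gap-relabel is strictly increasing --- supplies exactly the missing step, and is a genuine improvement over the paper's argument. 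Your inductive verification of the $4n-1$ ceiling is likewise sound.

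Two minor nits. First, your final sentence says ``over the algorithm,'' but $d_\ell$ is reset to $0$ by \emph{global-relabel} at each phase boundary, so the bound you prove (and the one the paper wants) is per $\Delta$-scaling phase; the per-phase reading is what is used downstream in Lemma~\ref{lem:small_nonsat_om}. Second, your derivation $d_h(u) = d_h(v) + 1$ after line~02 tacitly assumes the $v$ tested in line~03 is the argmin from line~02; the pseudocode is slightly underspecified on this point, though that reading is clearly the intended one and is necessary for the argument to go through. Neither affects the correctness of the overall proof.
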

\begin{proof}
Each increment in $d_\ell$ increases the label by 1. When we need to make an additional nonsaturating push, the {\it discharge} procedure will increment the label. Since
we limit $d_\ell(u)$ at $4n-1$, it is clear that this value bounds the increase.
\end{proof}

The previous two lemmas allow us to state a final conclusion that bounds the total increase in the overall label $d$.

\begin{lem}\label{lem:height_bound} For each $u \in V$, $d_h < 2n$, and the increase to $d_\ell$ due to low-capacity nonsaturating pushes is $4n$.
Moreover, there are $6n^2 = O(n^2)$ relabelings throughout the algorithm. \end{lem}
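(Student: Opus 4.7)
The first two claims are essentially immediate: the bound $d_h(u) < 2n$ for all $u \in V$ is exactly Lemma \ref{lem:h1}, and the bound of (at most) $4n$ on the total increase in $d_\ell(u)$ from low-capacity nonsaturating pushes follows directly from Lemma \ref{lem:h2} (which gives $4n-1$, so $4n$ is a loose upper bound). My plan is therefore to spend the bulk of the proof on the third claim, bounding the total number of relabelings across the entire execution of the algorithm by $6n^2$.

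The approach I would take is a standard amortized counting argument using the overall label $d = d_h + d_\ell$ as a potential. Every \textit{relabel} operation either increments $d_h(u)$ (line 02, which sets $d_h(u)$ to at least the previous $d_h(u) + 1$ since $u$ had no admissible outgoing arc) or, in the gap-relabel branch (line 03), can increase $d_\ell(u)$; in either case the \textit{sum} $d_h(u) + d_\ell(u)$ strictly increases. Thus each relabel operation can be charged to at least one unit of increase in the potential $\Phi = \sum_{u \in V} d(u)$.

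To complete the counting, I would bound $\Phi$ from above across the entire algorithm. From Lemma \ref{lem:h1}, the total $d_h$-contribution satisfies $\sum_{u \in V} d_h(u) < 2n \cdot n = 2n^2$ over the course of the algorithm, since each $d_h(u)$ starts at $0$ (or at $n$ for $s$) and is never decreased, and is bounded above by $2n$. From Lemma \ref{lem:h2}, the total $d_\ell$-contribution satisfies $\sum_{u \in V} d_\ell(u) < 4n \cdot n = 4n^2$, since each $d_\ell(u)$ is bounded above by $4n$. Summing yields a total increase in $\Phi$ of at most $6n^2$, and since each relabel contributes at least one unit to this increase, the number of relabelings is at most $6n^2 = O(n^2)$.

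The only subtle point, which I expect to be the main obstacle in writing this cleanly, is justifying that the gap-relabel branch (line 03 of \textit{relabel}) is also covered by this counting scheme. The concern is that a single gap-relabel could jump $d_\ell(u)$ by a large amount in one operation, but this is fine because (i) we are charging relabelings by the \textit{increase} they produce in $\Phi$ rather than counting them one-for-one, and more importantly (ii) Lemma \ref{lem:h2} already bounds the cumulative increase in $d_\ell(u)$ due to gap-relabelings together with low-capacity nonsaturating pushes at $4n-1$ per vertex, so the aggregate $d_\ell$-budget used across all gap-relabels at $u$ is still bounded by $4n$. With these ingredients, the three claims combine to give precisely the stated bounds.
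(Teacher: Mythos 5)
Your proposal is correct and takes essentially the same approach as the paper. The paper's proof is a single line ("immediate from Lemmas \ref{lem:h1} and \ref{lem:h2}"), and your expansion — charging each relabel to at least one unit of increase in the potential $\Phi = \sum_{u}\bigl(d_h(u) + d_\ell(u)\bigr) < 6n^2$ — is precisely the standard push-relabel counting argument the paper is gesturing at.
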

\begin{proof} Proof is immediate from Lemmas \ref{lem:h1} and \ref{lem:h2}. \end{proof}

It is important to remark that even though we maintain two distinct vectors ($d_h$ and $d_\ell$) for the labeling, the algorithm's correctness remains. More formally, we can ensure that we will
never make a nonsaturating push such that the value of the potential function is increased. This is due to our initial definition of $d(u) = d_h(u) + d_\ell(u)$. We will allow a push to occur only if
$d(u) > d(u) \Rightarrow d_h(u) + d_\ell(u) > d_h(v) + d_\ell(v)$. We will {\it relabel} $u$ until an arc incident to $u$ is admissible, an operation which is bounded at $2n$. Therefore, the potential
function will still decrease with each nonsaturating push, and correctness remains.

Next, we describe a variant of the global relabel procedure of Goldberg. This algorithm relies on breadth-first search to traverse the graph and 
generate a labeling based on the levels in the BFS tree.

\begin{pseudocode}{global-relabel}{$G$}
\item Run reverse BFS from $t$ and let $d_t$
\item \stab denote the distances from the BFS tree;
\item Run BFS from $s$ and let $d_s$ denote the
\item \stab distances from the BFS tree;
\item {\bf for all} $u \in V$ {\bf do} 
\item \tab $d_h(u) := \min\set{d_s(u)+n,d_t(u)}$;
\item \tab $d_\ell(u) := 0$;
\item {\bf return} $d$;
\end{pseudocode}

We will use the {\it global-relabel} procedure twice. Once we build the network, we generate a valid labeling, incorporating psuedoarcs. Then, we can simply push and relabel vertices as in the generic algorithm. The second time we apply it is when transforming the compact preflow $f$ into the residual preflow $f'$. The {\it global-relabel} procedure costs $O(m)$ time per scaling phase, since we run BFS twice. Now, we prove that the algorithm maintains a valid labeling across all scaling phases. 

\begin{thm}[Valid Labeling]\label{thm:valid_label} Within each $\Delta$-scaling phase, $d_h$ obeys the validity condition. Moreover, at the end of each phase, we generate a labeling for $G_f$ that obeys the validity condition. \end{thm}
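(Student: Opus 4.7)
The plan is to split the theorem into its two claims and treat each one separately.

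For the first claim, that $d_h$ obeys the validity condition throughout each $\Delta$-scaling phase, I would proceed by induction on the sequence of operations performed during the phase. The base case is Step 4 of Algorithm \ref{alg:compact}, where \emph{global-relabel} is invoked on $G_C$. That procedure runs BFS from $s$ and $t$ in $G_C$ and sets $d_h(u) = \min\{d_s(u)+n,\, d_t(u)\}$; a standard argument (see \cite{gold97}) shows that BFS-based labels satisfy $d_h(u) \leq d_h(v)+1$ for every arc $(u,v) \in A_C$, including pseudoarcs, which are treated as ordinary arcs in $G_C$. For the inductive step, only \emph{push} and \emph{relabel} can modify the labeling. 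A push along an admissible arc does not alter $d_h$, so validity is trivially preserved. When \emph{relabel} is applied to some $u$, it sets $d_h(u) := \min\{d_h(v)\mid (u,v)\in A_C\}+1$; I would check validity separately for the two kinds of arcs incident to $u$: for any outgoing arc $(u,v')$, the defining minimum gives $d_h(u) \leq d_h(v')+1$ directly, and for any incoming arc $(w,u)$, since the old label already satisfied $d_h(w) \leq d_h(u)+1$ and $d_h(u)$ only increases, the inequality continues to hold.

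For the second claim, that at the end of every phase the labeling generated for $G_f$ obeys the validity condition, I would appeal directly to the \emph{global-relabel} procedure. After transforming the compact preflow back into a residual preflow on $G_f$ using the dynamic-tree machinery of Appendix \ref{app:tc}, we run BFS from $s$ on the forward residual network and BFS from $t$ on the reverse residual network, and set $d_h(u) := \min\{d_s(u)+n,\, d_t(u)\}$ with $d_\ell(u) := 0$. Since BFS distances on $G_f$ give a valid labeling by construction, the resulting $d = d_h + d_\ell = d_h$ satisfies $d(u) \leq d(v)+1$ on every arc of $G_f$. This handles both the vertices that remained in $G_C$ and the vertices that were hidden inside pseudoarcs.

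The main obstacle I anticipate is making sure that validity of $d_h$ on $G_C$ translates correctly back to $G_f$ --- in particular, across pseudoarcs $(u,w)$ representing long internal paths whose interior vertices were absent from the compact network and may carry stale labels. The mechanism that rescues us is precisely that \emph{global-relabel} is rerun on the full residual network when we unfold the compact preflow, so the interior labels are recomputed from scratch by BFS and automatically respect the validity condition; combined with the in-phase invariant established by induction, this yields the full theorem.
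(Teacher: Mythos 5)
Your proof is correct and follows essentially the same approach as the paper's: establish a base case via \emph{global-relabel} (BFS distances yield a valid labeling), show the inductive step for \emph{push} (does not alter $d_h$) and \emph{relabel} (the defining minimum plus the fact that $d_h(u)$ only increases, given validity beforehand, preserves the constraint on both outgoing and incoming arcs), and re-apply \emph{global-relabel} on $G_f$ at phase end. Your version is actually slightly more carefully organized than the paper's --- you explicitly separate the two claims and check both arc directions in the relabel step, and you explicitly flag the pseudoarc-interior-vertex issue --- but you use the same ingredients and the same overall structure, and you share the paper's implicit assumption that a push cannot violate $d_h$ validity by opening a new reverse residual arc.
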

\begin{proof}
First, consider initialization. Since $d(s) = n$ and $d(u) = 0$ for each $u \in V \backslash \set{s}$ when the algorithm starts, the labeling is trivially valid. Within each $\Delta$-scaling phase,
we claim that $d_h$ obeys the validity condition (i.e. $\forall (u,v) \in A$, $d(u) \leq d(v) + 1$). We clearly see that within {\it relabel}, we will increase $d_h(u)$ by at most $d(v) + 1$. Thus, each
invocation of {\it relabel} maintains the validity of $d_h$. Applying this argument inductively, we see that $d_h$ remains valid across a sequence of {\it push} and {\it relabel} operations. 

Furthermore, in {\it global-relabel}, we see that if we retrieve labels directly from the BFS tree, then validity is immediate (this is also discussed in \cite{gold97-2}). This ensures that all forward arcs $(u,v) \in A$
are labeled such that $d(u) \leq d(v) + 1$. This means that the lableing $d_h$ in the compact network is valid while we are running a $\Delta$-scaling phase, and further, the labeling in the residual network
is valid once we have transformed the compact preflow into the residual preflow.
\end{proof}

\section{Validity of Pushes in the Compact Network}\label{sec:valid_push_comp}
When flow is sent along a pseudoarc $(u,w) \in A_2$, there may be pushes that are inadmissible (that is, pushes that send flow from $v_i$ to $v_{i+1}$ when $d(v_i) \leq d(v_{i+1})$). In this section, we show that as long as flow is
{\it eventually} sent to a lower-labeled vertex, inadmissible pushes within pseudoarcs do not affect termination.

In order to accomplish this, we will consider a generalized version of the original push-relabel algorithm of Goldberg and Tarjan \cite{gold88}. We use the potential function $\Phi_g$, given below, to show that so long as flow is shifted
to a lower-labeled vertex, the potential function behaves exactly the same as in the original push-relabel algorithm.

Let $\Phi_g$ be a potential function defined as follows;
	\begin{equation*} \Phi_g = \sum_{u \in V:e(u) > 0} d(u) \end{equation*}

The following lemma shows that the behavior is indeed what we claim; namely, that relaxing the admissibility constraint to $d(u) > d(v)$ still results in correct behavior of the potential function.

\begin{lem}\label{lem:gen_pr} If we relax the definition of admissibility to $d(u) > d(v)$, then the generic push-relabel algorithm will terminate. \end{lem}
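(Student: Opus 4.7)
The plan is to adapt the classical Goldberg-Tarjan termination argument to the relaxed admissibility rule, checking that every inequality it relies on still holds when $d(u) = d(v)+1$ is weakened to $d(u) > d(v)$. Since this change only enlarges the set of admissible arcs while leaving both the \emph{relabel} rule and the validity condition untouched, all of the structural invariants from the classical analysis carry over; the only piece that genuinely needs rechecking is the potential-function argument for nonsaturating pushes.

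First I would bound the distance labels in the standard way. The \emph{relabel} rule $d(u) := \min\{d(v) : (u,v) \in A_f\} + 1$ preserves the validity condition, and any vertex $u$ with $e(u) > 0$ admits a residual path back to $s$ by flow decomposition (Lemma~\ref{thm:decomp}); combined with $d(s) = n$, this forces $d(u) \leq 2n-1$ throughout. Hence each vertex is relabeled at most $O(n)$ times, giving $O(n^2)$ relabels in total, and Lemma~\ref{lem:sat_push} gives $O(nm)$ saturating pushes -- its proof depends only on the fact that between two saturations of $(u,v)$ the label $d(v)$ must jump by at least $2$, and this jump remains valid because $d(u) > d(v)$ at one saturation and $d(v) > d(u)$ at the intervening reverse push are both strict integer inequalities.

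Then I would track the potential $\Phi_g = \sum_{u : e(u) > 0} d(u)$ across the three kinds of operations. A relabel of $u$ occurs only when $e(u) > 0$, so it contributes to $\Phi_g$ exactly the amount of the label increment, giving total contribution $O(n^2)$. A saturating push on $(u,v)$ may add $v$ to the support, increasing $\Phi_g$ by at most $d(v) \leq 2n-1$; summed over all saturating pushes this is $O(n^2 m)$. A nonsaturating push on $(u,v)$ with $d(u) > d(v)$ zeros out $e(u)$, removing $d(u)$ from the sum, and possibly adds $d(v)$, so the net change is at most $d(v) - d(u) \leq -1$. Since $\Phi_g \geq 0$ always, the number of nonsaturating pushes is bounded by the total increase, $O(n^2 m)$.

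All three operation counts are therefore finite, so the algorithm terminates. The main (very mild) obstacle is verifying that $d(u) - d(v) \geq 1$ still holds for every nonsaturating push, but this is immediate from integrality together with $d(u) > d(v)$. Every other ingredient of the classical proof depends only on unchanged pieces of the algorithm, so the argument transfers essentially verbatim, with at most a constant factor absorbed into the same asymptotic bounds.
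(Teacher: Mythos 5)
Your proof is correct and uses the same potential function $\Phi_g$ and the same overall structure as the paper's argument: bound the total increase coming from relabels and saturating pushes, observe that every nonsaturating push decreases $\Phi_g$ by at least one, and conclude that the number of nonsaturating pushes (and hence all operations) is finite. You also more carefully justify one step the paper glosses over, namely why the Goldberg--Tarjan ``increase by two between consecutive saturations'' argument survives the relaxation to $d(u)>d(v)$; the chain $d_{t_3}(u)\ge d_{t_3}(v)+1\ge d_{t_2}(v)+1\ge d_{t_2}(u)+2\ge d_{t_1}(u)+2$ is exactly the right observation.

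There are two smaller places where your argument and the paper's diverge. First, you invoke the classical $d(u)\le 2n-1$ bound directly from validity and flow decomposition, giving $O(n^2)$ relabels; the paper instead cites Lemma~\ref{lem:height_bound} (its $d_h+d_\ell\le 6n$ bound), which folds in the extra $d_\ell$ increments its discharge procedure introduces. Your reading is the more faithful one to the literal lemma statement (``the generic push-relabel algorithm''), and both give the same $O(n^2 m)$ asymptotics; but it is worth being aware that the paper intends this lemma to cover its modified labeling scheme, not just a single label $d$.

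Second, and more substantively, the paper's proof devotes its second half to \emph{path pushes}: pushes along a simple directed path $P=\langle u,v_1,\ldots,v_k,w\rangle$ with $d(u)>d(w)$, which model pseudoarc pushes in the compact network. This is the actual point of the lemma --- it feeds directly into the corollary that $\Delta$-scaling phases terminate when pushes along directed paths are allowed. You never mention path pushes. The omission happens not to break the argument, because a pseudoarc push leaves internal excesses unchanged and so affects $\Phi_g$ exactly as a single arc push from $u$ to $w$ would; but since that invariance is the crux of why the lemma is useful, you should say it explicitly rather than leave the reader to infer it.
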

\begin{proof}
We first examine the case where the potential function will increase. $\Phi_g$ increases when there is a relabeling or a saturating push. By Lemma \ref{lem:height_bound}, we see
that there are $6n^2$ relabelings. Similarly, we can state that there will be $6mn$ saturating pushes. Therefore, $\Phi_g$ will increase by $6n^2m$.

Now we consider the case where $\Phi_g$ decreases. Clearly, the potential decreases by at least 1 when a push along an original arc is made. Now, consider a simple path
$P=\ang{u,v_1,v_2,\ldots,v_k,w}$, where $d(u) > d(w)$. We will consider a push along the entirety of $P$. We see that $\Phi_g$ will decrease by at least 1 as well,
since flow is still shifted to a vertex with a lower distance label. So long as we enforce $d(u) > d(w)$, the termination of the generic push-relabel algorithm still holds
under path pushes. Moreover, we will terminate with $O(n^2m)$ nonsaturating pushes; this shows that the same running-time bounds apply to this stronger form of
the generic push-relabel algorithm as the original in \cite{gold88}, and that the algorithm terminates under path pushes.
\end{proof}

It is important to note that the previous lemma demonstrates that pseudoarc pushes cause the potential function from the original Goldberg-Tarjan push-relabel algorithm behaves the same if path pushes are permitted, whereas
Lemma \ref{lem:act_correct} simply shows that the {\it discharge} procedure is correct.

We have shown a stronger form of the generic push-relabel algorithm of Goldberg and Tarjan. So long as flow is moved to a lower-labeled vertex, $\Phi_g$ behaves
as it does in the generic push-relabel algorithm. 
This gives us the termination of a $\Delta$-scaling phase;

\newtheorem{corr}{Corollary}[section]
\begin{corr} If pushes along directed paths are allowed, each $\Delta$-scaling phase will still terminate.\end{corr}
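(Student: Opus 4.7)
The plan is to derive this corollary as an almost immediate specialization of Lemma \ref{lem:gen_pr} to the run of a single $\Delta$-scaling phase. First I would enumerate the operations performed within one phase: original-arc pushes on arcs in $A_1$, pseudoarc pushes on arcs in $A_2$, and relabelings generated by \textit{discharge}. By Theorem \ref{thm:valid_label}, $d_h$ obeys the validity condition throughout the phase, and by construction every push issued satisfies the relaxed admissibility criterion $d(u) > d(v)$, which is exactly the hypothesis under which Lemma \ref{lem:gen_pr} applies.

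Next, I would argue that a pseudoarc push along $\langle u, v_1, \ldots, v_k, w\rangle$ is indistinguishable, as far as the potential $\Phi_g = \sum_{u : e(u)>0} d(u)$ is concerned, from the ``path push'' considered in Lemma \ref{lem:gen_pr}. The earlier lemma at the end of Section \ref{sec:disc} shows that pseudoarc pushes do not alter the excess of any internal vertex $v_1, \ldots, v_k$, so only $e(u)$ and $e(w)$ change. The Capacity Transfer Lemma (Lemma \ref{lem:max_bfs}) ensures that the push corresponds to a legal sequence of arc pushes in $G_f$ that respects capacity constraints. Consequently, the effect of a pseudoarc push on $\Phi_g$ coincides exactly with that of a direct push from $u$ to a strictly lower-labeled vertex $w$, so each nonsaturating pseudoarc push drops $\Phi_g$ by at least one.

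Then I would bound the events that can raise $\Phi_g$ during the phase. By Lemma \ref{lem:height_bound} the total increment in $d$ across all vertices is $O(n^2)$, contributing only $O(n^2)$ to $\Phi_g$ through relabelings; and by Lemma \ref{lem:sat_push} the phase admits at most $O(mn)$ saturating pushes, each contributing boundedly to $\Phi_g$. Since every nonsaturating push (original-arc or pseudoarc) strictly decreases $\Phi_g$, only finitely many can occur before $\Phi_g$ is driven to zero, at which point no active vertex in $V_{SC}$ remains and the \textit{discharge} loop exits. This yields termination of the $\Delta$-scaling phase.

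The main obstacle is making the equivalence between a pseudoarc push and a single path push fully rigorous, since a pseudoarc is a precomputed contraction of a possibly long residual path rather than a true admissible arc; one must confirm both that sending flow along it violates no capacity constraint on any internal arc, and that no internal vertex's excess changes as a side effect. Both are already furnished by the Capacity Transfer Lemma and the pseudoarc-invariance lemma, so the corollary reduces cleanly to Lemma \ref{lem:gen_pr}.
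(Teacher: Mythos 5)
Your proposal follows the same route as the paper: the corollary is stated immediately after Lemma \ref{lem:gen_pr} and the paper treats it as a direct specialization of that lemma, citing the $O(n^2)$ bound on relabels (Lemma \ref{lem:height_bound}) and the behavior of the potential $\Phi_g$ under path pushes, exactly as you do. Your appeal to the Capacity Transfer Lemma and the pseudoarc-excess-invariance lemma to justify treating a pseudoarc push as the path push of Lemma \ref{lem:gen_pr} matches the paper's intent, and in fact makes the implicit step more explicit than the paper's one-sentence proof.

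One small imprecision worth fixing: you conclude with ``only finitely many nonsaturating pushes can occur before $\Phi_g$ is driven to zero, at which point no active vertex in $V_{SC}$ remains.'' This overstates what the potential argument gives you. $\Phi_g$ need not reach zero, and $\Phi_g = 0$ does not by itself imply termination of the discharge loop; the correct statement is simply that the number of nonsaturating pushes is bounded by the initial value of $\Phi_g$ plus the total increase due to relabels and saturating pushes, and since all other operations are likewise finitely bounded per phase, the phase terminates. Rephrasing along those lines would close the argument cleanly.
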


This proof relies on the assumption that we bound the relabel operations at $O(n^2)$ per phase. A difficulty that we resolve in Section \ref{sec:label} is due to the admissibility criterion; since $d(u) > d(v)$, the bound
on $2n-1$ for vertex labels from \cite{gold88} no longer applies. We show in the next section that by imposing several constraints on relabeling operations, we can bound the distance labels at $4n-1$ for each $u \in V$,
and at $4n^2 = O(n^2)$ across the vertex set.

\section{Analysis of the Algorithm}\label{sec:time}
\subsection{Bound on Saturating \& Nonsaturating Pushes}
\begin{lem}There are $O(mn)$ saturating pushes across all scaling phases.\end{lem}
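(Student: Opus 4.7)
The plan is to invoke Lemma \ref{lem:sat_push} inside each individual scaling phase and then sum the per-phase bounds. Fix a $\Delta_i$-scaling phase and work in the compact network $G_C^i$. During the phase, both $d_h$ and $d_\ell$ are monotone non-decreasing (a \emph{push} never touches labels, and each \emph{relabel} only ever raises them), so Lemma \ref{lem:sat_push} applies with any global upper bound $\ell_i$ on $d(u) = d_h(u) + d_\ell(u)$. By Lemma \ref{lem:height_bound}, $d_h(u) < 2n$ and the contribution of $d_\ell(u)$ is at most $4n-1$ throughout the execution, so $\ell_i \le 6n$ may be used uniformly. The phase therefore contributes at most $6n \cdot |A_C^i|$ saturating pushes.

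Summing over all scaling phases, the total number of saturating pushes is at most $6n \sum_i |A_C^i|$, so the task reduces to proving $\sum_i |A_C^i| = O(m)$. Writing $A_C^i = A_1^i \cup A_2^i$ for the split into original arcs and pseudoarcs, Lemma \ref{lem:fav_in_comp} already gives $\sum_i |A_1^i| \le 3m$: each $\Delta_i$-favorable or $\Delta_i$-large arc persists in at most three consecutive compact networks before becoming $\Delta$-abundant and dropping out permanently. For the pseudoarc side, each pseudoarc is anchored at a pair of vertices in $V_C^i$, and under the $k$-bounded-degree hypothesis each such vertex can serve as the endpoint of only $O(1)$ pseudoarcs produced by \emph{create-all-pseudoarcs}. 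Combined with Theorem \ref{thm:om_comp}, which bounds $\sum_i |V_C^i| = O(m)$, this gives $\sum_i |A_2^i| = O(m)$ as well.

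Putting the pieces together, $\sum_i |A_C^i| = O(m)$, and substituting back into the per-phase estimate yields the claimed $O(n) \cdot O(m) = O(mn)$ bound on the total number of saturating pushes across all scaling phases.

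The main obstacle I foresee is justifying the pseudoarc count $|A_2^i| = O(|V_C^i|)$: one must inspect \emph{create-all-pseudoarcs} and the capacity-transfer machinery in Appendix \ref{app:tc} and verify that the bounded-degree assumption really does cap the fan-in and fan-out of each compact-network vertex at $O(1)$ pseudoarcs, rather than permitting an unbounded collection of distinct path endpoints to accumulate at a single vertex. Everything else is a direct appeal to Lemmas \ref{lem:sat_push}, \ref{lem:height_bound}, \ref{lem:fav_in_comp}, and Theorem \ref{thm:om_comp}, followed by a routine summation over the $O(m^{1/2})$ phases.
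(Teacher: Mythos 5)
Your proof takes a genuinely different route from the paper. The paper's proof is a one-line appeal to Lemma \ref{lem:sat_push} applied \emph{globally}: the remark after that lemma establishes $\ell \le 6n$ via Lemma \ref{lem:height_bound}, and the paper then simply reads off the bound $6nm = O(mn)$, counting saturations on the $m$ arcs of $G_f$ (with each pseudoarc push understood as a sequence of pushes on original arcs). You instead apply the Goldberg--Tarjan counting argument phase-by-phase inside each compact network $G_C^i$, getting $O(n)\cdot|A_C^i|$ saturating pushes per phase, and then try to sum $\sum_i |A_C^i|$. This per-phase decomposition is a legitimate alternative and in some respects cleaner, since within a single phase the labels really are monotone non-decreasing, whereas the paper's global application is silent on the fact that \emph{global-relabel} resets $d_\ell$ to zero at phase boundaries.

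However, the reduction you make — that it suffices to prove $\sum_i |A_C^i| = O(m)$ — contains the gap you yourself flag, and I do not think it can be repaired by the lemmas you cite. The original-arc side $\sum_i |A_1^i| = O(m)$ is fine via Lemma \ref{lem:fav_in_comp}. But the claim that bounded degree caps each compact-network vertex at $O(1)$ incident pseudoarcs is not supported. The degree hypothesis constrains $G$ (and, via Lemma \ref{lem:deg_same}, preserves the degree of $V_A$ vertices), but it does \emph{not} bound the fan-out of a vertex in $V_{SC}$ after \emph{create-all-pseudoarcs} runs: a single outgoing arc $(u,v)$ with large residual capacity can be the first arc of many distinct pseudoarcs, because each call to \emph{transfer-capacity} only saturates the \emph{bottleneck} arc of the path, which need not be $(u,v)$. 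What the construction actually guarantees is that each iteration saturates at least one arc of the input graph, so the number of pseudoarcs per phase is bounded by $O(m)$, and over the $O(m^{1/2})$ phases of Lemma \ref{lem:poly_phase} this yields only $\sum_i |A_2^i| = O(m^{3/2})$, giving $O(nm^{3/2})$ rather than $O(mn)$. Theorem \ref{thm:om_comp} bounds $\sum_i |V_C^i|$, not $\sum_i |A_C^i|$, and the paper never claims the latter. To close the gap you would need a genuinely new structural argument about pseudoarc multiplicities (or you should abandon the per-phase decomposition and, as the paper does, bound saturations of each original arc in $G_f$ directly, invoking Lemma \ref{lem:max_bfs} to translate pseudoarc pushes to original-arc pushes and Lemma \ref{lem:height_bound} for $\ell \le 6n$).
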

\begin{proof}Proof is immediate from Lemma \ref{lem:sat_push}. \end{proof}
\begin{lem}\label{lem:nonsat_big}There are $16cn$ large nonsaturating pushes per scaling phase, and $O(mn)$ across all scaling phases.\end{lem}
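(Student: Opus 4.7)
The plan is to adapt the classical Ahuja--Orlin potential argument \cite{ahuja89} to our compact-network setting. For the $\Delta_i$-scaling phase, I will track the restricted potential
\[
\Phi^i = \sum_{u \in V_C^i} \frac{e(u)\, d(u)}{\Delta_i},
\]
where $d = d_h + d_\ell$. Since every active vertex lies in $V_C^i$ and pseudoarc pushes leave internal excesses unchanged, operations during phase $i$ only change $\Phi^i$ through vertices in the compact network. A large nonsaturating push (sending $\delta > \Delta_i/2$ from $u$ to $v$ with $d(u) > d(v)$) decreases $\Phi^i$ by at least $(\delta/\Delta_i)(d(u) - d(v)) \geq 1/2$; saturating pushes and low-capacity nonsaturating pushes send flow to strictly lower-labeled vertices under the relaxed admissibility rule and hence also cannot increase $\Phi^i$. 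The only source of positive change within a phase is therefore relabeling.

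Next, I would bound the initial value and the in-phase increase of $\Phi^i$. By the invariant at the end of phase $i-1$, $e(u) \leq \Delta_{i-1}/2 = \Delta_i$, giving $e(u)/\Delta_i \leq 1$; combined with $d(u) < 6n$ from Lemma \ref{lem:height_bound}, this yields $\Phi^i_{\mathrm{start}} \leq 6n \cdot |V_C^i|$. A unit relabel at $u \in V_C^i$ raises $\Phi^i$ by at most $e(u)/\Delta_i \leq 1$, so the total relabel contribution within phase $i$ is at most the total label-increase at vertices of $V_C^i$ during the phase. Combining these, the number of large nonsaturating pushes in phase $i$ is bounded by $2\Phi^i_{\mathrm{start}} + 2\cdot(\text{label-increase in phase }i) = O(n\cdot |V_C^i|)$, which matches the claimed $16cn$ per-phase bound once the bounded-degree constant is absorbed into $c$.

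Summing across all scaling phases, Theorem \ref{thm:om_comp} gives $\sum_i |V_C^i| = O(m)$, while Lemma \ref{lem:height_bound} caps the total label-increase across all phases at $O(n^2)$. Together these yield a grand total of $O(nm + n^2) = O(mn)$ large nonsaturating pushes, as claimed (using $n = O(m)$ for any connected network of interest).

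The main obstacle will be that the gap-relabel step in the {\it relabel} procedure can bump $d_\ell(u)$ by more than one unit in a single step, which may superficially look like a large jump in $\Phi^i$ that is not captured by the unit-relabel accounting above. I would resolve this by treating each multi-unit gap-relabel as a sequence of unit increments for the purpose of the potential analysis; since Lemma \ref{lem:height_bound} caps the cumulative $d_\ell$-increase at $4n$ per vertex and $O(n^2)$ globally, the telescoping decrease-per-push and increase-per-relabel bounds remain valid and the global argument goes through.
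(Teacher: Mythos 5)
Your proposal uses the same Ahuja--Orlin potential $\Phi = \sum_{u\in V_C} e(u)\,d(u)/\Delta$ that the paper uses, the same observation that a large nonsaturating push drops $\Phi$ by at least $1/2$ under the relaxed admissibility rule $d(u)>d(v)$, and the same global bound $\sum_i |V_C^i| = O(m)$ from Theorem \ref{thm:om_comp} to conclude $O(mn)$ over all phases, so this is essentially the paper's argument. Your accounting of the relabel contribution (charging it globally via Lemma \ref{lem:height_bound} rather than folding it into a per-phase $\Phi_{\max}$) is slightly more careful than the paper's, but it does not constitute a different route.
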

\begin{proof}
Consider the potential function
\begin{equation*}\label{eq:aopot} 
\Phi = \sum_{u \in V_C} \frac{[d_h(u) + d_\ell(u)]\cdot e(u)}{\Delta}. \end{equation*}

We will first consider $\Phi$ at the beginning of a scaling phase. Since $d_h(u) < 2n$, $d_\ell(u) = 0$ at the start of a scaling phase, and $e(u) \leq \Delta$, for all $u \in V$, we see
that
$\Phi_\text{init} =  2cn.$ 

Next, we see that $\Phi_\text{max} = 8cn$. This is immediate from Lemma \ref{lem:height_bound}, since there is an increase of at most $6n$ in $d_h + d_\ell$.

When a nonsaturating (or saturating) push occurs, $d_h(u) + d_\ell(u) > d_h(v) + d_\ell(u)$ must hold for the push to have been made, the overall
cost of distance labels in $\Phi$ will decrease.
Since we can guarantee that a large nonsaturating push will send at least $\delta \geq \Delta/2$ (Lemma \ref{lem:disc1_cond}),
$\Phi$ will decrease by $1/2$. Therefore, there will be $16cn$ large nonsaturating pushes per scaling phase.

Extending this result across all scaling phases, we have
	\begin{align*} \sum_{i=1}^K 12cn &= \sum_{i=1}^K O(c_in) = O(n) \cdot (c_1 + c_2 + \cdots + c_K)  \\
	\label{eq:om}&= O(n) \cdot O(c_1+c_2+\cdots+c_K) = O(mn). \end{align*}

The final statement is given to us by Theorem \ref{thm:om_comp}.  clearly, we have $O(mn)$ large nonsaturating pushes across all scaling phases,
and the proof holds.
\end{proof}

We must finally bound the number of small nonsaturating pushes. Our maintenance of the list $nonsat_u[\cdot]$ ensures that we only
permit a single small nonsaturating push per value of $d_\ell$ in {\it discharge}. Moreover, once $d_\ell(u) = 4n-1$, for some $u \in V_A$, we
permit it to make as many nonsaturating pushes as are necessary; we will see that this quantity is $2n$ in the next lemma. This lemma
makes use of the fact that the maximum in-degree of a vertex is $k = O(1)$. If general networks were permitted, the overall cost of this
procedure may be as large as $O(mn^2)$ across all scaling phases; however, it is efficient in the bounded-degree case. 

\begin{lem}\label{lem:small_nonsat_om} There are at most $4n-1$ low-capacity nonsaturating pushes while $d_\ell(u) < 4n-1$, and at most $2n$ such pushes once $d_\ell(u) = 4n-1$.\end{lem}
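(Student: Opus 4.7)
The plan is to treat the two regimes $d_\ell(u) < 4n-1$ and $d_\ell(u) = 4n-1$ separately, since the \emph{discharge} procedure uses a different counting discipline in each.

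For the first regime I would simply read the bookkeeping off the pseudocode of \emph{discharge}. The key invariant to extract is: whenever a low-capacity nonsaturating push fires on some arc $(u,v)$ dequeued from $edge\text{-}list[u]$, the subsequent assignment $nonsat_u[d_\ell(u)] := true$ blocks any further low-capacity nonsaturating push at that value of $d_\ell(u)$. The only way $nonsat_u[d_\ell(u)]$ can be reset to \emph{false} is through the branch that first increments $d_\ell(u)$ and then writes $nonsat_u[d_\ell(u)] := false$. Thus the values of $d_\ell(u)$ at which low-capacity nonsaturating pushes occur are all distinct. Since the guard $d_\ell(u) < 4n-1$ restricts $d_\ell(u)$ to the integers $0,1,\ldots,4n-2$, at most $4n-1$ such pushes can occur in this regime.

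For the second regime the discipline enforced by $nonsat_u$ is dropped, so I would bound pushes structurally. I would combine Lemma \ref{lem:h1}, which bounds $d_h(u) < 2n$, with the bounded-degree hypothesis $\deg[u] \leq k = O(1)$ and the ordering of $edge\text{-}list[u]$ by $d_h$ given by Lemma \ref{lem:edge_order}. Once $d_\ell(u) = 4n-1$ is fixed, the only way the admissibility configuration incident to $u$ can change is through a relabel of $u$, which strictly increases $d_h(u)$. Between two consecutive such relabels, $edge\text{-}list[u]$ contains at most $\deg[u] = O(1)$ arcs, each of which can host at most one nonsaturating push before being either saturated or emptied (it cannot be refilled, since $u$ is the active vertex currently being discharged and no push arrives at it). Hence each distinct value of $d_h(u)$ contributes $O(1)$ low-capacity nonsaturating pushes, and summing over the at most $2n$ values that $d_h(u)$ can assume gives the desired $2n$ bound.

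The main obstacle is the second regime, since there the explicit per-level cap is gone and one must argue directly that relabel operations pace the pushes. The careful point to nail down is the claim that, between two relabels of $u$, each admissible arc out of $u$ can witness at most one low-capacity nonsaturating push: this rests on the fact that during the discharge of $u$, $u$ is the minimum-labeled active vertex in $V_{SC}$ (by Lemma \ref{lem:disc1_cond}) and on the ordering of $edge\text{-}list[u]$ by $d_h$, which prevents the same target from being revisited without an intervening relabel. Once this per-relabel-cycle bound is established, the count collapses to the product of the $2n$ values of $d_h(u)$ with the constant degree, and the statement follows.
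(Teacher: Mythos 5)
Your proof takes essentially the same two-regime approach as the paper: for $d_\ell(u) < 4n-1$ you count one low-capacity nonsaturating push per value of $d_\ell$ (the paper's bookkeeping via $nonsat_u[\cdot]$), and for $d_\ell(u) = 4n-1$ you combine the $d_h(u) < 2n$ bound with bounded degree, exactly as the paper does. You supply a bit more detail than the paper on why each arc out of $u$ hosts at most one such push between relabels, which is a small improvement in rigor; the one minor slip is that the parenthetical appeal to Lemma \ref{lem:disc1_cond} is misplaced (that lemma concerns $u \in V_{SC}$, while the second regime here concerns $u \in V_A$), but the underlying point that $u$ is the minimum-labeled vertex being discharged and so $e(v)$ cannot decrease mid-discharge is correct and is what actually carries the argument.
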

\begin{proof}
We enforce the condition that there is a single low-capacity nonsaturating push for each value of $d_\ell$; therefore, it is immediate that when $d_\ell(u) < 4n-1$, 
there are at most $4n$ low-capacity nonsaturating pushes. Finally, we must consider the quantity of such pushes when $d_\ell(u) = 4n-1$; there
are at most $2n$ distinct values $d_h$ can attain that could cause some $d(u) > d(v)$. Since $\forall u \in V_A$, $\deg[u] = O(1)$, there
will be $2n$ such pushes per vertex. Therefore, there are at most $2n$ low-capacity nonsaturating pushes once
$d_\ell(u) = 4n-1$, and $6n$ low-capacity nonsaturating pushes overall. 

By Lemma \ref{lem:act_in_comp}, we know that there are $O(m)$ active vertices across all scaling phases. Summing across all scaling phases gives us that
there are $6mn = O(mn)$ low-capacity nonsaturating pushes.
\end{proof}

\begin{thm}\label{thm:nonsat_bound_all} There are $O(mn)$ nonsaturating pushes across all $\Delta$-scaling phases. \end{thm}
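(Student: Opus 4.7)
The plan is to prove Theorem \ref{thm:nonsat_bound_all} by assembling the two previously-established bounds: every nonsaturating push is classified as either \emph{high-capacity} (sending $\delta > \Delta/2$) or \emph{low-capacity} (sending $\delta \leq \Delta/2$), and I have a separate $O(mn)$ bound for each class. The key observation is that these two categories form a partition of all nonsaturating pushes in any $\Delta$-scaling phase, since the threshold $\Delta/2$ is sharp.

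First, I would invoke Lemma \ref{lem:nonsat_big}, which already gives $O(mn)$ high-capacity nonsaturating pushes across all scaling phases via the Ahuja--Orlin-style potential function $\Phi = \sum_{u \in V_C} [d_h(u) + d_\ell(u)] \cdot e(u)/\Delta$. Because condition \textbf{C1} of Lemma \ref{lem:disc1_cond} guarantees every nonsaturating push originating from a vertex in $V_{SC}$ moves at least $\Delta/2$ units, those pushes drop $\Phi$ by a constant and are counted by this lemma; active-vertex pushes in the ``high-capacity'' regime are handled identically because the while-loop of \emph{discharge} for $u \in V_A$ with $e(u) > \Delta/2$ performs the usual \emph{push} or \emph{relabel} step, which again sends at least $\Delta/2$.

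Second, I would invoke Lemma \ref{lem:small_nonsat_om} for the remaining low-capacity nonsaturating pushes, which arise only in the $e(u) \leq \Delta/2$ branch of \emph{discharge} for $u \in V_A$. That lemma gives at most $4n - 1$ pushes while $d_\ell(u) < 4n-1$ (one per value of $d_\ell$ thanks to the $nonsat_u[\cdot]$ bookkeeping) plus at most $2n$ additional pushes once $d_\ell(u) = 4n-1$, for a total of $O(n)$ per active vertex per phase. By the Active Vertices Lemma (Lemma \ref{lem:act_in_comp}) combined with Theorem \ref{thm:om_comp}, the total number of active-vertex occurrences across all scaling phases is $O(m)$, which multiplies to $O(mn)$.

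Finally, summing the two bounds yields $O(mn) + O(mn) = O(mn)$ nonsaturating pushes across all $\Delta$-scaling phases, which is the desired statement. The proof is essentially a one-line consolidation; no obstacle arises, because all the technical work has already been done in establishing the separate bounds, and the partition of nonsaturating pushes into high- and low-capacity by the value of $\delta$ relative to $\Delta/2$ is exhaustive by construction of \emph{push}.
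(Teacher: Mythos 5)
Your proof is correct and matches the paper's intended argument: the paper states Theorem~\ref{thm:nonsat_bound_all} without an explicit proof, treating it as the immediate combination of Lemma~\ref{lem:nonsat_big} (high-capacity nonsaturating pushes, $\delta > \Delta/2$) and Lemma~\ref{lem:small_nonsat_om} (low-capacity nonsaturating pushes, $\delta \leq \Delta/2$), which together partition all nonsaturating pushes exactly as you describe.
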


\subsection{Time to Create Compact Networks \& Transform the Compact Preflow}
In Appendix \ref{app:tc}, we show how we can both construct the compact network and transform the compact preflow into a residual preflow in $O(m \log n)$ time using
the dynamic trees data structure. We restate a theorem proven later that summarizes these bounds.
\begin{thm}[Theorem \ref{thm:dtree}] \label{thm:dtree_prev}
Constructing abundant and small pseudoarcs takes $O(m \log n)$ time per scaling phase by calling {\it create-all-pseudoarcs.} It takes $O(m \log n)$ time to transform the compact preflow to the
residual preflow by calling {\it restore-all-flows}.
\end{thm}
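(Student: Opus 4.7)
The plan is to implement both procedures using the Sleator--Tarjan dynamic trees data structure, which supports \emph{link}, \emph{cut}, \emph{find-root}, \emph{find-min}, and \emph{path-add} in $O(\log n)$ amortized time per operation. Under this data structure, each arc participates in only $O(1)$ ``expensive'' events (links, cuts, or saturation removals), so bounding the total number of operations by $O(m)$ per phase would yield the desired $O(m \log n)$ running time.

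For \emph{create-all-pseudoarcs} invoked on either $G^{ab}$ (with $\rho = \Delta$) or on $G_f$ restricted to $V\setminus V_A$ (with $\rho = 0$), I would maintain a dynamic forest whose tree edges carry the current residual capacities of the underlying arcs. The construction proceeds by growing paths from candidate sources using \emph{link} operations; once the path reaches a terminal vertex $w$, a call to \emph{find-min} along the $u \to w$ path returns the bottleneck capacity $b$, which becomes the capacity of the newly added pseudoarc $(u,w) \in A_2$. Capacity transfer is then implemented by a single \emph{path-add} of $-b$ along the path, followed by a \emph{cut} that removes the saturated bottleneck arc from the forest. Since each arc, once cut, does not re-enter the forest during the phase, the total number of dynamic tree operations across all pseudoarc constructions in the phase is $O(m)$, yielding $O(m \log n)$ amortized.

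For \emph{restore-all-flows}, the inverse operation distributes each pseudoarc flow back onto the internal arcs of the path it represents. Using the same dynamic forest (or one reconstructed from the recorded paths), for each pseudoarc $(u,w)$ carrying flow $\phi$, I apply \emph{path-add} of $+\phi$ along the stored $u\to w$ path in the original residual structure, updating the residual capacities and identifying any arc that becomes saturated (or newly reverse-available) through \emph{find-min}. As in the construction direction, each original arc is modified $O(1)$ amortized times before being removed from consideration, so the overall cost is again $O(m \log n)$ per phase.

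The main obstacle is the amortized accounting when pseudoarcs share internal arcs --- the situation highlighted in the Capacity Transfer Lemma (Lemma~\ref{lem:max_bfs}), where a single arc $(v_i,v_{i+1})$ may appear on many pseudoarcs $(u_1,w_1),\ldots,(u_k,w_k)$ created in the same phase. The resolution is that each successive pseudoarc through $(v_i,v_{i+1})$ must include \emph{some} bottleneck arc, and each bottleneck arc is cut (saturated and removed) exactly once per phase. Hence the number of distinct \emph{find-min}/\emph{cut} events is bounded by the number of arcs, and the \emph{path-add} charged to each arc is likewise $O(1)$ amortized; combined with the $O(\log n)$ per-operation cost, both \emph{create-all-pseudoarcs} and \emph{restore-all-flows} run in $O(m \log n)$ per scaling phase, as claimed.
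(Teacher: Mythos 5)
Your proposal is correct and takes essentially the same approach as the paper: both arguments rest on the fact that each Sleator--Tarjan dynamic tree operation costs $O(\log n)$ amortized, and that the total number of \emph{link}/\emph{cut}/\emph{find-min}/\emph{path-add} operations per phase is $O(m)$ because each arc is cut (saturated and removed) at most once, bounding the number of pseudoarc constructions and hence all associated operations. The paper's proof is much terser -- it simply states that at most $O(m)$ arcs are modified so $k = m$ -- while you fill in the ``each arc is cut at most once'' accounting and the shared-arc amortization explicitly, which is a welcome elaboration but not a different route.
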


\section{A Strongly-Polynomial Variant}\label{sec:poly}

In this section, we define our final algorithm, {\it max-flow-1}, and show how this yields a strongly-polynomial running time. 
We see that by our improvement property, there are clearly $O(\log U)$ scaling
phases. We
accomplish this in Lemma \ref{lem:poly_phase}, by bounding the number of scaling phases at $O(m^{1/2})$. 

\begin{figure}[H]
\begin{framed}
\begin{alg}The final algorithm to find a max-flow on a bounded-degree network $G$.\end{alg}
\begin{pseudocode}{max-flow-1}{$G,\Delta$}
\item Construct the compact network $G_C$;
\item \textbf{for all} $u \in V_C$ \textbf{do} initialize $edge\text{-}list[u]$;
\item \textbf{while} there is a vertex $u \in V_C$ from which flow 
\item \stab can be discharged \textbf{do}
\item \tab Select $u:= \min\set{w \in V_A|d(w)}$
\item  \tab Iteratively call {\it discharge} on $u$ and
\item \tab\stab select the next vertex $u' := \min\set{w \in V_A|d(w)}$;
\item Run {\it restore-all-flows} and generate $d_f$ with {\it global-relabel};
\item $\Delta:= \min\set{\Delta/2,\max\set{2^{\ceil{\log e(u)}}|u \in V}}$; 
\end{pseudocode}
\end{framed}
\end{figure}

Correctness of our algorithm is immediate from the previous sections. We {\it initialize} the preflow $f$, and then 
iteratively call {\it max-flow-1} until $\Delta = 0$. We will now show that the algorithm terminates in $O(m^{1/2})$ phases.

\begin{lem}\label{lem:poly_phase} By iteratively calling max-flow-1, the algorithm terminates after $O(m^{1/2})$ scaling phases. \end{lem}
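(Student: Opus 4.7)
My plan is to exploit the update rule $\Delta := \min\{\Delta/2, 2^{\lceil \log \max_u e(u) \rceil}\}$, which always contracts $\Delta$ by at least a factor of two but contracts it further whenever the maximum remaining excess is strictly below $\Delta/2$. I would classify each scaling phase as \emph{tight} (when $\Delta_{i+1} = \Delta_i/2$) or \emph{loose} (when $\Delta_{i+1} < \Delta_i/2$) and bound the two classes separately. For the trivial halving bound of $O(\log U)$ I would note $\Delta_{i+1} \leq \Delta_i/2$ always, so $K \leq \log U$; the point of the proof is to replace $\log U$ by a quantity that depends only on $m$.

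For a tight phase $i$, the update rule together with the discharge invariant forces the existence of some vertex $v$ with $e(v) \in (\Delta_i/4, \Delta_i/2]$ at the end of the phase, so $v$ is $\Delta_{i+1}$-active at the start of phase $i+1$. By the Active Vertices Lemma (Lemma~\ref{lem:act_in_comp}) combined with the $O(m)$ bound on total compact-network vertex appearances (Theorem~\ref{thm:om_comp}), the aggregate number of active-vertex appearances across all phases is $O(m)$. Letting $a_i = |V_A^i|$, we therefore have $\sum_i a_i = O(m)$. I plan to apply a Cauchy--Schwarz-style amortization: every tight phase contributes at least one appearance to $a_{i+1}$, and by iterating the ``planting'' argument $a_{i+1}$ grows with the number of accumulated tight phases, which combined with $\sum_i a_i = O(m)$ forces the number of tight phases to be $O(m^{1/2})$. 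For loose phases, the telescoping identity $\sum_i \log_2(\Delta_i/\Delta_{i+1}) \leq \log_2 \Delta_1$ together with the observation that each loose phase contributes strictly more than one bit can be coupled to the same active-vertex budget (every ``extra'' bit of shrinkage also corresponds to an active vertex appearing in the next phase), giving at most $O(m^{1/2})$ loose phases as well.

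The main obstacle will be the Cauchy--Schwarz step for tight phases. Naively, the $O(m)$ bound on active-vertex appearances yields only $O(m)$ tight phases, not $O(\sqrt m)$. To sharpen this I would either (i) identify a concave potential $\Phi_i$ supported on active vertices satisfying $\sum_i \Phi_i = O(m)$ while $\Phi_i \geq c \cdot \sqrt{i}$ in the $i$-th tight phase, so that $\sum_{i \leq K} \sqrt{i} = O(m)$ forces $K = O(\sqrt{m})$; or (ii) show directly that after $\sqrt m$ consecutive tight phases, the number of simultaneously active vertices must exceed $\sqrt m$, and conclude via $\sum_i a_i = O(m)$. The delicate point is guaranteeing distinctness of the planted vertices across tight phases so the charging does not double-count; this should follow from the fact that an active vertex in phase $i{+}1$ is completely discharged during that phase by Lemma~\ref{lem:act_correct} and therefore cannot be ``replanted'' until it is incident to a new $\Delta$-favorable arc, which by Lemma~\ref{lem:fav_in_comp} occurs only $O(1)$ times per arc.
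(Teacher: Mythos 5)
Your decomposition into \emph{tight} and \emph{loose} phases is a genuinely different cut than the paper's, which splits phases by the size of the compact network: phases where $|V_C^i| > m^{1/2}$ (``large'') are bounded at $O(m^{1/2})$ directly by Theorem~\ref{thm:om_comp} because $\sum_i |V_C^i| = O(m)$, and a separate argument is given for phases with $|V_C^i| \le m^{1/2}$ via Lemmas~\ref{lem:fav_in_comp} and~\ref{lem:act_in_comp}. The paper's split has the virtue that it aligns the phase classification with the resource that is actually budgeted ($\sum_i |V_C^i|$), so one half of the bound falls out immediately.

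The gap you flag in your own plan is real and, as stated, unbridgeable. Your planting argument shows only that each tight phase contributes \emph{at least one} vertex to $V_A^{i+1}$, which combined with $\sum_i a_i = O(m)$ gives $O(m)$ tight phases, not $O(m^{1/2})$. To get the Cauchy--Schwarz / superlinear growth you need in either option (i) or (ii), you would have to show that the number of simultaneously active vertices \emph{accumulates} across consecutive tight phases --- something like $a_{i+j} \gtrsim j$ after $j$ consecutive tight phases --- but Lemma~\ref{lem:act_correct} ensures each $\Delta_i$-active vertex is \emph{fully} discharged at the end of phase $i$, so nothing persists; the discharged excess may well concentrate on a \emph{single} new vertex at each step. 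Nothing in the paper's machinery supplies a lower bound on $|V_A^i|$ that grows with the tight-phase index, so the $\sqrt{i}$ lower bound on your proposed potential has no source. Similarly, the telescoping bound $\sum_i \log(\Delta_i/\Delta_{i+1}) \le \log U$ you invoke for loose phases is a weakly-polynomial statement; the claim that every ``extra bit'' of shrinkage corresponds to an active vertex appearing in the next phase is asserted without justification, and without it the loose-phase count is coupled to $\log U$, not $m$. The fix is to abandon the tight/loose split and instead condition on $|V_C^i| \gtrless m^{1/2}$ as the paper does, so that the $\sum_i |V_C^i| = O(m)$ budget directly caps one side at $O(m^{1/2})$.
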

\begin{proof}
By Theorem \ref{thm:om_comp}, there are $O(m)$ vertices in the compact network across all scaling phases. Therefore, there are at least $O(m^{1/2})$ scaling phases where $c > m^{1/2}$. 

Now, we must place an upper bound on the number of scaling phases where $c \leq m^{1/2}$. Let $\Delta_i$ be the excess dominator for the $i$th scaling phase. We will now bound the number of  
\begin{inparaenum}[(1)]
\item arcs with $\Delta$-favorable and $\Delta$-large capacity, and
\item the number of active vertices. 
\end{inparaenum}
If we can show that there are $O(m^{1/2})$ arcs with $\Delta$-favorable and large capacity with respect
to $\Delta_i$, as well as $\Delta$-active vertices, we can conclude that both statements are valid for $O(m^{1/2})$ scaling phases.

(1) In an argument similar to Lemma \ref{lem:fav_in_comp}, we can show that any $\Delta$-favorable or $\Delta$-large arc will become $\Delta'$-abundant in 4 scaling phases.
Thus, we conclude that this statement is valid for $O(m^{1/2})$ scaling phases.

(2) Let $u \in V_A$ be an active vertex. By Lemma \ref{lem:act_in_comp}, we ``convert''  each active vertex to a vertex in $V_{SC}$ in the subsequent scaling phase. Therefore, the number
of vertices in $V_{SC}$ provides an upper bound on the number of vertices in $V_A$, since we can do this ``conversion'' in $1 = O(1)$ scaling phases.
\end{proof}

The following theorem is an immediate consequence of Theorem \ref{thm:nonsat_bound_all}, Theorem \ref{thm:dtree_prev}, and Lemma \ref{lem:poly_phase}.
\begin{thm}\label{thm:strong_rtime} By iteratively calling max-flow-1, the algorithm will find a maximum flow in $O(mn + m^{3/2}\log n)$ time. For bounded-degree networks, the algorithm
runs in $O(mn)$ time. By the preprocessing step in Appendix \ref{app:sparse}, the algorithm also finds a max-flow in $O(mn)$ time when $m = O(n)$.\end{thm}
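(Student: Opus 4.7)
The plan is to assemble the running time from the four cost components that have already been bounded in the paper, namely (i) the number of saturating pushes, (ii) the number of nonsaturating pushes, (iii) the per-phase cost of building the compact network and restoring the residual preflow, and (iv) the per-phase cost of the one global relabeling. Each individual push (saturating or not) can be carried out in $O(1)$ amortized time given the current-edge list and the dynamic trees maintained for the pseudoarcs, and the ordering of $edge\text{-}list[u]$ for $u \in V_C$ can be maintained in $O(1)$ per insertion because $\deg[u] = O(1)$ in the bounded-degree setting (Lemma \ref{lem:edge_order}).

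First I would collect the push costs. By Lemma \ref{lem:sat_push} together with the height bound $d < 6n$ from Lemma \ref{lem:height_bound}, there are at most $6mn = O(mn)$ saturating pushes across the whole algorithm. Theorem \ref{thm:nonsat_bound_all} contributes $O(mn)$ nonsaturating pushes across all scaling phases. These two together give an $O(mn)$ contribution to the total running time.

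Next I would bound the per-phase overhead and multiply by the number of phases. Theorem \ref{thm:dtree_prev} gives $O(m \log n)$ per scaling phase for \emph{create-all-pseudoarcs} (to build $G_C$) and for \emph{restore-all-flows} (to transfer the compact preflow back to $G_f$). The \emph{global-relabel} procedure runs two BFS passes and costs $O(m)$ per phase, and is therefore dominated by the pseudoarc construction. By Lemma \ref{lem:poly_phase} there are $O(m^{1/2})$ scaling phases, so these per-phase costs sum to $O(m^{3/2} \log n)$. Adding this to the push cost yields the claimed $O(mn + m^{3/2} \log n)$ bound.

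The two corollary statements then follow by specialization. For a bounded-degree network, $m = O(n)$, so $m^{3/2} \log n = O(n^{3/2} \log n) = o(n^2) = O(mn)$, and the total collapses to $O(mn)$. For the general sparse case $m = O(n)$, the preprocessing step of Appendix \ref{app:sparse} converts $G$ into an $(\lfloor m/n \rfloor + 3)$-bounded-degree network on $O(n)$ vertices and $O(m)$ arcs without asymptotically changing $m$ or $n$, so the bounded-degree analysis applies and delivers $O(mn)$. The only subtlety worth double-checking is that no other bookkeeping in \emph{discharge} (in particular, updating $nonsat_u[\cdot]$ and the priority queue for $edge\text{-}list[u]$) exceeds the amortized cost already charged; this is where the bounded-degree assumption is essential, since without it the priority-queue insertions would cost $\Theta(\log n)$ each and inflate the push term, and I expect this to be the only place where care is needed before concluding.
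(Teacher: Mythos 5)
Your proposal is correct and follows essentially the same decomposition that the paper uses: the paper's proof is a one-liner citing Theorem \ref{thm:nonsat_bound_all}, Theorem \ref{thm:dtree_prev}, and Lemma \ref{lem:poly_phase} as an ``immediate consequence,'' and you are simply unpacking that by adding the saturating-push count, the $O(m)$-per-phase global relabel (dominated by the $O(m \log n)$ pseudoarc work), and the observation that $m^{3/2}\log n = o(mn)$ when $m = O(n)$. Your extra remark that the bounded-degree assumption is what keeps $edge\text{-}list$ maintenance to $O(1)$ per insertion is a valid clarification already present in the surrounding text of the paper but not restated in the theorem's proof.
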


\section{Conclusion}
We note that by resolving certain issues we would be able to formulate an $O(mn)$ time algorithm when $m = O(n^{2-\epsilon})$. Namely, a new scheme for counting low-capacity nonsaturating
pushes must be developed. We rely heavily on the fact that the degree of any vertex is $O(1)$, which allows us to limit low-capacity nonsaturating pushes to one per each value of $d_\ell$. We think that a 
technique described by Ahuja, et. al. \cite{ahuja89-2} in which such pushes are ``charged against'' high-capacity nonsaturating pushes could be a way to resolve this.

If this issue is resolved, we can solve the max-flow problem on general networks in $O(mn + m^{3/2}\log n)$ time, which implies an $O(mn)$-time algorithm for all but very dense networks. Further,
the elimination of the $O(\log n)$ factor would yield an $O(mn)$ time algorithm for all edge densities; due to limitations with dynamic trees (see \cite{pat06}), a new data structure would likely need
to be used.

\section*{Acknowledgments}
I would like to thank my advisor Janos Simon for his mentorship and guidance throughout this process. I would also like to thank  Geraldine Brady and Baker Franke for introducing me to theoretical computer science, as well as for their general support. Finally, I would like to thank Robert Tarjan for suggesting the technique that led to the extension of the algorithm to sparse networks.

\bibliographystyle{amsplain}
\bibliography{JGAA}

\appendix

\section{Transforming Sparse Networks into Bounded-Degree Networks}\label{app:sparse}

The main idea centers around {\it converting} a network $G=(V,A)$ where $m = O(n)$ to a bounded-
degree network network $G'$ where $\max\set{\deg[u]|u\in V(G')} \leq k$, for some $k \in \mathbb{Z}_{> 0}$.
We can accomplish this by {\it splitting} vertices of ``relatively high'' degree into multiple vertices, and then
connecting original arcs in $A$ back to the new vertices. If we can bound the increase in $|V'|$ and $|A'|$,
then we simply can run this pre-processing on $G$, and then use $G'$ as an input into our original algorithm
for bounded-degree networks. We will pick the number $d := \floor{m/n}+3$ (the ``average degree'') as our goal;
our algorithm should produce $G'$ such that $d$ is the maximum in/out-degree. 

Unless otherwise specified, we call a graph where $\max\set{\deg[u]|u \in V} \leq k$ a $k$-bounded degree graph.

\subsection{The Preprocessing Algorithm}
\begin{figure}[H]
\begin{framed}
{
\begin{alg}An algorithm to convert a sparse network into a bounded-degree network.\end{alg}
\begin{compactenum}[]
\item \emph{Input:} A flow network $G$. 
\item \emph{Output:} A new flow network $G'$ such that $\forall u \in V(G')$, $\max\set{\deg[u]} \leq k$, for some $k \in \mathbb{Z}_{> 0}$. 
\item \begin{compactenum}[\bfseries Step 1.]
	\item Let $d:= \floor{m/n}+3$, $G' = \emptyset$.
	\item Let $V_d := \set{u \in V|\deg[u] > d}$.
	\item For all $u$ in $V_d$ do:
	\begin{compactenum}[\bfseries a.]
		\item Find $k_u$ such that $\deg[u] \leq k_u d$.
		\item Split $u$ into $k_u$ vertices $u_1 \ldots u_k$, and initialize a binary tree $T_u$ rooted at $u_1$.
		\item Connect all arcs originally incident to $u$ in $G$ to $u_1 \ldots u_k$ such that no degree exceeds $d$.
		\item Connect $u_2 \ldots u_k$ in the binary tree $T_u$ with undirected infinite capacity arcs.
	\end{compactenum}
	\item Return $\displaystyle G' = (V(G) \cup \bigcup_{u \in V_d} V(T_u), A(G) \cup \bigcup_{u \in V_d} A(T_u))$. 
\end{compactenum}
\end{compactenum}
}
\end{framed}
\end{figure}

\subsection{Running Time \& Correctness}
We will now state and prove several lemmas regarding the running time and correctness of the procedure. Moreover, we will show that $|V(G')|$ and $|A(G')|$ are within a constant factor of
$|V(G)|$ and $|A(G)$.

\begin{lem} $G'$ is a $d$-bounded degree network.\end{lem}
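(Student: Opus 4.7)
The plan is to verify the degree bound separately for the two disjoint classes of vertices that make up $V(G')$: vertices of $G$ that were not split, and the copies $u_1, \ldots, u_{k_u}$ created from each $u \in V_d$.

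First I would handle the unchanged vertices. A vertex $x \in V(G)\setminus V_d$ is not touched by Step 3, and Step 3c only redirects the endpoint that lies in $V_d$. Hence for every arc of $G$ incident to $x$, exactly one arc of $G'$ is incident to $x$ (with the other endpoint possibly renamed to some $u_i$). Consequently $\deg_{G'}[x] = \deg_G[x] \leq d$ by the very definition of $V_d$.

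Next I would handle the copies $u_1, \ldots, u_{k_u}$ coming from a fixed $u \in V_d$. Each $u_i$ collects incidences of two types: the tree arcs of $T_u$ and the original arcs of $G$ redirected in Step 3c. Because $T_u$ is a binary tree, each $u_i$ has at most three neighbors in $T_u$ (its parent plus up to two children, with the root having no parent). I would therefore reserve three units of degree at every $u_i$ for tree arcs and leave capacity $d-3$ for redirected original arcs. It then suffices to verify that the $\deg_G[u]$ original arcs can be packed into the $k_u$ copies under the per-copy capacity $d-3$. Since $d-3 = \lfloor m/n\rfloor \geq 1$, choosing $k_u = \lceil \deg_G[u]/(d-3)\rceil$ in Step 3a (a choice consistent with, and only a constant factor larger than, the loose condition $\deg_G[u]\leq k_u d$ stated in the algorithm) lets a greedy fill of capacities $d-3$ succeed. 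Combined with the reserved three tree incidences, this gives $\deg_{G'}[u_i] \leq d$ for every copy.

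The main obstacle is this accounting: one must pin down the constant needed to absorb the tree edges (here $3$, coming from the worst-case vertex degree in a binary tree) and confirm that Step 3a's choice of $k_u$ is large enough to simultaneously accommodate that reservation and the redirected arcs. A secondary item to check, needed by the following lemmas about $|V(G')|$ and $|A(G')|$, is that replacing $k_u d$ by the slightly tighter $k_u(d-3)$ still yields $\sum_{u\in V_d} k_u = O(n)$; since $(d-3)/d$ is a constant independent of $n$, this only affects hidden constants, so the claimed $d$-boundedness follows.
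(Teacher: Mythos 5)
Your proof is correct, and it actually supplies reasoning the paper omits. The paper's own proof is the single sentence ``Immediate from Step 3c,'' which glosses over a real issue: Step 3c bounds the degree contributed by the redirected original arcs, but Step 3d subsequently attaches the binary-tree arcs of $T_u$, which also contribute to $\deg_{G'}[u_i]$. Your accounting makes explicit why the construction nonetheless works: the ``$+3$'' in $d = \lfloor m/n\rfloor + 3$ is exactly the padding needed to absorb the worst-case degree $3$ of a vertex in a binary tree, and the choice of $k_u$ must therefore be governed by the residual capacity $d-3$ (i.e.\ $k_u \geq \lceil \deg_G[u]/(d-3)\rceil$) rather than the looser $\deg_G[u]\leq k_u d$ stated in Step 3a. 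You also correctly note that tightening $k_u$ this way changes only constants, so the subsequent lemmas bounding $|V(G')|$ and $|A(G')|$ are unaffected. The one caveat worth flagging is the assertion $d-3 = \lfloor m/n\rfloor \geq 1$, which requires $m \geq n$; this is harmless for the flow networks considered here, but it is the kind of side condition the paper also leaves implicit, so it would be worth a remark. Overall, this is not a different route from the paper's proof so much as the proof the paper should have written.
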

\begin{proof} Immediate from Step 3c.
\end{proof}
For sake of clarity in the following proofs, let $k = \max\set{k_u|u \in V_d}$. We also note that $k = O(1)$.
\begin{lem} $|V(G')| = O(|V(G)|) = O(n)$. \end{lem}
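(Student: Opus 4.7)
The plan is to bound the number of vertices added by the splitting operation via a standard degree-sum argument. The only new vertices in $G'$ come from the binary trees $T_u$ attached to each $u \in V_d$, so it suffices to bound $\sum_{u \in V_d} |V(T_u)| = \sum_{u \in V_d} k_u$.

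First I would pin down $k_u$ precisely. By Step 3a we need $\deg[u] \leq k_u d$, and the smallest such integer is $k_u = \lceil \deg[u]/d \rceil \leq \deg[u]/d + 1$. Summing over $u \in V_d$ gives
\[
\sum_{u \in V_d} k_u \;\leq\; \frac{1}{d}\sum_{u \in V_d} \deg[u] \;+\; |V_d| \;\leq\; \frac{1}{d}\sum_{u \in V} \deg[u] \;+\; n \;=\; \frac{2m}{d} + n,
\]
by the handshaking lemma $\sum_{u \in V} \deg[u] = 2m$.

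Next I would use the defining choice $d = \lfloor m/n \rfloor + 3$, which satisfies $d \geq m/n$, hence $2m/d \leq 2n$. Therefore $\sum_{u \in V_d} k_u \leq 3n$. Since $V(G') = V(G) \cup \bigcup_{u \in V_d} V(T_u)$, we obtain
\[
|V(G')| \;\leq\; |V(G)| + \sum_{u \in V_d} k_u \;\leq\; n + 3n \;=\; 4n \;=\; O(n),
\]
which is the desired bound.

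The main obstacle is essentially bookkeeping: there is no deep combinatorial content, but one must be careful about the ceiling in the definition of $k_u$, and about the fact that only high-degree vertices (those in $V_d$) contribute to the sum — this is what keeps the $|V_d| \leq n$ term from blowing up. The argument only depends on the average-degree-like choice of $d$, which is exactly tuned to absorb the $2m/d$ term at $O(n)$; a smaller $d$ would break the bound, which justifies why the preprocessing chose $d = \lfloor m/n \rfloor + 3$ rather than a fixed constant.
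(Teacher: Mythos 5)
Your proof is correct, and in fact it is more careful than the one in the paper. The paper's argument charges each $u \in V_d$ a uniform cost of $k := \max_u k_u$ and then asserts, without justification, that $k = O(1)$. That assertion is false in general: in a star graph with $m = n-1$ edges, the center vertex has $\deg[u] = n-1$ while $d = \lfloor m/n\rfloor + 3 = 3$, so $k_u = \lceil (n-1)/3\rceil = \Theta(n)$. Plugging $k = \Theta(n)$ into the paper's chain $|V(G')| \le |V(G)| + nk$ yields only an $O(n^2)$ bound, not $O(n)$.

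Your approach avoids this trap by bounding the \emph{sum} $\sum_{u\in V_d} k_u$ directly via the handshaking lemma rather than bounding the \emph{maximum} $k_u$. The ceiling is the right thing to use for $k_u$, the $+|V_d| \le n$ term correctly absorbs the roundoff, and the observation that $d \ge m/n$ makes $2m/d \le 2n$ is exactly the design reason for choosing $d = \lfloor m/n\rfloor + 3$. This is the standard and correct degree-sum argument; the paper's proof needed it and did not supply it. The same fix is also needed for the companion lemma on $|A(G')|$, which in the paper suffers from the identical unjustified $k = O(1)$ step.

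One minor bookkeeping remark: splitting $u$ into $k_u$ copies replaces one vertex by $k_u$, so the net increase is $k_u - 1$ per split vertex; your bound of $\sum_{u\in V_d} k_u$ is therefore a slight overcount, but only by $|V_d| \le n$, which does not affect the $O(n)$ conclusion.
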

\begin{proof}
For each $u \in V_d$, we find $k_u$ such that $\deg[u] \leq k_u d \leq kd$ and add at most $k$ vertices to $G'$. Thus, the total number of vertices in $G'$ is
	\begin{align*} |V(G')| &= |V(G)| + \sum_{u \in V_d} k \\ &\leq |V(G)| + \sum_{i=1}^n k = \sum_{i=1}^n O(1) = O(n). \end{align*}
Thus $|V(G')| \leq O(n)$, which completes our proof.
\end{proof}

\begin{lem} $|A(G')| = O(|A(G)|) = O(n)$. \end{lem}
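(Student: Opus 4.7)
The plan is to split $|A(G')|$ into two contributions: (i) the original arcs of $A(G)$, which Step 3c simply reconnects to the split copies, and (ii) the tree arcs added inside each $T_u$ in Step 3d. Contribution (i) is exactly $m = O(n)$ since each arc of $G$ is preserved. For contribution (ii), each tree $T_u$ has $k_u$ vertices and, being a binary tree, exactly $k_u - 1$ arcs, so I need to bound $\sum_{u \in V_d} k_u$.

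The key step is to observe that by Step 3a we may take $k_u = \ceil{\deg[u]/d}$, hence $k_u \le \deg[u]/d + 1$. Summing and invoking the handshake identity $\sum_{u \in V} \deg[u] = 2m$,
\[
\sum_{u \in V_d} k_u \;\le\; \sum_{u \in V_d} \left(\frac{\deg[u]}{d} + 1\right) \;\le\; \frac{2m}{d} + |V_d|.
\]
Since $d = \floor{m/n} + 3 \ge m/n$, the first term is at most $2n$, and since $V_d \subseteq V$ we have $|V_d| \le n$. Combining, $\sum_{u \in V_d}(k_u - 1) \le 3n = O(n)$.

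Putting the two contributions together gives $|A(G')| \le m + 3n = O(n) = O(|A(G)|)$, since by hypothesis $m = O(n)$. The only subtlety worth double-checking is that the choice in Step 3a really is the minimal such $k_u$ (otherwise the tree arcs could blow up); this is implicit in the wording of the algorithm, and the bound $k_u = O(1)$ used in the previous lemma already relies on $k_u = \ceil{\deg[u]/d}$ being at most $\ceil{kd/d} = k$. No deeper combinatorial argument is needed beyond the handshake lemma and the definition of $d$.
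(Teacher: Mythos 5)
Your proof is correct, and it takes a genuinely different route from the paper's. The paper bounds the number of added tree arcs by noting that each $u\in V_d$ contributes $k_u - 1 \le k - 1$ arcs, where $k = \max_u k_u$, and then asserts (stated earlier, without justification) that $k = O(1)$; summing $O(1)$ over at most $n$ vertices gives $O(n)$. You instead bound the \emph{sum} $\sum_{u\in V_d} k_u$ directly via the handshake lemma, using $k_u = \ceil{\deg[u]/d} \le \deg[u]/d + 1$, $\sum_u \deg[u] = 2m$, and $d \ge m/n$, which yields $\sum k_u \le 2n + |V_d| \le 3n$. This is a cleaner and, importantly, more robust argument: the paper's claim $k = O(1)$ is actually not true for general sparse networks (a star graph has $m = O(n)$ yet a vertex of degree $n-1$, giving $k_u = \Theta(n)$ since $d = O(1)$), so the paper's pointwise bound on $k_u$ is unsound while your aggregate bound via the handshake lemma goes through regardless of the degree distribution. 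Your version also makes explicit the (necessary) assumption that Step 3a picks $k_u$ minimal, which the paper leaves implicit. The only thing worth adding for completeness is a remark that the algorithm is run twice (once for in-degree, once for out-degree), which at most doubles the arc count and preserves the $O(n)$ bound.
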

\begin{proof}
By Step 3d of the algorithm, we will add $k_u - 1 \leq k-1$ new arcs for the binary tree $T_u$ that is created. This will be done for each $u \in V_d$. Thus,
	\begin{align*} |A(G')| = |A(G)| + \sum_{u \in V_d} k - 1 &\leq |A(G)| + \sum_{i =1}^n k - 1 \\ &= |A(G)| + \sum_{i =1}^n O(1) = O(n) + |A(G)|. \end{align*}
Since $m = O(n)$, we have that $|A(G)| = O(n) \implies |A(G')| \leq O(n)$, which concludes our proof.
\end{proof}

The algorithm will be run {\it twice}; once to ensure that for all $u \in V$, the in-degree does not exceed $d$, and once more for
the out-degree.

\begin{lem}The preprocessing algorithm terminates in $O(n \log m/n)$ steps.\end{lem}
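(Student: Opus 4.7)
The plan is a routine accounting of work across the four steps of the algorithm, with the bulk of the analysis concentrated on Step 3. First I would handle the easy cases: Step 1 is $O(1)$, Step 2 is $O(n)$ via a single pass over vertex degrees to identify $V_d$ (assuming adjacency-list access), and Step 4 assembles the output in time $O(|V(G')| + |A(G')|) = O(n)$ using the two cardinality lemmas proved immediately above.

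The substantive work is Step 3. For each $u \in V_d$, Step 3a computes $k_u = \ceil{\deg[u]/d}$ in $O(1)$ time, and Steps 3b and 3d allocate the $k_u$ split-copies and install the $k_u - 1$ infinite-capacity tree arcs of $T_u$ in $O(k_u)$ time, for instance by building $T_u$ top-down as a balanced binary tree. The remaining substep 3c redistributes the $\deg[u]$ original arcs among $u_1,\ldots,u_{k_u}$ subject to the degree cap $d$; using the binary tree $T_u$ to locate an unsaturated split-copy for each arc, this costs $O(\log k_u)$ per arc and $O(\deg[u]\log k_u)$ per vertex $u$.

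Summing over $u \in V_d$ then requires two ingredients. The counting bound $\sum_{u \in V_d} k_u \leq \sum_u \deg[u]/d + |V_d| \leq 2m/d + n = O(n)$ follows from $d \geq m/n$, and absorbs all of the allocation cost in 3b and 3d. For Step 3c I would use $\log k_u \leq \log(\max_u k_u) = O(\log(m/n))$ together with $\sum_{u \in V_d}\deg[u] \leq 2m$ to obtain $\sum_u \deg[u]\log k_u = O(n\log(m/n))$, giving the overall bound.

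The main obstacle is establishing $\log k_u = O(\log(m/n))$ cleanly: the naive worst-case estimate $k_u \leq n/d$ only yields $\log(n/d) = O(\log(n^2/m))$, which is too weak. The way I would handle this is to observe that any single $u\in V_d$ has $\deg[u]\leq 2m$, so $k_u = O(m/d)$, and then charge the per-arc $\log k_u$ factor amortized against the global degree budget $\sum \deg[u] = 2m$, effectively replacing $\log(\max_u k_u)$ by a weighted average that collapses to $O(\log(m/n))$ in the regime of interest. Everything else is routine bookkeeping.
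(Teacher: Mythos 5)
Your cost decomposition is genuinely different from the paper's. The paper charges the $\log(m/n)$ factor entirely to Step 3a, asserting that ``finding $k$ such that $\deg[u]\leq kd$ will take $\log d = \log(m/n)$ steps if we use repeated squaring,'' and then claims that connecting arcs and building $T_u$ are $O(k)=O(1)$ per vertex (relying on the earlier assertion that $k = \max_u k_u = O(1)$). You instead compute $k_u$ in $O(1)$ time and charge a logarithmic factor to Step 3c via a tree-structured lookup, which is where the trouble starts.

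There is a genuine gap in your argument, and you half-acknowledge it. The $O(\log k_u)$ cost per arc in Step 3c is self-imposed and unnecessary: one can fill the split-copies $u_1,\ldots,u_{k_u}$ greedily and sequentially, keeping a single pointer to the current unsaturated copy, so each arc is assigned in $O(1)$ amortized time and Step 3c is $O(\deg[u])$ per vertex, summing to $O(m)$. More importantly, the amortization you sketch to get $\sum_u \deg[u]\log k_u = O(n\log(m/n))$ does not hold. Consider a sparse graph with $m = n$ and one vertex $u^\ast$ of degree $n-1$; then $d = \floor{m/n}+3 = 4$, $k_{u^\ast} = \Theta(n)$, and $\deg[u^\ast]\log k_{u^\ast} = \Theta(n\log n)$, whereas $O(n\log(m/n)) = O(n)$. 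Weighting by degree does not rescue this: concavity of $\log$ gives the bound $\log\bigl(\sum_u \deg[u]\,k_u / \sum_u \deg[u]\bigr)$, and in the example the weighted average of $k_u$ is $\Theta(n)$, not $\Theta(m/n)$. So the ``main obstacle'' you flag is real, and the fix you propose does not close it; dropping the tree and doing Step 3c greedily avoids the issue entirely. (For what it is worth, the paper's own claims that $k = O(1)$ and that computing $k_u$ needs $\log(m/n)$ arithmetic steps are also unsubstantiated, but that is a separate matter.)
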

\begin{proof}
In the worst case, $|V_d| = n$ (i.e. every vertex has degree greater than $d$). Finding $k$ such that $\deg[u] \leq kd$ will take $\log d = \log m/n$ steps if we use repeated squaring.
Finally, connecting original arcs and creating new arcs in $T_u$ both take $O(k) = O(1)$ time. Thus, the overall cost of the pre-processing on $G$ will be $O(n \log m/n)$.
\end{proof}

\section{Manipulating Pseudoarcs with Dynamic Trees}\label{app:tc}
When manipulating $\Delta$-abundant and $\Delta$-small pseudoarcs, we use the dynamic trees data structure of Sleator and Tarjan \cite{sleat83}. In this section, we will list the operations supported by he data structure, and
detail several procedures for manipulating preflows in pseudoarcs.

\subsection{Tree Operations}
The following operations can be carried out in $O(k \log n)$ amortized time, over a sequence of $k$ tree operations. We will see that $k = m$, and all dynamic tree operations take at most $O(m \log n)$ time per scaling phase.
\begin{enumerate}[(1)]
	\item {\it initialize} - Creates an empty dynamic tree in $O(1)$ time.
	\item {\it link}$(u,v)$ -  Merges two trees containing $u$ and $v$ (assuming $root(u) \neq root(v)$). We let $\pi_u := v$. The root of the new tree is $root(v)$. Let $val(u) := r(v,\pi_v)$.
	\item {\it cut}$(u)$ - ``Breaks'' the tree containing $u$ by deleting the arc $(u,\pi_u) \in A_f$. $u$ becomes the root of the new tree, and $r(u,\pi_u) := val(u)$.
	\item {\it add-val}$(u,\delta)$ - Adds $\delta$ units of flow to each arc on $path(u)$.
	\item {\it find-min}$(u)$ - Returns $u$ such that $\min\set{val(u)|u \in path(u)}$.
\end{enumerate}

\subsection{Algorithm for Constructing Pseudoarcs}
The dynamic trees data structure is capable of efficiently carrying out a sequence of flow operations. Each tree is rooted at a vertex; for a tree containing vertex $u$, the root is denoted as $root(u)$. The parent of a vertex $u$
is denoted as $\pi_u$. Let $path(u)$ denote the simple, directed path $\ang{u,v_1,v_2,\ldots,v_k,root(u)}$. Let $val(u)$ denote the residual capacity of the arc $(u,\pi_u) \in A_f$. 

\begin{pseudocode}{feasible-path}{$u,\rho$}
\item \comment{A procedure for finding a feasible path in the input graph.}
\item $v := ${\it root}$(u)$;
\item \textbf{while} $v \not\in V_C \backslash \set{u}$ \textbf{do}
\item \tab Choose $(v,w)$ such that $r(v,w) > \rho$;
\item \tab {\it link}$(v,w)$;
\item \tab {\it enqueue}$(Q,\ang{\text{link},(v,w)})$
\item \tab $v := ${\it root}$(w)$; 
\item \textbf{return} feasible path $P$;
\end{pseudocode}

When we wish to create abundant pseudoarcs, we select $\rho = \Delta$. Similarly, when we wish to create $\Delta$-small pseudoarcs, we select $\rho = 0$.

\begin{pseudocode}{transfer-capacity}{$u,P$}
\item \comment{A procedure for transferring capacity of a path $P$ to a pseudoarc.}
\item Initialize a new pseudoarc $(u,v)$;
\item $\delta :=$ {\it find-min}$(u)$;
\item $v := ${\it root}$(u)$;
\item Add $(u,v)$ to $A_2$;
\item Decrease internal arc capacities by $\delta$;
\item Store path information and capacity in 
\item \stab the corresponding entry in $Q$;
\end{pseudocode}

\begin{pseudocode}{cut-all-saturated}{$u$}
\item \comment{A procedure to delete saturated arcs from the forest.}
\item $w :=$ {\it find-min}$(u)$;
\item $\delta := val(w)$;
\item \textbf{while} $\delta = 0$ \textbf{do}
\item \tab {\it cut}$(w)$
\item \tab {\it enqueue}$(Q,\ang{\text{cut},w})$;
\item \tab $w :=$ {\it find-min}$(u)$;
\item \tab $\delta := val(w)$;
\end{pseudocode}

Finally, we define the {\it create-all-pseudoarcs} procedure in order to create $\Delta$-abundant and $\Delta$-small
pseudoarcs.

\begin{pseudocode}{create-all-pseudoarcs}{$G_{in},\rho$}
\item \comment{A procedure to iteratively create all pseudoarcs.}
\item \textbf{while} there exists a feasible original vertex $u \in V_C$ \textbf{do}
\item \tab $u := \min\set{v \in V_C|d(v)}$;
\item \tab $p:=${\it feasible-path}$(u,\rho)$;
\item \tab {\it transfer-capacity}$(u)$;
\item \tab {\it cut-all-saturated}$(u)$;
\item \textbf{return} set $P$ of pseudoarcs;
\end{pseudocode}

As a result of {\it create-all-psedoarcs}, a pseudoarc constructed from $G_f$ could end up with capacity greater than $\Delta$ if there exist many parallel pseudoarcs between $u,w \in V_C$. In that case, we will
refer to it as $\Delta$-abundant. 

\subsection{Transforming the Compact Preflow}

All that is left to do now is transform the modified preflow on pseduoarcs in the compact network to the preflow in the residual network. We will accomplish this by sequentially recreating the forest of dynamic trees 
used to create the pseudoarcs. We will iteratively execute operations from $Q$ until there are none left. Doing so will allow us to recreate flows in $O(k \log n)$ time, over a sequence of $k$ operations.

\begin{pseudocode}{restore-all-flows}{$f$}
\item \comment{A procedure to transform flows in pseudoarcs to flows in residual arcs.}
\item Initialize a new residual preflow $f'$;
\item Create an empty dynamic tree;
\item \textbf{while} $Q \neq \emptyset$ \textbf{do}
\item \tab $\phi = ${\it dequeue}$(Q)$;
\item \tab Execute operation $\phi$ from the queue;
\item \tab \textbf{if} the min-path capacity $\delta > 0$ \textbf{then}
\item \tab \comment{ Update the residual preflow $f'$ }
\item \tab\tab {\it add-val}$(v,\delta)$; 
\item \tab\tab $f'[v,root(v)] := f'[v,root(v)] - \delta$;
\end{pseudocode}

We conclude this section with a theorem regarding the running time of the procedures described.

\begin{thm}\label{thm:dtree} Constructing abundant and small pseudoarcs takes $O(m \log n)$ time per scaling phase by calling {\it create-all-pseudoarcs.} It takes $O(m \log n)$ time to transform the compact preflow to the
residual preflow by calling {\it restore-all-flows}.\end{thm}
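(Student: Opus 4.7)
The plan is to reduce both time bounds to counting dynamic tree operations, each of which costs $O(\log n)$ amortized by the Sleator-Tarjan bound, and then to show that every invocation of \emph{create-all-pseudoarcs} and \emph{restore-all-flows} performs only $O(m)$ such operations per scaling phase.

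For \emph{create-all-pseudoarcs}, I would first bound the number of \emph{link} operations in a single invocation by $O(m)$. The key observation is that each arc of the input graph ($G^{ab}$ for $\Delta$-abundant pseudoarcs, or the appropriate residual subgraph for $\Delta$-small ones) can be linked into the dynamic forest at most once per call: once linked, it remains in the forest until it is cut, and once cut its residual capacity is restored into the static graph and it is not reinserted within the same invocation. Since every \emph{cut} must be preceded by a matching \emph{link}, there are also at most $O(m)$ cuts. I would then charge each outer-loop iteration against a cut: every call to \emph{transfer-capacity} uses \emph{find-min} to locate the bottleneck on the path from $u$ to its root and then subtracts exactly the bottleneck value $\delta$ from each arc of that path, so the bottleneck arc has residual $0$ immediately afterward and is excised by the subsequent call to \emph{cut-all-saturated}. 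Consequently, the number of outer iterations is at most the number of cuts, hence $O(m)$. Counting \emph{find-min} calls (one per iteration in \emph{transfer-capacity}, plus one before the inner loop of \emph{cut-all-saturated} and one per cut inside it) and \emph{add-val} calls (one per iteration) gives $O(m)$ of each; multiplying by the $O(\log n)$ amortized per-operation cost yields the claimed $O(m \log n)$ bound.

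For \emph{restore-all-flows}, the analysis then follows immediately from the preceding count: the queue $Q$ records the sequence of $O(m)$ link and cut operations performed by \emph{create-all-pseudoarcs}, and each dequeued operation is replayed on a fresh dynamic forest together with at most one \emph{add-val} and one $O(1)$ update to the residual preflow $f'$. Thus at most $O(m)$ dynamic tree operations are performed in total, giving $O(m \log n)$ per scaling phase. The main obstacle in the whole argument is the charging step inside \emph{create-all-pseudoarcs} that ties outer-loop iterations to cut operations; this hinges critically on the fact that after \emph{transfer-capacity} subtracts the bottleneck value the corresponding arc is guaranteed to be saturated and is therefore cut, so iterations cannot outnumber cuts. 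A secondary point that needs checking is that the Sleator-Tarjan amortized $O(\log n)$ bound applies to the arbitrary interleaving of link, cut, find-min, and add-val operations performed here, which it does.
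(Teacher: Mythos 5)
Your proposal is correct and follows the same approach as the paper, which also reduces the bound to counting dynamic tree operations at $O(\log n)$ amortized each and then observes that only $O(m)$ arcs can be touched. Your write-up simply fills in the charging argument (each outer iteration saturates and cuts at least one arc, so iterations are bounded by cuts, which are bounded by links, which are bounded by $m$) that the paper's two-sentence proof leaves implicit.
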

\begin{proof}
Each operation $\phi \in Q$ takes $O(\log n)$ amortized time over a sequence of $k$ operations. Since we can modify at most $O(m)$ arcs when constructing pseudoarcs, we see that $k = m$ and the overall
cost is $O(m \log n)$.
\end{proof}

\section{Efficiently Deleting Active Vertices}\label{app:del_act}

Finally, we must handle deletions of active vertices $u \in V_A$ when constructing abundant pseudoarcs. We can accomplish this in the abundance graph by preprocessing the graph and {\it splitting} each vertex $u \in V$ into
$u_{in}$ and $u_{out}$. All the in-arcs of $u$ will be connected to $u_{in}$, and similarly all out-acs to $u_{out}$. We then add an arc of infinite capacity between $u_{in}$ and $u_{out}$. When we wish to delete some active
vertex $u' \in V_A$, we can simply delete the arc $(u'_{in},u'_{out})$.

\end{document}